\newcounter{mynotes}
\def\B{\hat{B}}
\def\X{\hat{X}}
\def\T{\hat{T}}
\def\S{\hat{S}}
\def\C{\hat{C}}
\def\D{\hat{D}}
\def\A{\hat{A}}
\newtheorem{fact}{Fact}
\newtheorem{Claim}{Claim}
\newcommand{\remove}[1]{}
\DeclareMathOperator*{\argmin}{arg\!\min}
\newcommand{\problemtitle}[1]{\gdef\@problemtitle{#1}}
\newcommand{\probleminput}[1]{\gdef\@probleminput{#1}}
\newcommand{\problemquestion}[1]{\gdef\@problemquestion{#1}}
  \par\addvspace{.5\baselineskip}
  \par\addvspace{.5\baselineskip}
\begin{document}

\title{The Tandem Duplication Distance Problem is hard over bounded alphabets} 

\titlerunning{Tandem Duplication Distance} 

\author{Ferdinando Cicalese\inst{1}
 \and
Nicol\`o Pilati\inst{1}}

\authorrunning{Cicalese, Pilati}

\institute{Department of Computer Science, University of Verona, Italy
\email{\{ferdinando.cicalese,nicolo.pilati\}@univr.it}}
\maketitle              

\begin{abstract}
A tandem duplication is an operation that converts a string $S = AXB$ into a string $T = AXXB.$
As they appear to be involved in genetic disorders, 
tandem duplications are widely studied in computational biology. Also, 
tandem duplication mechanisms have been recently studied in different contexts, from formal languages, 
to information theory, to error-correcting codes for DNA storage systems. 
The question of determining the  complexity of computing the tandem duplication distance between 
two given strings was  posed by  [Leupold {\em et al.},  2004] and, very recently, the problem was  shown 
to be NP-hard  for the case of unbounded alphabets [Lafond et al., STACS 2020]. 

In this paper, we significantly improve this result and show that the tandem duplication distance problem
 is NP-hard  already for the case of strings over an alphabet of size $\leq 5.$ 

For a restricted class of strings, we establish the tractability of the {\em existence problem}: given strings $S$ and $T$ over the same alphabet, 
decide whether there exists a sequence of 
duplications converting $S$ into $T$. A polynomial time algorithm that solves this existence
problem was only known for the case of the binary alphabet. 
\end{abstract}
\remove{
\begin{abstract}
A tandem duplication denotes the process of inserting a copy of a segment of DNA adjacent to its original position. 
More formally, a tandem duplication can be thought of as an operation that converts a string $S = AXB$ into a string $T = AXXB,$ and is denoted by $S \Rightarrow T.$ As they appear to be involved in genetic disorders, 
tandem duplications are widely studied in computational biology. Also, 
tandem duplication mechanisms have been recently studied in different contexts, from formal languages, 
to information theory, to error-correcting codes for DNA storage systems. 

The problem of determining the complexity of computing the tandem duplication distance between two given strings was  
proposed by  [Leupold {\em et al.},  2004] and, very recently, it was  shown 
to be NP-hard  for the case of unbounded alphabets [Lafond et al., 2019]. 

In this paper, we significantly improve this result and show that the tandem duplication {\em distance problem}
 is NP-hard  already for the case of strings over an alphabet of size $\leq 5.$ 
We also consider the {\em existence problem}: given strings $S$ and $T$ over the same alphabet, 
decide whether there exists a sequence of 
duplications converting $S$ into $T$. A polynomial time algorithm that solves this (existence) 
problem was only known for the case of the binary alphabet. 
We focus on a special class of strings---here referred to as {\em purely alternating}---that generalize the special structure of binary strings to larger alphabets. We show that for the case of purely alternating strings from an alphabet of size $\leq 5$, the existence problem can be solved in linear time.
\end{abstract}
}

\section{Introduction}
Since the draft sequence of the human genome was published, it has been known that a very large part 
of it consists of repeated substrings \cite{Lander2001}. 
One talks about a tandem repeat 
when a pattern of one or more nucleotides occurs twice and the two occurrences are adjacent. 
For instance, in  the word $CTACTAGTCA,$ 
the substring $CTACTA$ is a tandem repeat. 
In this case, we say that $CTACTAGTCA$ is generated from $CTAGTCA$ by a tandem duplication of length three. 
As tandem repeats appear to be correlated to several genetic disorders \cite{Sutherland3636,Usdin2008},
the study of tandem duplication mechanisms has attracted the interest of different communities also outside the 
specific area of computational biology \cite{Leupold2004,AlonBFJ16,Farnoud2016,Jain2017Codes,Chee2019}. 

\medskip

\noindent
{\bf Problem definition.}
Formally, a \textit{tandem duplication} (TD)---later simply referred to as a duplication---is an operation 
on a string $S$ that replaces a substring $X$  with its {\em square} $XX.$ 
Given two strings $S$ and $T,$ we write $S \Rightarrow T$ if there exist strings $A$, $B$, $X$ such that $S=AXB$ and $T=AXXB$.

The \textit{tandem duplication distance from $S$ to $T$}, 
denoted by $dist_{TD}(S,T)$, 
is the minimum value of $k$ such that $S$ can be turned into $T$ by a sequence of $k$ duplications. If no such $k$ exists, 
then $dist(S, T) = \infty$.

For example, $dist_{TD}(0121, 0101211)=2$, since: (i) we have the two duplications   
$0121 \Rightarrow 010121 \Rightarrow 0101211$; (ii) and it is easy to verify that no single duplication can 
turn $0121$ into $0101211.$ 

We consider the following two  problems 
about the possibility of converting a string $S$ into a string $T$ by  using 
tandem duplications:

\begin{myproblem}
\problemtitle{Tandem Duplication Existence  (TD-Exist)}
 \probleminput{Strings $S$ and $T$ over the same alphabet $\Sigma$.}
 \problemquestion{Is $dist_{TD}(S, T) < \infty$?}
\end{myproblem}

\begin{myproblem}
\problemtitle{Tandem Duplication Distance (TD-Dist)}
\probleminput{Strings $S$ and $T$ over the same alphabet $\Sigma$ and an integer $k$.}
\problemquestion{Is $dist_{TD}(S, T) \leq k$?}
\end{myproblem}

Determining the complexity of computing the tandem duplication distance between two given strings (the {\sc TD-Dist} problem) was  
posed in \cite{Leupold2004}. Only very recently, the problem was shown 
to be NP-hard in the case of unbounded alphabets \cite{Lafond}. 
Here, we significantly improve this result by showing that {\sc TD-Dist} is NP-hard  for the case of bounded alphabets of size $\leq 5.$  


%

For both, the result of \cite{Lafond} and ours, it is assumed that the strings  $S$ and $T$ satisfy $dist_{TD}(S, T) < \infty.$ 
In general, the complexity of deciding if a string $S$ can be turned into a string $T$ by a sequence of tandem duplications
(the {\sc TD-Exist} problem above) is still an open problem for alphabets of size $> 2.$
In the second part of the paper we also consider this {\em existence problem} ({\sc TD-Exist}), 
focussing  on 
a special class of strings---which we call {\em purely alternating} (see Section \ref{sec:preliminary} for the definition)---that generalize the special structure of binary strings to larger alphabets. We show that a linear time algorithm for the {\sc TD-Exist} problem 
exists for every alphabet of size $\leq 5$ if the strings are {\em purely alternating}.
In a final section we also discuss the limit of the approach used here for larger alphabets $|\Sigma| > 5$.

\medskip

\noindent
{\bf Related Work.}
To the best of our knowledge, the first papers explicitly dealing with tandem duplication mechanisms are  in the area of formal languages 
\cite{Ehrenfeucht1984,Bovet1992,Leupold2004,Leupold2005301,ItoLS-DLT06,LeupoldDLT-07,LeupoldTCS07}. 
In  \cite{Leupold2005301,Leupold2004}, the authors study decidability and hierarchy issues of 
a duplication language, defined  as the set of words generated 
via tandem duplications. In the same line of research, 
Jain et al. \cite{Jain2017}  proved that $k$-bounded duplication languages are regular for $k \leq 3.$
More recently, the authors of  \cite{AlonBFJ16} investigated  
extremal and information theoretic questions regarding  the number of tandem duplications required to 
generate a binary word starting from its unique root (the square free word from which it can be generated via duplications). 
In the same paper, the authors also considered approximate duplication operations. 
In \cite{Farnoud2016} Farnoud et al. began the study of  the average information content of a $k$-bounded duplication language, 
referred to as the capacity of a duplication system.   
Motivated by problems arising from DNA storage applications, Jain et al. \cite{Jain2017Codes} proposed the study of codes that correct tandem 
duplications to improve the reliability of data storage, and gave optimal constructions for the case where tandem duplication length is at most two. In \cite{Chee2019}, Chee et al.  investigated the question of 
confusability under duplications, i.e., whether, given words $x$ and $y,$ there exists a word $z$ such that $x$ and $y$ can be transformed into
$z$ via duplications. 
They show that even for small duplication lengths, the solutions to this question are nontrivial, and exact solutions are provided for 
the case of tandem duplications of size at most three.

\section{Notation and basic properties} \label{sec:preliminary}
We  follow the terminology from \cite{AlonBFJ16,Lafond}. 
 For any positive integer $n$ we denote by $[n]$ the set of the first $n$ positive integers $\{1, 2, \dots, n\}.$ We also use
$[n]_0$ to denote $[n] \cup \{0\}.$ 
Given a  string $S$ we denote by $\Sigma(S)$ the alphabet of the string $S$, i.e.,  
the set of characters occurring in $S$. If $|\Sigma(S)| = q$ we say that the string is \textit{q-ary}. 
We say that $S'$ is a \textit{subsequence} of $S$, if $S'$ can be obtained by 
deleting zero or more characters from $S$, and we  denote this by $S' \subseteq S.$
$T$ is a {\em substring}  of $S$ if $T$ is a subsequence of $S$ whose characters occur contiguously in $S$. If the substring $T$ occurs 
at the beginning (resp.\ end) of $S$ that it is also called 
a \textit{prefix} (resp. \textit{suffix}) of $S$. .

A string $XX$ consisting of the concatenation of two identical strings is called a \textit{square}.
A string is \textit{square-free} if it does not contain any substring which is a square.  We say that a string $S$ is {\em exemplar} if 
no two characters of $S$ are equal.
Given two strings $S$ and $T,$ we say that there is a {\em duplication} turning $S$ into $T,$ denoted by $S \Rightarrow T,$ 
if there exist strings $A$, $B$, $X$ such that $S=AXB$ and $T=AXXB$.
We use $\Rightarrow_k$ to denote the existence of a sequence of $k$ TD's, i.e., 
$S\Rightarrow_k T$ if there exist 
$S_1,\dots,S_{k-1}$ such that $S \Rightarrow S_1 \Rightarrow \dots \Rightarrow S_{k-1} \Rightarrow T$. We write 
$S \Rightarrow_* T$ if there exists some $k$ such that $S \Rightarrow_k T.$ In this case, we also say that
$T$ is a {\em descendant} of $S$ and $S$ is an {\em ancestor} of $T.$ If $S$ is a square free ancestor of $T$, we also say that 
$S$ is a \textit{root} of $B.$ 

The reverse operation of a tandem duplication consists of taking a square $XX$ in $S$ and deleting one of the two occurences of $X.$ 
This operation is referred to as a \textit{contraction}. We write $ T\rightarrowtail S$ if there exist strings 
$A$, $B$, $X$ such that $T=AXXB$ and $S=AXB$. For a sequence of $k$ contractions, we write  
$T\rightarrowtail_k S$ and for a sequence of an arbitrary number of contractions we write $T \rightarrowtail_* S.$ 
In particular, we have that  $T\rightarrowtail_k S$ if and only if $S\Rightarrow_k T$ and $T\rightarrowtail_* S$ if and only if $S\Rightarrow_* T$.


For any string $A = a_1 \dots a_n$ and  set of indices $I = \{i_1, \dots, i_k\},$ such that $1 \leq i_1 < i_2 < \cdots < i_k \leq n$ we define $dup(A, I)$ 
as the string obtained by duplicating character $a_{i_j}$ for each $j =1,\dots,k,$ i.e., 
$$dup(A, I) = a^{(1)} a_{i_1} a_{i_1} a^{(2)} a_{i_2} a_{i_2} a^{(3)} \cdots a^{(k)} a_{i_k} a_{i_k} a^{(k+1)},$$ 
where $a^{(j)} = a_{i_{j-1}+1} a_{i_{j-1}+2} \dots a_{i_{j}-1},$ and $i_0 = 0, i_{k+1} = n+1.$

For a character $a$ and a positive integer $l,$ let $a^l$ denote the string consisting of $l$ copies of $a.$
A \textit{run} in a string $S$ is a maximal substring of $S$ consisting of copies of the same character. 
Given a string $S$ containing $k$ runs, the \textit{run-length encoding} of $S$, 
denoted $RLE(S)$, is the sequence $(s_1, l_1), \dots, (s_k, l_k),$ such that $S = s_1^{l_1} s_2^{l_2} \dots s_k^{l_k},$ where, in particular, 
for each $i=1,\dots, k,$ we have that  $s_i^{l_i}$ is the $i$-th run of $S$, consisting of the symbol $s_i$ repeated $l_i$ times. 
We write $|RLE(S)|$ for the number of runs contained in $S$. 
For example, given the string $S=111001222 = 1^30^21^12^3,$ we have $RLE(S)=(1,3), (0,2), (1,1), (2,3)$ and $|RLE(S)|=4$.

A $q$-ary string $S = s_1^{l_1} \cdots s_k^{l_k}$ is called \textit{purely alternating} if, there is an order on the symbols of the alphabet 
$\Sigma(S) = \{\sigma_0 < \sigma_1 < \dots \sigma_{q-1}\}$ and a $j \in [k],$ such that for each $i=1, \dots, k$ 
$s_i = \sigma_{j+i \bmod q},$ i.e., each run of the simbol $\sigma_j$ is followed by a run of the symbol $\sigma_{j+1 \bmod q}.$
Note that a purely alternating string is uniquely determined 
by the order on the alphabet, the initial character and the lengths of its runs. In general, we will assume, w.l.o.g., that, for a $q$-ary purely alternating string the alphabet is the set 
$\{0, 1, \dots, q-1\}$ with the natural order and the first run is a run of $0$'s. 

For example, the string $0001220112$ is purely alternating, but the string $01202$ is not. Note that all binary strings (that up to relabelling we assume to start with $0$) are purely alternating. 

Given  a $q$-ary purely alternating string $S$, a \textit{group} of $S$ is a substring $X$ of $S$ containing exactly $q$ runs of $S$.

%
 Given a string $S= s_1 s_2 \dots s_n$, we denote by $S^{dup}$ the string obtained by duplicating each single character in $S$, i.e., 
 $S^{dup} = dup(S, [|S|]) =  s_1 s_1 s_2 s_2 \dots s_n s_n.$

A string $S$ is \textit{almost square-free} if there exists a square-free string 
$S_{SF}$ such that $S_{SF} \subseteq S \subseteq S_{SF}^{dup}.$ 

For example, the string $01120022$ is almost square-free, while the strings 
$01122201$ and $0012212$ are not.

The following lemma records a useful immediate consequence of the definition of an almost square free string.

\begin{lemma} \label{lemma:almsqfreecontr}
$Z$ is an almost square free string if and only if there exists a square free string $Z_{SF}$ 
and a set $I \subseteq [|Z_{SF}|]$ such that 
$Z = dup(Z_{SF}, I).$ Moreover, the only contractions possible  
on $Z$ are of size $1$, i.e., those that remove one of two consecutive equal characters.
\end{lemma}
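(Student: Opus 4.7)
\medskip
\noindent
\textbf{Proof plan.} I would prove the iff and the ``moreover'' clause separately. The $(\Leftarrow)$ direction of the iff is immediate from the definition of $dup$: if $Z=dup(Z_{SF},I)$ with $Z_{SF}$ square-free, then picking one copy of each duplicated character realizes $Z_{SF}\subseteq Z$, and picking both copies realizes $Z\subseteq Z_{SF}^{dup}$.

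For the $(\Rightarrow)$ direction, write $Z_{SF}=s_1\cdots s_n$ and fix strictly increasing index maps $e_1:[n]\to[|Z|]$ realizing $Z_{SF}\subseteq Z$ and $e_2:[|Z|]\to[2n]$ realizing $Z\subseteq Z_{SF}^{dup}$. For each $i\in[n]$ let $\alpha_i\in\{0,1,2\}$ be the number of elements of the image of $e_2$ that fall in the pair $\{2i-1,2i\}$; then $Z=s_1^{\alpha_1}s_2^{\alpha_2}\cdots s_n^{\alpha_n}$, so it suffices to prove $\alpha_i\ge 1$ for all $i$. The key step is to consider the composite index $g(j):=\lceil e_2(e_1(j))/2\rceil\in[n]$, which by construction satisfies $s_{g(j)}=s_j$. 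Strict monotonicity of $e_2\circ e_1$ together with square-freeness of $Z_{SF}$ (which forbids $s_j=s_{j+1}$, and hence $g(j)=g(j+1)$, since the only way a tie can arise is $e_2(e_1(j))=2i-1$ and $e_2(e_1(j+1))=2i$) imply $g:[n]\to[n]$ is strictly increasing, hence the identity, hence each pair is used and $\alpha_i\ge 1$. Setting $I=\{i:\alpha_i=2\}$ yields $Z=dup(Z_{SF},I)$.

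For the ``moreover'' clause I would show that $Z$ contains no square $XX$ with $|X|\ge 2$. By the first part, $Z=R_1R_2\cdots R_n$ with $R_i=s_i^{\alpha_i}$, $\alpha_i\in\{1,2\}$, and $s_i\ne s_{i+1}$ (from square-freeness of $Z_{SF}$); in particular every run of $Z$ has length at most $2$. Let $y_1^{m_1}\cdots y_r^{m_r}$ be the run decomposition of $X$. If $r=1$ then $XX=y_1^{2m_1}$ would be a run of length $\ge 4$ in $Z$, exceeding the length-$2$ bound. If $r\ge 2$ and $y_1\ne y_r$, then collapsing consecutive duplicates in $XX$ yields the string $(y_1\cdots y_r)^2$, which by the structure of $Z$ occurs as a consecutive substring of $Z_{SF}$ — a square of length $\ge 4$ in $Z_{SF}$, a contradiction. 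The main obstacle is the remaining case $r\ge 2$ with $y_1=y_r$: since consecutive $y_j$'s must differ, this actually forces $r\ge 3$, and the last run of the first copy of $X$ merges with the first run of the second copy into a single run of $Z$ of length $m_r+m_1\le 2$, hence $m_1=m_r=1$; the collapsed $XX$ then still contains $(y_1\cdots y_{r-1})^2$ as a prefix, i.e.\ a substring of $Z_{SF}$ of length $2(r-1)\ge 4$, again contradicting square-freeness. Therefore all squares in $Z$ have size $1$ (they are exactly the pairs $s_is_i$ with $\alpha_i=2$), so the only contractions possible on $Z$ are of size $1$.
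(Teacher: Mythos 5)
Your proof is correct and, for the key ``moreover'' clause, follows essentially the same route as the paper: assume a square $XX$ with $|X|\ge 2$, collapse the duplicated characters, and exhibit a square in $Z_{SF}$, with your case split on whether $y_1=y_r$ mirroring the paper's case split on whether the junction characters of the two copies of $D$ are duplicates of one another. The only real difference is that you also spell out a careful embedding argument for the equivalence $Z_{SF}\subseteq Z\subseteq Z_{SF}^{dup}\iff Z=dup(Z_{SF},I)$, which the paper treats as an immediate consequence of the definitions; that extra care is sound and harmless.
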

\remove{
\begin{proof}
We argue by contradiction. Let $Z_{SF}$ be the square free string such that 
$Z_{SF} \subseteq Z \subseteq Z_{SF}^{dup}$, 
where $Z_{SF}^{dup}$ is the string obtained from $Z_{SF}$ by duplicating every single character

Suppose that a contraction of size $> 1$ is possible on $Z$. Hence $Z = ADDB$ with $|D|>1.$ 
Let us indicate by $x^R$ each character in 
$DD$ that is a copy of an original character $x$ of $Z_{SF}$ and has been added 
when producing $Z_{SF}^{dup}.$ Let us also indicate by $D^{(1)}$ and $D^{(2)}$ the two copies of $D$ in $DD.$
We have two cases according to whether the last character of $D^{(1)}$ and the first character of 
$D^{(2)}$ are copies or not. We will show that in either case we reach a contradiction. 

\noindent
{\em Case 1.} $D^{(1)} = wx$ and $D^{(2)} = x^R w'.$ Let $z$ denote the last character of $w$. Note that $z \neq x$ for 
otherwise we would have a repetition in $Z_{SF}$. The same argument shows that the first character, say $y,$ of $w'$ must be 
different from $x.$ Then, we must have $D^{(1)} = xy\tilde{w}x$ and $D^{(2)} = x^Ry\tilde{w}x.$ 
If we now remove from $D^{(1)}D^{(2)}$ all the copied characters (which were not originally in $Z_{SF}$) we get
that $xy\hat{w}zxy\hat{w}x$ is a substring of $Z_{SF},$ where $\hat{w}$ is the version of $\tilde{w}$ without duplicates. 
We can easily see that we got a square contradicting the square free hypothesis on $Z_{SF}.$

\noindent
{\em Case2.} $D^{(1)} = wx$ and $D^{(2)} = y w',$ for some $x \neq y$. Hence, we must have $D^{(1)} =  y \tilde{w} x$ and 
$D^{(2)} = y \tilde{w} x,$ for some word $\tilde{w}.$ 
If we remove, as in the previous case, all the duplicates from $D^{(1)}$ and $D^{(2)},$  we get that
$y\hat{w} x y \hat{w} x$ is a substring of $Z_{SF},$ for some word $\hat{w}$ that does not start with $y$ and does not end with $x$.
This  is again a contradiction since  $y\hat{w} x y \hat{w} x$ is a square,  hence it cannot be a substring of $Z_{SF}$.
\end{proof}
}
%
\remove{
\begin{proof}
Note that if we contract $S$ into $S_{SF}$ by deleting single characters one by one, then we need $|S|-|S_{SF}|$ contractions. Let 
$\ell = |S|-|S_{SF}|$. 

To see that this is also a lower bound, let us consider an arbitrary sequence of 
contractions $S_{SF} \rightarrowtail S_1 \rightarrowtail S_2 \rightarrowtail \cdots \rightarrowtail S_{\ell'-1} \rightarrowtail S_{SF}.$ It is not hard to see that
for each $i=1, \dots, \ell'-1$ the string $S_i$ is almost square free and $S_{SF} \subseteq S_i \subseteq S_{SF}^{dup}.$ Then, 
by the previous lemma, it follows that each contraction removes a single character, hence $\ell' \geq \ell,$ which concludes the proof. 
\end{proof}

As a particular case of the previous lemma, we have  $dist(S_{SF},S_{SF}^{dup})=|S_{SF}|$. 
}

\section{The Hardness of {\sc TD-Dist} over alphabets of size $5$}

\label{Chapter3}

In this section we show that given two strings $S$ and $T$, over a $5$-ary alphabet $\Sigma$ and such that $S \Rightarrow_* T,$ 
 finding the minimum number of duplications required to transform $S$ into $T$ is NP-complete. 
%

To see that the problem is in NP, note that, if $S \Rightarrow_* T$, then  
$dist_{TD}(S, T) \leq |T|$ because each duplication from $S$ to $T$ adds at least one character. 
Thus a certificate for the problem consisting of the sequence of duplications turning $S$ into $T$ 
has polynomial size and can also be clearly verified in  polynomial time.
 



\bigskip

\noindent
{\bf The Block-Exemplar Tandem Duplication Distance problem.}
For the hardness proof, we reduce from a variant of the
{\sc Tandem Duplication Distance} problem over unbounded alphabets, which we define below.
In this variant, only instances made of strings of a special structure are allowed. We call the allowed 
instances {\em block exemplar pairs}. We provide an operative definition of how a block-exemplar pair of strings is 
built. Recall that a string is exemplar if its characters are all distinct.

\begin{definition}[Block-Exemplar Pair] \label{def:pair}
Fix and integer $t > 1$ and  $t+2$ exemplar strings $X, B_0, B_1, \dots, B_t$ over pairwise disjoint alphabets, such that 
$|X|, |B_0|, |B_1| > 1$ and $|B_i| = 1$ for each $i=2, \dots, p.$ 
Fix a second positive integer $p < t$ and $p$ distinct subsets $\emptyset \neq I_1, \dots, I_p \subsetneq [|X|].$ 
Finally, fix a character $\text{\L}$ which does not appear in any of the 
strings $X, B_0, \dots, B_t.$

The strings $\text{\L}, X, B_0, B_1, \dots, B_t$ and the sets $I_1, \dots, I_p,$ 
determine a block-exemplar pair of strings $S$ and $T$ as follows:

For each $j=1, \dots, p,$ define the distinct strings $X_1, \dots, X_p,$ where $X_j$ 
is obtained from $X$ by single character duplications,   
$X_j = dup(X, I_j),$ hence  $X \subsetneq X_i \subsetneq X^{dup}.$ 
Define the strings
$$ {\cal B}_t^{0}  = B_t B_{t-1} \dots B_2 B_1 B_0^{dup} 
\qquad {\cal B}_t^{1} = B_t B_{t-1} \dots B_2 B_1^{dup} B_0,$$
and for each $i=1, \dots, t,$ define the strings 
$${\cal B}_i = B_i B_{i-1} \dots B_2 B_1 B_0 \qquad 
{\cal B}_i^{01} = B_i B_{i-1} \dots B_2 B_1^{dup} B_0^{dup}.$$
Then set 
\begin{eqnarray}
S &=& {\cal B}_{t} X \text{\L} = B_{t} B_{t-1} \dots B_2 B_1 B_0 X \text{\L}, \label{S-structure}\\
T &=& {\cal B}^0_{t} X^{dup} \text{\L} \,  {\cal B}^1_{t} X \text{\L} \,\,\,
{\cal B}^{01}_{1} X_1 \text{\L} \, {\cal B}^1_{t} X \text{\L} \,\,\,
{\cal B}^{01}_{2} X_2 \text{\L} \, {\cal B}^1_{t} X \text{\L} \,\,\,
\dots 
{\cal B}^{01}_{p} X_p \text{\L} {\cal B}^1_{t} X \text{\L}. \label{T-structure}
\end{eqnarray}

More generally, we say that $S,T$ is a block-exemplar pair, if there exist integers $t, p,$ strings $\text{\L}, X, B_0, \dots, B_t,$ and 
sets $I_1, \dots, I_p$ such that $S$ and $T$ are given by the above construction.
\end{definition}

\noindent
We can now define the following variant of the {\sc TD-Dist} problem.

\begin{myproblem}
\problemtitle{Block-Exemplar Tandem Duplication Distance Problem (B-Ex-TD)}
\probleminput{ A Block-Exemplar pair of strings $S$ and $T,$ an integer bound $k$.}
\problemquestion{Is $dist_{TD}(S, T) \leq k$?}
\end{myproblem}

In \cite{Lafond} the authors showed how to map instances $(G,k)$ of  the  NP-complete problem {\sc Clique} 
to choices of strings $X, B_0, \dots, B_t,$ and sets $I_1, \dots, I_p$ and a parameter $k'$ such that $G$ has  
clique of size $k$ if and only if $S \Rightarrow_{k'} T,$ where $S,T$ is the block-exemplar pair given by   
$X, B_0, \dots, B_t, I_1, \dots, I_p.$ Hence we have that the following result is implicit in \cite{Lafond}.

\begin{theorem}  \label{theo:cost-effective-hardness}
The \textsc{B-Ex-TD} problem is NP-complete.
\end{theorem}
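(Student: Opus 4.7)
The plan is to establish NP-membership separately, then recover the hardness by inspecting the reduction of Lafond et al.\ and checking that the strings it produces fit precisely into the block-exemplar template of Definition~\ref{def:pair}. Membership in NP is immediate: as already observed for \textsc{TD-Dist}, the sequence of at most $|T|$ duplications witnessing $S \Rightarrow_k T$ is a polynomial-size certificate that is verifiable in polynomial time, and a block-exemplar pair $(S,T)$ is a special case.

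For hardness, I would start from the NP-complete problem \textsc{Clique}: given $(G,k)$ with $G=(V,E)$, decide whether $G$ has a clique of size $k$. I would recall the construction of Lafond et al., which associates:
\begin{itemize}
\item an exemplar ``catalogue'' string $X$ whose characters are indexed by the vertices of $V$, so that selecting a subset $I\subseteq[|X|]$ of positions to duplicate in $X$ corresponds to picking a subset of vertices;
\item exemplar blocks $B_0,B_1,\dots,B_t$ over pairwise disjoint alphabets, where $B_0$ and $B_1$ are the two ``active'' blocks whose duplications pay the key cost, while $B_2,\dots,B_t$ are single-character separators designed to force a rigid alignment between $S$ and the segments of $T$;
\item one subset $I_j\subsetneq[|X|]$ for each edge $e_j\in E$, encoding the pair of endpoints of $e_j$; so $p=|E|$ and $t$ is chosen large enough (polynomially in $|V|,|E|,k$) to make the separator machinery work;
\item a fresh marker symbol $\text{\L}$ that partitions $T$ into the segments shown in~(\ref{T-structure}).
\end{itemize}
With these choices, the string $S = \mathcal{B}_t X \text{\L}$ and the string $T$ of~(\ref{T-structure}) are exactly the pair produced by the Lafond reduction, so by construction $(S,T)$ is a block-exemplar pair. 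Setting the budget $k'$ to the value computed in~\cite{Lafond} as a function of $|V|,|E|,k,|X|,|B_i|$, I would then invoke their equivalence: $G$ has a clique of size $k$ if and only if $S \Rightarrow_{k'} T$.

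The easy direction is to exhibit, from a clique $C$ of size $k$, an explicit sequence of $k'$ duplications: duplicate the $X$-positions corresponding to $C$ once (producing an $X_C$ that can be re-used for each edge of $C$), replicate $\mathcal{B}_t X \text{\L}$ the required number of times via duplications of $B_0, B_1$, and then copy the appropriate $X$-segments edge-by-edge, amortising across the $\binom{k}{2}$ edges inside $C$. The real obstacle, and the place where I would have to appeal carefully to~\cite{Lafond}, is the converse: showing that any duplication sequence of length $\le k'$ must be so rigidly structured that the subsets $I_j$ actually realised by $X$-duplications form the edges of a $k$-clique. This uses the exemplar/separator structure of $B_2,\dots,B_t$ (which forbids any non-trivial long duplication crossing their boundaries), the fact that $|B_0|,|B_1|>1$ is chosen so that every ``copy'' of $\mathcal{B}_t X\text{\L}$ inside $T$ must be produced from a distinct block-duplication event, and a counting argument on how many $X$-positions can be reused. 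Once this combinatorial core, due to Lafond et al., is quoted, the theorem follows directly.
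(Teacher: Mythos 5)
Your proposal matches the paper's own treatment: the paper proves NP-membership exactly as you do (a sequence of at most $|T|$ duplications is a polynomial-size, polynomially verifiable certificate) and establishes hardness by the same observation, namely that the \textsc{Clique} reduction of Lafond et al.\ already produces instances fitting the block-exemplar template of Definition~\ref{def:pair}, so the theorem is implicit in \cite{Lafond}. Your additional reconstruction of the internals of that reduction (e.g.\ the claim $p=|E|$; in fact $p$ is a polynomially blown-up number of edge gadgets) is not needed for, and not part of, the paper's argument, which simply cites the equivalence $G$ has a $k$-clique iff $S\Rightarrow_{k'}T$.
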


\remove{
We will now show that for every {\em block-exemplar pair of strings} $S, T$ there is a pair of 
strings $\hat{S}, \hat{T}$ over an alphabet of size $5$ such that 
$dist_{TD}(S, T) \leq k$ if and only if $dist_{TD}(\hat{S}, \hat{T}) \leq k.$



The basic idea we employ for making the reduction work in the case of an alphabet of size $5$  is 
to build $5$-ary strings $\hat{S}, \hat{T}$ having a structure analogous to $S$ and $T$. 
However,  instead of using exemplar strings for the building blocks $X, B_1, \dots, B_{2p}$ we use  
substrings of a square free ternary string.
The crucial point is to carefully choose such substrings in order to avoid that when they are put together 
to create the input strings $\hat{S}$ and $\hat{T}$  
some unwanted squares (implying possible unwanted duplications) are produced. 
We want that the only duplications/contractions allowed in any process from $\hat{S}$ to $\hat{T}$ can be one-to-one mapped 
to the duplications/contractions leading from $S$ to $T.$ 
We need to overcome non trivial technical hurdles, that make the hard part of the reduction non-trivial
(equivalence in the direction from the $5$-ary string problem to the block-exemplar variant of the problem).

}



\medskip

\noindent
{\bf From {\sc B-Ex-TD} to {\sc TD-Dist} on alphabets of size 5.}
\remove{
We will prove the following theorem. 
\begin{theorem}
\label{thm:k5tdnpcomplete}
The {\sc TD-Dist} problem is NP-complete over alphabets of size $5$.
\end{theorem}


\medskip

For the hardness proof, we reduce from the 
\textsc{Block-Exemplar Tandem Duplication Distance} problem. 
We will show that for every {\em block-exemplar pair of strings} $S, T$ there is a pair of 
strings $\hat{S}, \hat{T}$ over an alphabet of size $5$ such that 
$dist_{TD}(S, T) \leq k$ if and only if $dist_{TD}(\hat{S}, \hat{T}) \leq k.$
}



The basic idea  is to map a block-exemplar pair $S$ and $T$ to a pair of 
$5$-ary strings $\hat{S}, \hat{T}$ having a structure analogous to the one of $S$ and $T$ (as given in (\ref{S-structure})-(\ref{T-structure})). 
However,  instead of using exemplar strings for the building blocks $X, B_{0}, \dots, B_{t}$ we use  
substrings of a square free ternary string.
The crucial point is to carefully choose such substrings so that we can control the squares appearing  in the resulting $\hat{S}$ and $\hat{T}.$  
We want that the only duplications allowed in any process transforming $\hat{S}$ into $\hat{T}$ can be one-to-one mapped 
to some duplication sequence from  $S$ to $T.$ 
For this, we need to overcome significant technical hurdles, that make the hard part of the reduction non-trivial
(i.e., the equivalence in the direction from the $5$-ary variant considered here to the block-exemplar variant of the problem considered in \cite{Lafond}).

%


\medskip

\noindent
{\bf  The mapping from the block-exemplar pair $(S,T)$ to the $5$-ary pair $(\hat{S}, \hat{T}).$}
We first show how to map the different substrings that define the structure of  the pair $S$ and $T$ (see Def. \ref{def:pair}) 
into $5$-ary strings. Then we will show that the resulting strings $\hat{S}, \hat{T}$ 
preserve the  properties of the original strings with respect to the possible duplication sequences $\hat{S} \Rightarrow_* \hat{T}.$ 

The strings $\hat{S}$ and $\hat{T}$ will be defined on the alphabet $\{0,1,2, \text{\L}, \$\}.$
For each $i=0, 1, \dots, t$ we  define a string $\hat{B}_i \in \{0,1,2, \$\}^*$ that will be used to encode the string $B_i.$
Also we  define the encoding  of strings $X, X_1, \dots, X_p$ into strings $\hat{X}, \hat{X}_1, \dots, \hat{X}_p$
over the alphabet $\{0,1,2\}.$ 

The encodings $\B_i$ and $\X$ are obtained by iteratively "slicing off" suffixes from a long square free string, denoted by $\mathcal{O}.$ In addition 
each  $\B_i$ is also extended with a single occurrence of the character \$.

More precisely, let us define $\mathcal{O}$ to be a square free string of length at least 
$|X| + t^2 + t  (5+ \max\{|B_0|, |B_1|\})$) over the alphabet $\{0, 1,  2\}.$
Note that creating $\mathcal{O}$ takes polynomial time in $|S|+|T|$ using a square-free morphism, for example Leech's 
morphism (\cite{Leech}). 
Let us now proceed to define the  building blocks $\tilde{B}_t, \dots, \tilde{B}_0, \tilde{X}.$ 
Refer to Fig.\ \ref{fig:blocks} for a pictorial description of this process. 
\begin{figure}[htbp] 
   \centering
   \includegraphics[width=\textwidth]{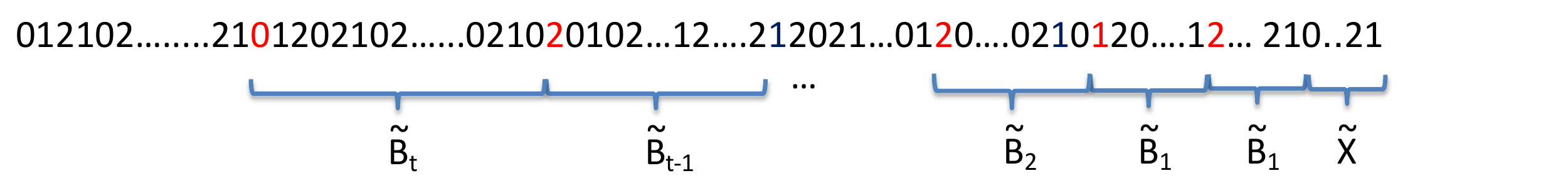} 
   \caption{The way the blocks $\tilde{B}_{t}, \tilde{B}_{t-1}, \dots, \tilde{B}_{1} \tilde{B}_{0}, \tilde{X}$ are 
   consecutively extracted from the square free ternary string ${\cal O}.$} 
   \label{fig:blocks}
\end{figure}
First, we define the string $\tilde{X}$ to be the suffix of length $|X|$ of $\mathcal{O}.$ 
%
Let ${\cal O}^{(0)}$ be the string obtained from ${\cal O}$ after removing the suffix $\tilde{X}.$
Let $\tilde{B}_{0}$ be the shortest suffix of $\mathcal{O}^{(0)}$ that does not start with $0$ and 
has length at least $\max(|B_0|, |B_1|).$ 
We then define $\tilde{B}_{1}$ as the shortest suffix of the string obtained by removing the suffix $\tilde{B}_{0}$ from 
$\mathcal{O}^{(0)}$ such that $|\tilde{B}_{1}|>|\tilde{B}_{0}|$ and 
that doesn't start with 0. 

For $i = 2, \dots t$ we define the string $\tilde{B}_i$ as the minimum suffix of the string ${\cal O}^{(i-1)}$ obtained from 
$\mathcal{O}$ after removing 
the suffix $\tilde{B}_{i-1} \tilde{B}_{i-2} \dots \tilde{B}_1\tilde{B}_0\tilde{X}$ and such that: 
(i) $|\tilde{B}_i|>|\tilde{B}_{i-1}|$; 
(ii) if $i \neq t$ then $\tilde{B}_i$ does not start with 0;
(iii) if $i = t$ then $\tilde{B}_i$ starts with 0.
Thus, the lengths of the strings $\tilde{B}_{0}, \tilde{B}_{1}, \tilde{B}_2, \dots, \tilde{B}_{t}$ 
 is monotonically increasing, all the strings are ternary square-free and only for $j = t$ we have that 
 $\tilde{B}_j$ starts with the character 0. Note that, in order to guarantee that the starting character of the strings $\tilde{B}_i$ is (resp.\ is not) $0,$ it is enough to 
consider a suffix of ${\cal O}^{(i-1)}$ of size at most $|\tilde{B}_{i-1}| + 6$. This follows from ${\cal O}^{(i-1)}$ being a ternary square free string, hence it cannot contain 
substrings of size 5 using only two characters, since every string of size 5 over only two characters necessarily contains a square. 
Therefore, the prefix of length $5$ of the suffix of length $|\tilde{B}_{i-1}|+6$ of ${\cal O}^{(i-1)}$ 
contains an occurrence of any character in $\{0,1,2\}$ that can be
chosen as the desired starting character of $\tilde{B_i}.$ Therefore, the number of character we use for  the strings
$\tilde{X}, \tilde{B}_0, \dots  \tilde{B}_t,$ is at most $|X| + \sum_{i=0}^t \max\{|B_0|, |B_1|\}+i+5 \leq |X| + t^2 + t(5+ \max\{|B_0|, |B_1|\}) = |{\cal O}|.$

We now use these strings and the two additional characters $\text{\L}, \$$ as building blocks to create larger strings which will be our $5$-ary ``analog'' of the 
strings $\mathcal{B}_i.$ 


Let $\Omega$ be the following set of strings:
 $$\Omega = \{\text{\L}\} \cup \{B_2, B_3, \dots, B_{2p}\} \cup \mathbb{B}_0 \cup \mathbb{B}_1 \cup \mathbb{X},$$ 
where 
$\mathbb{B}_0 = \{B'_0 \mid B_0 \subseteq B'_0 \subseteq B_0^{dup}\},$ 
$\mathbb{B}_1 = \{B'_1 \mid B_1 \subseteq B'_1 \subseteq B_1^{dup}\},$
$\mathbb{X} = \{X'  \mid X \subseteq X' \subseteq X^{dup}\}.$
Note that $\Omega$ contains all the almost square free strings 
that either appear among the building blocks of $T$ and $S$ (\ref{S-structure})-(\ref{T-structure}) or can be obtained from 
such building blocks via single letter duplications.
It turns out that (see Fact \ref{fact:contractionsTS-core} below)
any string encountered in a sequence of contractions $T \rightarrowtail_* S$ is a concatenation of 
elements of $\Omega.$ For this reason we refer to the elements of $\Omega$ as the macro-characters or {\em chunks} of $S$ and $T.$


Note that, by Lemma \ref{lemma:almsqfreecontr},
 for any $C' \in \mathbb{B}_0 \cup \mathbb{B}_1 \cup  \mathbb{X},$ there exists a unique $C \in \{B_0, B_1, X\}$ and a unique set $I \subseteq [|C|]$ such that
$C' = dup(C, I).$

We now define an encoding of the elements of $\Omega.$ Let $\mu: A \in \Omega \mapsto \mu(A) \in \{0,1,2,\$, \text{\L}\}^*$ be defined as follows 
(for the sake of highlighting the factors involved in 
the formulas,  $a \cdot b$ will denote concatenation  of strings $a$ and $b$):
\begin{itemize}
\item $\mu(\text{\L}) = \text{\L},$ and $\mu(X) = \tilde{X},$ 
\item for each $i = 2, \dots, t,$ we set $\mu(B_i) = \tilde{B}_i \cdot \$ $
\item for each $B' \in \mathbb{B}_0 \cup \mathbb{B}_1,$ we let $\mu(B') = dup(\mu(B), I) \cdot \$,$ where, by Lemma \ref{lemma:almsqfreecontr}, 
 $B \in \{B_0, B_1\}$ and $I \subseteq [|B|]$ are uniquely defined by $B' = dup(B, I)$;
\item for each $X' \in \mathbb{X}$ we let $\mu(X') = dup(\mu(X), I),$ where, by Lemma \ref{lemma:almsqfreecontr},
  $I \subseteq [|X|]$ is uniquely defined by $X' = dup(X, I)$;
\end{itemize}
The mapping $\mu$ is naturally extended to concatenations of elements of $\Omega$ by setting 
$\mu(A_1 \cdot A_2  \cdots  A_r) = \mu(A_1) \cdot \mu(A_2) \cdots  \mu(A_r),$
for each $A_1, \dots, A_r \in \Omega$

For each $i=1, \dots, t,$ let  $\B_i = \mu(B_i)$ and for $i=0,1,$ let $\B_i^* = \mu(B_i^{dup}).$ Finally, let $\X = \mu(X)$ and $\X^* = \mu(X^{dup}).$ 
The set of macro-characters  $\hat{\Omega}$ that constitute the range of $\mu$ is then given by: 
 $$\hat{\Omega} = \{\text{\L}\} \cup \{\B_2, \B_3 , \dots, \B_{t}  \} \cup \mathbb{\B}_0 \cup \mathbb{\B}_1 \cup \mathbb{\X},$$
 where 
$\mathbb{\B}_0 =  \{\B'_0  \mid \B_0 \subseteq \B'_0 \subseteq \B_0^*\},$ 
$\mathbb{\B}_1 = \cup \{\B'_1  \mid \B_1 \subseteq \B'_1 \subseteq \B_1^*\},$ and 
$ \mathbb{\X} = \{\X' \mid \X \subseteq \X' \subseteq \X^{dup}\}.$
It is  easy to see that  $\mu$ is one-one from $\Omega$ to $\hat{\Omega}.$

We are ready to define the  $5$-ary strings  $\hat{S}$ and $\hat{T}$ that we use in our reduction:
\begin{eqnarray}
\hat{S} &=& \mu(S) =  \B_{t}  \B_{t-1}  \dots  \B_2  \B_1  \B_{0}  \X \text{\L} 
= \tilde{B}_{t} \$ \tilde{B}_{t-1} \$ \dots \$ \tilde{B}_2 \$ \tilde{B}_1 \$ \tilde{B}_{0} \$ \tilde{X} \text{\L}\\
\hat{T} 
  &=& \mu(T) =  \mathcal{\B}_{t}^0 \X^{*} \text{\L} \, \mathcal{\B}_{t}^1 \X \text{\L} \,\, \mathcal{\B}_{1}^{01} \X_{1} \text{\L} \mathcal{\B}_{t}^1 \X \text{\L} \,\,
   \mathcal{\B}_{2}^{01} \X_{2} \text{\L} \mathcal{\B}_{t}^1 \X \text{\L} \, \dots \, \mathcal{\B}_{p}^{01} \X_{p} \text{\L} \mathcal{\B}_{t}^1 \X \text{\L},
\end{eqnarray}
where for each $i=1, \dots, t,$ we let $\X_i = \mu(X_i),$ and for  $q \in [t]$, and $a \in \{0,1,01\}$ we let
$\mathcal{\B}_q = \mu(\mathcal{B}_q),$ and 
$\mathcal{\hat{B}}_q^a = \mu(\mathcal{B}_q^a).$
Note that 
$\hat{S}$ is square-free. In analogy with $S$ and $T$, we refer to the substrings from $\hat{\Omega}$ that constitute the 
building blocks of $\S$ and $\T$ as {\em chunks}.

\medskip
\noindent
{\bf The final steps of the hardness proof.} 
Exploiting the properties of $\mu$ and the structure of $S, T, \S, \T,$ we have the following result.
\begin{theorem} \label{lemma:equivalence}
Let $S, T$ be a block exemplar pair of strings and let $\hat{S}$ and $\hat{T}$ be the corresponding pair of $5$-ary strings built as above. 
Then $S\Rightarrow_k T$ iff $\hat{S}\Rightarrow_k \hat{T}.$
\end{theorem}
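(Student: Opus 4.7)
The theorem is proved in both directions by leveraging the chunk structure of the two settings.

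For the easier direction ($S\Rightarrow_k T$ implies $\hat S\Rightarrow_k \hat T$), the plan is to lift each duplication step through $\mu.$ Given a sequence $S=S_0\Rightarrow\cdots\Rightarrow S_k=T,$ I would invoke Fact \ref{fact:contractionsTS-core} to ensure that each intermediate $S_i$ is a concatenation of chunks from $\Omega.$ Every step $S_i\Rightarrow S_{i+1}$ is therefore of one of two shapes: a duplication of a contiguous run of whole chunks, or a single-character duplication inside a chunk belonging to $\mathbb X\cup\mathbb B_0\cup\mathbb B_1$ (the remaining chunks have size one and admit no internal duplication). In both cases $\mu$ transports the step verbatim, since it is a morphism with respect to concatenation of chunks and since, by construction, $\mu(dup(C,\{j\}))=dup(\mu(C),\{j\})$ for every chunk $C\in\mathbb X\cup\mathbb B_0\cup\mathbb B_1.$ Composing the $k$ lifts produces $\hat S\Rightarrow_k\hat T.$

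The harder direction ($\hat S\Rightarrow_k\hat T$ implies $S\Rightarrow_k T$) is the crux of the reduction. Given a sequence $\hat S=\hat S_0\Rightarrow\cdots\Rightarrow\hat S_k=\hat T,$ the plan is to establish first the $5$-ary analog of Fact \ref{fact:contractionsTS-core}: every $\hat S_i$ is a concatenation of chunks from $\hat\Omega.$ I would argue this along the reversed contraction sequence $\hat T\rightarrowtail_*\hat S,$ showing inductively that every contraction $\hat S_{i+1}\rightarrowtail\hat S_i$ removes a square $YY$ that is either chunk-aligned, or is a single-character repetition inside a chunk of $\mathbb{\X}\cup\mathbb{\B}_0\cup\mathbb{\B}_1.$ The analysis splits on the alphabet of $Y$: if $Y$ contains $\text{\L},$ the fixed positions of the $\text{\L}$-separators (which no duplication can shift relative to one another) force $YY$ to align with $\text{\L}$-delimited blocks; if $Y$ contains $\$$ but not $\text{\L},$ an analogous separator argument using the placement of $\$$ between $\tilde B_j$-factors forces alignment inside a $\mathcal{\hat B}$-region; if $Y$ is purely ternary, $Y$ must sit inside a single chunk (this is the key point addressed below). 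Once chunk alignment is established for every $\hat S_i,$ each step of the sequence lifts via $\mu^{-1}$ to either a whole-chunk or a single-character duplication on the $S\Rightarrow_* T$ side, assembling into the desired length-$k$ sequence $S\Rightarrow_k T.$

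The main obstacle is excluding purely ternary cross-chunk squares $YY\subseteq\{0,1,2\}^*.$ By Lemma \ref{lemma:almsqfreecontr} applied chunk by chunk, any intermediate string $\hat S_i$ differs from $\hat S$ only by single-character duplications inside the $\tilde X,\tilde B_0,\tilde B_1$ factors, so its ternary positions are a near-verbatim copy of the corresponding suffix of $\mathcal{O}$ (each original symbol replaced by one or two consecutive copies of itself). A cross-chunk ternary square would therefore translate back to a genuine square inside $\mathcal{O},$ contradicting its square-freeness. The strictly monotone lengths $|\tilde B_0|<|\tilde B_1|<\cdots<|\tilde B_t|$ together with the controlled leading characters of the $\tilde B_j$ provide the quantitative slack needed to make this translation valid for every candidate cross-chunk square (short enough squares cannot straddle a boundary because of the enforced leading character, while long enough ones are fully captured inside $\mathcal{O}$). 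Handling the slight perturbations introduced by the legitimate character duplications inside $\tilde X,\tilde B_0,\tilde B_1$ is the most delicate bookkeeping in the proof.
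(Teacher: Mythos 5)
Your overall architecture (lift each contraction through $\mu$ in one direction; establish a chunk factorization of every intermediate string and pull each step back through $\mu^{-1}$ in the other) matches the paper's. However, there is a genuine gap in both parts: you assume that every square $DD$ spanning more than one chunk is chunk-aligned, i.e.\ that $D$ is a concatenation of whole chunks. This is false, and the non-aligned case is exactly where the technical work lies. Because consecutive blocks inside a cluster $(\mathcal{\B}_{t}^1 \X \text{\L})^*$ differ only by internal single-character duplications of their $\B_1$- and $\X$-chunks, a square can have the form $D = C_1'' D' C_2' = C_2'' D' C_3'$, where $C_i = C_i' C_i''$ are three chunks obtained from the same base $C\in\{B_0,B_1,X\}$ by possibly \emph{different} duplication sets $I_1,I_2,I_3$; the cut points fall strictly inside the chunks and cannot in general be realigned to chunk boundaries (realignment fails precisely when $I_1\neq I_3$). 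The paper isolates this as case (c) of Proposition \ref{proposition:form} (and the analogous subcase of Claim \ref{claim:equivalence1-new} on the exemplar side), and introduces items 2 and 3 of Fact \ref{mu-dup-modularity-2} precisely to transfer such shifted squares between the exemplar and the $5$-ary setting. Your induction ``every contraction is chunk-aligned or a single character'' cannot be closed as stated, and your step-by-step lift in Part 1 omits the same case, since duplications of this shifted shape also occur in the sequence from $S$ to $T$.

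Relatedly, you identify purely ternary cross-chunk squares as the main obstacle, but this is largely a non-issue: by construction every maximal $\{0,1,2\}$-substring of an intermediate string is confined to a single $\tilde{B}_j$- or $\tilde{X}$-region by the interleaved $\$$ and $\text{\L}$ separators, so a square avoiding both separators lies inside one almost-square-free chunk and must be a single-letter square by Lemma \ref{lemma:almsqfreecontr}. The genuinely delicate squares are those that \emph{do} contain separators yet are not chunk-aligned, and for those your separator argument draws the wrong conclusion (``forces alignment''). To repair the proof you would need to add the shifted case both to the structural induction (showing the contracted string is still a concatenation of chunks, because $C_2' C_3'' = C_3$ is again a whole chunk) and to the transfer of each such step through $\mu$ and $\mu^{-1}$ via Fact \ref{mu-dup-modularity-2}.
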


The hardness of the {\sc TD-Dist} problem over a $5$-ary alphabet,  immediately follows from Theorem \ref{theo:cost-effective-hardness} 
and Theorem \ref{lemma:equivalence}. \qed

\vspace{-0.5cm}

\subsection{Sketch of the proof of Theorem \ref{lemma:equivalence}: The main analytic tools}

The first properties we use are summarized in the following fact. They come from the interdependence between $\mu$ and the operator $dup$ and their effect on concatenations
 of a prefix of a chunk with a suffix of a chunk. 
They basically say that the equality of the prefixes (resp.\ suffixes) of pairs of chunks from 
$\Omega$ is preserved also by the images of such chunks via the map $\mu.$ 
Moreover, the concatenation of prefixes and suffixes of chunks from $\Omega$ result in a chunk from $\Omega$ iff the same holds true for their images via $\mu.$  
For a string $C=c_1 c_2 \dots c_n$, and indices $i, j \in [n]$ we denote by  $C[i,j]$ the substring $c_i c_{i+1} \dots c_j,$ if $i \leq j$ and the empty string otherwise.

\begin{fact} \label{mu-dup-modularity-2}
Let $C \in \{B_0, B_1, X\}$ and $C \subseteq C' \subseteq C^{dup}.$ 
Let $I$ be the unique set of indices such that 
$C' = dup(C, I).$ Let $\hat{C} = \mu(C)$ and denote by $\hat{C}_h = \hat{C}[1,|C|]$ and $\hat{C}_t = \hat{C}[|C|+1, |\hat{C}|].$ 
Let $c = |C|, c' = |C'|, \hat{c} = |\hat{C}|.$  
Then, it holds that
\begin{enumerate}
\item $\mu(C') = dup(\C, I) = dup(\hat{C}_h, I) \cdot \hat{C}_t;$
\item for any $I', I'' \subseteq [c]$  and $i \in [c+|I'|]_0, \, j \in [c +|I''|]_0,$ we have 
\begin{itemize}
\item $\displaystyle{dup(C, I')[1, i] = dup(C, I'')[1,i]  \iff dup(\C, I')[1,i] = dup(\C, I'')[1, i]}$
\item $\displaystyle{dup(C, I')[i+1, c+|I'|] = dup(C, I'')[j+1,c+|I''|]}$\\
$\displaystyle{~~~~~\iff dup(\C, I')[i+1, \hat{c} +|I'|] = dup(\C, I'')[j+1, \hat{c}+|I''|]}$
\end{itemize}
\remove{ 
\item for any $I', I'', I''' \subseteq [c],$ and $i \in [c+|I'|], j \in [c+|I''|],$ we have 
\begin{itemize}
\item $\displaystyle{dup(C, I')[1,i] \cdot dup(C, I'')[j+1,c+|I''|] = dup(C, I''')}$\\
$\displaystyle{~~~~\Rightarrow
dup(\C, I')[1,i] \cdot dup(\C, I'')[j+1, \hat{c}+|I''|) = dup(\C, I''')}$
\end{itemize}
}
\item for any $I', I'', I''' \subseteq [|C|],$ and $i \in [\hat{c} +|I'|], j \in [\hat{c}+|I''|],$ we have 
\begin{itemize}
\item $\displaystyle{dup(\C, I')[1,i] \cdot dup(\C, I'')[j+1,\hat{c}+|I''|] = dup(\C, I''')}$\\
$\displaystyle{~\Rightarrow
dup(C, I')[1,\min\{i, c+|I'|] \cdot dup(C, I'')[j+1,c+|I''|] = dup(C, I''')}$
\end{itemize}
\end{enumerate}
\end{fact}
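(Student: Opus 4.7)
The plan is to establish the three items in turn, exploiting the fact that the map $\mu$ interacts with $dup$ in a highly structured way: on the head $\hat{C}_h$ of $\hat{C}$, it acts as an encoding of $C$ that preserves the combinatorics of prefix/suffix comparisons, while on the tail $\hat{C}_t$ it is completely passive.

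For part 1, the equalities follow directly by unpacking the definition of $\mu$. For $C' = dup(C, I)$ with $I \subseteq [c]$, the defining clause of $\mu$ on $\mathbb{B}_0 \cup \mathbb{B}_1 \cup \mathbb{X}$ gives $\mu(C') = dup(\hat{C}_h, I) \cdot \hat{C}_t$ (the trailing \$, when present, is absorbed into $\hat{C}_t$). Since duplicating at positions lying in $[c]$ only affects positions of $\hat{C}$ that lie in the head, we also have $dup(\hat{C}, I) = dup(\hat{C}_h, I) \cdot \hat{C}_t$, and both expressions coincide with $\mu(C')$.

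For part 2, the main tool is the \emph{position map} $\phi_I \colon [c + |I|] \to [c]$ sending each position $\ell$ of $dup(C, I)$, or equivalently of $dup(\hat{C}_h, I)$, to the index in $C$ (resp.\ in $\hat{C}_h$) it originates from. Two key properties drive the argument: $C$ is exemplar, so $dup(C, I')[\ell] = dup(C, I'')[\ell']$ is equivalent to $\phi_{I'}(\ell) = \phi_{I''}(\ell')$; and $\hat{C}_h$, being a substring of the square-free string $\mathcal{O}$, has all adjacent characters distinct. The first property gives the forward direction of both bullets (equal position maps yield equal characters in either encoding). For the converse, one takes the first position at which the two position maps diverge when walking away from an anchor of agreement; since each $\phi$ is non-decreasing with steps of $0$ or $1$, the two diverging values must differ by exactly $1$, and the adjacent-distinct-characters property of $\hat{C}_h$ then forces a character mismatch. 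For the prefix bullet, the anchor is $\ell = 1$ (both maps send $1$ to $1$) and one walks rightward; for the suffix bullet the anchor is the right endpoint (both maps send $c + |I|$ to $c$) and one walks leftward, with the identical tails $\hat{C}_t$ on both sides of the $\hat{C}$-equation contributing no constraint.

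For part 3, the plan is to peel off the common tail $\hat{C}_t$ from both sides of the hypothesis, reducing to a head-only equation of the form $dup(\hat{C}_h, I')[1, \min\{i, c + |I'|\}] \cdot dup(\hat{C}_h, I'')[j+1, c + |I''|] = dup(\hat{C}_h, I''')$, and then translate this head-level equation to the $C$-level by the same square-free / exemplar argument used in part 2. The main subtlety, and the step I expect to be the trickiest, is the case $i > c + |I'|$, where the first piece of the concatenation extends past the head and into the tail of $dup(\hat{C}, I')$. Here the distinguished character \$ (present in $\hat{C}_t$ for $C \in \{B_0, B_1\}$, absent from $\hat{C}_h$) is indispensable: the unique \$ on the right-hand side $dup(\hat{C}, I''')$ sits at its very end, which together with the length accounting $i - j = |I'''| - |I''|$ pins down the alignment of the two pieces and justifies truncating the first piece to $c + |I'|$ on the $C$-side, giving precisely the $\min\{i, c + |I'|\}$ appearing in the conclusion.
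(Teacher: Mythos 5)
Your proposal is correct and follows essentially the same route as the paper, which justifies items 2 and 3 in two sentences by appealing to the one-to-one correspondence between duplication positions in $C'=dup(C,I)$ and in $\mu(C')=dup(\mu(C),I)$, a correspondence guaranteed by the square-freeness (indeed exemplarity) of $C$ and the square-freeness of $\mu(C)$. Your position maps $\phi_I$, the first-divergence/adjacent-distinct-characters argument, and the explicit handling of the passive tail $\hat{C}_t$ and the terminal \$ are a faithful and more detailed working-out of exactly that correspondence.
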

The first property directly follows from the definition of $\mu$. 
The following properties are immediate consequences of property 1 and the
 one-one correspondence between 
the position of the duplications in a chunk $C' = dup(C, I)$ and the position of
 the duplication in its image $\mu(C') = dup(\mu(C), I).$ This correspondence 
comes from the fact that both strings $C$ and $\mu(C)$ are square free.

\medskip
\noindent
{\bf Part 1 - The proof that $S\Rightarrow_k T$ implies  $\hat{S}\Rightarrow_k \hat{T}.$}
Assume that $S\Rightarrow_k T$ and 
let $T = T(0) \rightarrowtail T(1) \rightarrowtail \cdots \rightarrowtail T(k) = S$ be the corresponding series of contractions 
leading from $T$ to $S$. 
We recall a result that directly follows from \cite[Appendix, Claim 2]{Lafond}. 

\begin{fact} \label{fact:contractionsTS-core}
For each $\ell = 0, 1, \dots, k-1,$ the string $T(\ell)$ has a factorization  into elements of $\Omega,$ i.e., $T(\ell) = A_1 A_2 \dots, A_r$ where
 for each $i=1, \dots, r, $ we have 
$A_i \in \Omega.$ Moreover, adjacent chunks are over disjoint alphabets, i.e., for each $i=1, \dots, r-1,$  $\Sigma(A_i) \neq \Sigma(A_{i+1}).$
\end{fact}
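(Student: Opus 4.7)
The plan is to prove the fact by induction on $\ell$. For the base case $\ell=0$, the string $T=T(0)$ admits, directly from construction~(\ref{T-structure}), the explicit decomposition whose atoms are the singleton chunks $B_j$ for $j\ge 2$ together with the $\text{\L}$ separators, the almost-square-free chunks $B_0, B_0^{dup}\in\mathbb{B}_0$ and $B_1, B_1^{dup}\in\mathbb{B}_1$, and $X, X_1,\ldots,X_p, X^{dup}\in\mathbb{X}$. All of these belong to $\Omega$ by definition, and since $X, B_0, \ldots, B_t$ are exemplar strings over pairwise disjoint alphabets and $\text{\L}$ is a fresh symbol, consecutive chunks in this decomposition are over disjoint alphabets.

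For the inductive step, assume $T(\ell)=A_1A_2\cdots A_r$ with $A_i\in\Omega$ and $\Sigma(A_i)\cap\Sigma(A_{i+1})=\emptyset$ for all $i$, and consider the contraction $T(\ell)\rightarrowtail T(\ell+1)$ that deletes some square $YY$. The central sub-claim is that $YY$ lies entirely within a single chunk $A_i$. Granted this sub-claim, the contraction affects only $A_i$, replacing it by some $A_i'$. Since no singleton chunk contains a square, $A_i$ must belong to $\mathbb{B}_0\cup\mathbb{B}_1\cup\mathbb{X}$, whose elements are almost square-free; Lemma~\ref{lemma:almsqfreecontr} then guarantees that the contraction removes a single character and leaves $A_i'$ in the same sub-family, hence in $\Omega$. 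Since $\Sigma(A_i')=\Sigma(A_i)$, the disjoint-alphabet property between adjacent chunks is preserved.

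The main obstacle is the sub-claim that squares cannot straddle chunk boundaries. The plan is to argue by case analysis on how many chunks $YY$ overlaps. A square strictly crossing one boundary between two adjacent chunks is ruled out immediately by the disjoint-alphabet property, since the two halves of $Y$ would be forced to coincide on disjoint alphabets. A square spanning three or more chunks would force the surrounding chunk sequence to contain a period $A_iA_{i+1}\cdots A_kA_iA_{i+1}\cdots A_k$ (possibly truncated at its endpoints). The construction of $T$ in~(\ref{T-structure}) excludes such a period: the payloads $X_1,\ldots,X_p$ are pairwise distinct because they come from distinct index sets $I_j$; the macro-strings $\mathcal{B}^{01}_1,\ldots,\mathcal{B}^{01}_p$ are likewise pairwise distinct; and the $\text{\L}$ separators rigidly align any candidate period. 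This non-repetition at the chunk level is inductively maintained under the replacements $A_i\mapsto A_i'$, since the alphabets of the $A_i$ never change. This analysis is essentially that of \cite[Appendix, Claim 2]{Lafond}, which carries over to the present setting with no essential modification: the only enlargement of $\Omega$ relative to their construction is the inclusion of all chunks in $\mathbb{B}_0\cup\mathbb{B}_1\cup\mathbb{X}$, and these still have the same (disjoint) alphabets as $B_0$, $B_1$, and $X$ respectively.
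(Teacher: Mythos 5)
Your inductive step rests on the sub-claim that every contracted square $YY$ lies inside a single chunk, and that sub-claim is false. Contractions spanning several chunks are not an exotic possibility to be excluded; they are forced. The character $\text{\L}$ forms a chunk by itself, $T$ contains $2p+2$ occurrences of $\text{\L}$ while $S$ contains only one, and a contraction can only decrease the number of $\text{\L}$'s if the deleted copy of $Y$ contains an $\text{\L}$ together with characters from neighbouring chunks (two adjacent $\text{\L}$-chunks being impossible by the very disjoint-alphabet property you maintain). Hence \emph{every} contraction sequence $T \rightarrowtail_* S$ must use squares straddling chunk boundaries. Your proposed exclusion of these --- ``the construction of $T$ excludes such a period'' --- is also incorrect for the intermediate strings $T(\ell)$: the whole mechanism of the reduction is that after single-character contractions turn $B_0^{dup}$ into $B_0$ and $X_i$ into $X$ inside a block ${\cal B}^{01}_i X_i \text{\L}$, that block becomes identical to a suffix of an adjacent ${\cal B}^1_t X \text{\L}$ block, creating precisely the chunk-level repetition that a long contraction then removes.

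The real content of the statement is therefore the part you skip: showing that when $Y$ begins with a proper suffix of a chunk $A_i$ and ends with a proper prefix of a chunk $A_j$, the string left after deleting one copy still factorizes into elements of $\Omega$ (the leftover pieces $A_j[1,u']$ and $A_{j'}[u'+1,|A_{j'}|]$ must reassemble into a single chunk, which requires an argument about where such squares can start and end). This is exactly what the refined version of this statement in the appendix (Fact \ref{fact:contractionsTS}, imported from Lafond et al., items 2(b) and 2(c)) records, and what Proposition \ref{proposition:form} re-proves in the $5$-ary setting. Your base case and your treatment of squares confined to one chunk are fine, but without handling the multi-chunk case the induction does not go through.
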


Exploiting the factorization of $T(\ell)$ into chunks, guaranteed by Fact \ref{fact:contractionsTS-core}, and using  
property 2 in Fact \ref{mu-dup-modularity-2} we have that  the following claim holds. 
\noindent
\begin{Claim} \label{claim:equivalence1-new}
For $\ell = 0, 1, \dots, k-1,$ the contraction $T(\ell) \rightarrowtail T(\ell+1)$ implies a contraction $\mu(T(\ell)) \rightarrowtail \mu(T(\ell+1)).$
\end{Claim}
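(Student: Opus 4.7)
My plan is to case-split on how the contracted square sits relative to the chunk factorization of $T(\ell)$ guaranteed by Fact \ref{fact:contractionsTS-core}. I will write $T(\ell) = A\,YY\,B$ and $T(\ell+1) = A\,Y\,B$ for the contracted square $YY$, and fix the factorization $T(\ell) = A_1 A_2 \cdots A_r$ with $A_i \in \Omega$ and $\Sigma(A_i) \cap \Sigma(A_{i+1}) = \emptyset$.

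The easy case will be when $YY$ lies entirely inside a single chunk $A_i$. Since the exemplar chunks $\{\text{\L}, B_2, \dots, B_t\}$ are square-free, $A_i$ must belong to $\mathbb{B}_0 \cup \mathbb{B}_1 \cup \mathbb{X}$ and is therefore almost square-free; Lemma \ref{lemma:almsqfreecontr} then forces $|Y|=1$. Writing $A_i = dup(C, I)$ for the unique underlying exemplar $C \in \{B_0,B_1,X\}$ and duplication set $I$, the contraction removes one copy of the pair indexed by some $j \in I$, giving $A_i' = dup(C, I\setminus\{j\})$. I will invoke property 1 of Fact \ref{mu-dup-modularity-2} to obtain $\mu(A_i) = dup(\mu(C), I)$ and $\mu(A_i') = dup(\mu(C), I\setminus\{j\})$; since $\mu(C)$ is a substring of the square-free string $\mathcal{O}$, the two consecutive copies of $\mu(C)[j]$ in $\mu(A_i)$ form a length-$1$ square whose contraction yields exactly $\mu(A_i')$, and hence $\mu(T(\ell)) \rightarrowtail \mu(T(\ell+1))$.

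The harder case will be when $YY$ straddles chunk boundaries. I will first use the disjoint-alphabet condition on consecutive chunks to argue that matching $Y_1 = Y_2$ forces the internal chunk boundaries of the two copies to occur at the same relative offsets, so both $Y_1$ and $Y_2$ decompose as $(\text{suffix of a chunk})\cdot (\text{several full chunks})\cdot(\text{prefix of a chunk})$ with the middle full chunks paired up from the same element of $\Omega$ and with the endpoint pieces belonging to the same underlying exemplar chunk. I will then apply property 2 of Fact \ref{mu-dup-modularity-2} to each of the two matching pairs (prefix-of-chunk to prefix-of-chunk, and suffix-of-chunk to suffix-of-chunk): the original equality lifts to equality of the corresponding $\mu$-images. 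Concatenating, $\mu(Y_1) = \mu(Y_2)$ at the matching positions of $\mu(T(\ell)) = \mu(A_1)\cdots\mu(A_r)$, so a single contraction there produces $\mu(T(\ell+1))$.

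The main obstacle will be the alignment argument in the second case: pinpointing the chunk endpoints at which $Y_1$ and $Y_2$ begin and end, and then correctly instantiating Fact \ref{mu-dup-modularity-2}(2) on each of the two matching prefix/suffix pairs. The alphabet-disjointness of adjacent chunks provided by Fact \ref{fact:contractionsTS-core} is what makes this alignment forced, while the square-freeness of $\mathcal{O}$ is what keeps the $\mu$-images from introducing spurious squares that would break the one-to-one correspondence between contractions in $T(\ell)$ and contractions in $\mu(T(\ell))$.
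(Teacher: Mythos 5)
Your proposal follows essentially the same route as the paper's proof: the same case split (square inside a single chunk versus square straddling chunk boundaries), the same use of almost-square-freeness via Lemma \ref{lemma:almsqfreecontr} to force single-character contractions in the first case, the same alphabet-disjointness argument to align the two copies as (suffix of chunk)(whole chunks)(prefix of chunk) in the second, and the same invocation of properties 1 and 2 of Fact \ref{mu-dup-modularity-2} to transport the square to the $\mu$-images. The sketch is correct as far as it goes.
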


Applying this claim for each contraction from $T$ to $S,$ we have 
that a sequence of contractions exists from $\T$ to $\S,$ i.e., 
$\T =
 \mu(T(0))  \rightarrowtail \mu(T(1)) \rightarrowtail \cdots \rightarrowtail \mu(T(k)) = \S.$
Hence $\hat{S}\Rightarrow_k \hat{T},$ proving the ``only if''  part of Theorem \ref{lemma:equivalence}.

The proof of Claim \ref{claim:equivalence1-new} is based on the fact that  each contraction $T(\ell) \rightarrowtail T(\ell+1)$ 
is either a single character within a chunk $A_i$ of the
factorization of $T(\ell)$ given by Fact \ref{fact:contractionsTS-core} or it is a contraction that starts in a chunk $A_i$ and spans at least $A_i A_{i+1} A_{i+2} A_{i+3}.$ To see this 
let $DD$ be the square removed by the contraction and let us denote by $D^L$ and $D^R$ the left and right copy of $D$, with the first character of $D^L$ being 
in $A_i.$ 

(i) If the first character of $D^R$ is also in $A_i,$ then $D^L$ is a substring of $A_i.$ Since $\Sigma(A_i) \cap \Sigma(A_{i+1}) = \emptyset,$ 
$D^R,$ being equal to $D^L$ cannot extend to $A_{i+1}$. Since $A_i$ is almost square free, the only square are pairs of single characters, hence 
$D^LD^R$ must be a pair of adjacent single characters. In this case, by 1. in Fact \ref{mu-dup-modularity-2}, 
a single letter square is present in the chunk $\mu(A_i)$ and the contraction that removes it implies $\mu(T(\ell)) \rightarrowtail \mu(T(\ell+1)).$ 

(ii) If the first character of $D^R$ is not in $A_i,$  the first character of $D^L$ cannot be from $A_{i+1}$ which does not have any character occurring in $A_i.$
The analogous argument about the last characters of $D^R$ and $D^L$ implies that $D^LD^R$ spans at least four chunks.
Then, $D^L = A_i[u+1,|C_1|] D' A_j[1, u'],$ i.e., it starts with a suffix $A_i[u+1,|A_i|]$ of chunk $A_i$ 
and ends with a (possibly empty) prefix $A_j[1,u']$ of a chunk $A_j$ with $j \geq i+2$ and $D^R$ starts with the suffix of $A_j[u'+1, |A_j|]$ of $A_j$ and  
ends with the (possibly empty) prefix $A_{j'}[1,u']$ of a chunk $A_{j'}.$ The remaining parts of $D^L$ and $D^R$ must be equal and coincide with some sequence of chunks $D'$, i.e.,  $D^L = A_i[u+1,|A_i|] \cdot D' \cdot A_j[1, u'] = A_j[u'+1,|A_j|] D' A_{j'}[1, u'] = D^R.$ Then, 
by using  2. of Fact \ref{mu-dup-modularity-2}  we show that there is 
an equivalent square $\D^L\D^R$ in $\mu(T(\ell)$ corresponding to 
$D^L = \mu(C_1)[i+1, |\mu(C_1)|] \mu(D') \mu(C_2)[1, j]  = \mu(C_2)[j+1, |\mu(C_2)|] \mu(D') \mu(C_3)[1, j] = \D^R.$ 
Contracting it, implies $\mu(T(\ell)) \rightarrowtail \mu(T(\ell+1))$ also in this case.   The complete proof is in Appendix.

\medskip
\noindent
{\bf Part 2 - The proof that $\S \Rightarrow_k \T$ implies  $S\Rightarrow_k T.$} In this case, 
we show the possibiliy of mapping every sequence of contractions $\T \rightarrowtail \T(1) \rightarrowtail \T(2) \rightarrowtail \cdots \rightarrowtail  \S$ 
 to a sequence of contraction $T \rightarrowtail_k S.$ 
 Finding a characterization of the strings $\T(\ell)$ analogous to Fact \ref{fact:contractionsTS-core} is significantly more involved.
Although by construction $\T$ and $\S$ also have a factorization into chunks of $\hat{\Omega},$ most of these chunks (also adjacent ones) 
are from the same $4$-ary alphabet. Therefore,  showing that every intermediate string $\T(\ell)$ 
is also factorizable into chunks requires more care. 
Moreover, this fact, together with the invertibility of $\mu$, shows only that we can find strings 
$T(0) = \mu^{-1}(\T), T(1) = \mu^{-1}(\T(1)), \dots, T(k-1)= \mu^{-1}(\T(k-1)), \, $ each of which is factorizable into chunks of $\Omega.$ 
Since a contraction $\T(\ell) \rightarrowtail \T(\ell+1)$ can involve suffixes and prefixes of chunks, we need a deeper analysis of  
 such strings to characterize precisely the structure of the possible contractions $\T(\ell) \rightarrowtail \T(\ell+1)$ (Proposition \ref{proposition:form} in appendix). 
 Then, we show that  3. in Fact \ref{mu-dup-modularity-2} guarantees the existence of a contraction $T(\ell) \rightarrowtail T(\ell+1).$ 
Due to the space limitations, this part of the proof is deferred to the appendix. 
%
%
%
%

\remove{
\centerline{\bf ********************}

For each $q \in [2p]$, we define:
\begin{align}
\hat{\mathcal{B}}_q = {}& \hat{B}_q \$ \hat{B}_{q-1} \$ \dots \$ \hat{B}_2 \$ \hat{B}_1  \$ \hat{B}_{0} \$   &  
 \mathcal{\hat{B}}_q^0 = {}& \B_q \$ \B_{q-1} \$ \dots \$ \B_2 \$ \B_1 \$ \B_{0}^* \$  \label{eq:Bq-Bq0}  \\
\mathcal{\B}_q^1 = {}& \B_q \$ \B_{q-1} \$ \dots \$ \B_2 \$ \B^*_1 \$ \B_{0} \$   &   
\mathcal{\B}_q^{01} = {}& \B_q \$ \B_{q-1} \$ \dots \$ \B_2 \$ \B^*_1 \$ \B_{0}^* \$  \label{eq:Bq1-Bq01} 
\end{align}
Note that ${\cal \B}_q$ is square free since it is equivalent to a suffix of ${\cal O}$ in which we have inserted 
characters not in the alphabet of ${\cal O}.$ Also, we have that ${\cal \B}_q^1, {\cal \B}_q^0, {\cal \B}_q^{01}$ are
almost square free and by Lemmas \ref{lemma:asfdistance} and \ref{lemma:almsqfreecontr} and (\ref{eq:b0'-b0'*}), 
we also have that  for each 
 $a \in \{0, 1, 01\}$ it holds that 
 $dist_{TD}(\mathcal{\B}_q, \mathcal{\B}_q^a) = dist_{TD}(\mathcal{B}_q, \mathcal{B}_q^a).$
\remove{
\begin{itemize}
\item $dist_{TD}(\mathcal{\B}_q, \mathcal{\B}_q^0) = dist_{TD}(\B_{0}, \B_{0}^*)  = dist_{TD}(B_{0}, B_{0}^{dup})= dist_{TD}(\mathcal{B}_q, \mathcal{B}_q^0)$ 
\item $dist_{TD}(\mathcal{\B}_q, \mathcal{\B}_q^1) = dist_{TD}(\B_{1}, \B_{1}^*) = dist_{TD}(B_{1}, B_{1}^{dup}) = dist_{TD}(\mathcal{\B}_q, \mathcal{\B}_q^1)$
\item $dist_{TD}(\mathcal{\B}_q^1, \mathcal{\B}_q^{01}) = dist_{TD}(B_{0'}, B_{0'}^*) = dist_{TD}(B_{0}, B_{0}^{dup})= dist_{TD}(\mathcal{B}_q^1, \mathcal{B}_q^{01})$ 
\item $dist_{TD}(\mathcal{\B}_q^0, \mathcal{\B}_q^{01}) = dist_{TD}(B_{0''}, B_{0''}^*) = dist_{TD}(B_{1}, B_{1}^{dup}) =  dist_{TD}(\mathcal{B}_q^0, \mathcal{B}_q^{01}).$
\end{itemize}
}
\medskip

We can now define our ternary analog of strings $X_i$ by replicating in $\X$ the duplications that are used to obtain $X_i$ from $X.$
Note that since $X$ is exemplar and $X \subseteq X_i \subseteq X^{dup}$ we have that 
there are indices $1 \leq j_1 <  \cdots < j_k \leq |X|$ such that $X_i = dup(X, j_1, \dots, j_k).$ Then, we define
$\X_i = dup(\X, j_1, \dots, j_k).$ It immediately follows that $\X \subseteq \X_i \subseteq \X^{dup}$ and 
$dist_{TD}(\X , \X_i) = dist_{TD}(X , X_i).$ 
}
%
%


\vspace{-0.3cm}

\section{TD-distance for purely alternating strings} 
\label{Chapter2} 

\vspace{-0.2cm}

In this section, we investigate the existence of polynomial time algorithms to decide whether a purely alternating string $S$ can be transformed into another purely alternating string $T$ through a  series of duplications, i.e., if $S \Rightarrow_* T$.

\begin{definition}
Fix strings  $S= s_1^{l_1} s_2^{l_2} \dots s_n^{l_n} $ and $T = t_1^{l'_1} t_2^{l'_2} \dots t_m^{l'_m}.$
We say that the run $s_i^{l_i}$ \textit{matches} the run $t_j^{l'_j}$ if  $s_i = t_j$ and $l_i \leq l'_j$. We also say that $S$ matches $T$ 
(and write $S \preceq T.$) if $n=m$ and for  $i = 1, \dots, n$ we have  $s_i^{l_i}$ matches $t_j^{l'_j}$. 
\end{definition}
The existence of a string $S'$ that matches $T$ and that satisfies $S \Rightarrow_* S'$, implies that $S \Rightarrow T$:   
we can convert $S'$ into $T$ by duplications on single letters. 
\begin{definition} \label{defi:normal-duplications}
Given two $q$-ary strings $S$ and $T$, we say that the operation $S=AXB \Rightarrow AXXB=T$ is a \textit{normal duplication} if 
one of the following conditions holds: (i) $X$ is a $q$-ary string with exactly $q$ runs such that the first and the last run are of length 1; 
(ii) $X$ is a single character.
\end{definition}
We write $S \Rightarrow^N T$ if there exists a normal duplication converting $S$ into $T$. We write $S\Rightarrow^N_k T$ if there exist $S_1,\dots,S_{k-1}$ such that $S \Rightarrow^N S_1 \Rightarrow^N \dots \Rightarrow^N S_{k-1} \Rightarrow^N T$. 
We  write $S \Rightarrow^N_* T$ if there exists some $k$ such that 
$S \Rightarrow^N_k T$. 

In perfect analogy with the definition of contractions given in section \ref{sec:preliminary}, we  define {\em normal contractions}: 
$T\rightarrowtail^N_k S$ and $T \rightarrowtail^N_* S$  by  $T\rightarrowtail^N_k S$ if and only if $S\Rightarrow^N_k T$ and $T\rightarrowtail^N_* S$ if and only if $S\Rightarrow^N_* T$.


Intuitively, normal duplications are effective in converting a string $S$ into a string $S'$ 
that matches a string $T$ because they keep the resulting string purely alternating and create new runs that are as small as possible; 
these runs allow the string $S'$ to match many strings. 


We proceed to characterizing pairs of strings $S$ and $T$ such that $S \Rightarrow^N_* T.$ 

\begin{lemma} \label{lemma:function}
Fix $2\leq q \leq 5.$ Let $S = s_0^{l_1} s_1^{l_2} \dots s_n^{l_n}$ and $T = t_0^{l'_1} t_1^{l'_2} \dots t_m^{l'_m}$ 
be purely alternating strings over the same $q$-ary alphabet $\Sigma = \{0,1,2,\dots, q-1\}.$ 
 Then, 
$S \Rightarrow^N_* T$ if and only if there exists 
a function $f:\{1, \dots, n-q+2\}\mapsto\{1, \dots, m-q+2\}$ 
such that: (1.) $f(1) = 1$ and $f(n-q+2) = m-q+2$; 
(2.) $f(i) = j \implies s_i = t_j$ and  for each $u=0, \dots, q-2$ we have that $l_{i+u} \leq l'_{j+u}$;
(3.)  $f(i) = j$ and $f(i') = j'$ and $i<i' \implies j<j'$;
(4.) if $q = 5$ and $f(i) = j$ and $f(i+1) = j' \neq j+1 \implies$ there exists a substring 
    $M$ in $T$ starting in a position $p$ such that $j \leq p \leq j'$ 
    with the form $M=s_{i+3}^{l'_p}, s_{i+4}^{l'_{p+1}}, \dots  s_{i+q}^{l'_{p+q-3}}, s_{i+q+1}^{l'_{p+q-2}}$ 
    such that for each $u=0, 1, \dots, q-3$ it holds that $l_{i+3+u} \leq l'_{p+u}$ and $l_{i+1} \leq l'_{p+q-2}.$
\end{lemma}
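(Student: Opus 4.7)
The plan is to prove both directions of the equivalence by induction, with the function $f$ serving as a combinatorial certificate for the normal-duplication process.

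For the forward direction ($S \Rightarrow^N_* T$ implies $f$ exists), I would induct on the number $k$ of normal duplications in a witnessing sequence $S = S_0 \Rightarrow^N S_1 \Rightarrow^N \cdots \Rightarrow^N S_k = T$. The base $k = 0$ is trivial: $S = T$ and the identity function satisfies all conditions. For the inductive step, a single normal duplication $S_0 \Rightarrow^N S_1$ is either of type (ii), which preserves the number of runs (so the function for $S_1$ from the inductive hypothesis transfers directly), or of type (i), which adds exactly $q$ new runs in a contiguous block. In the latter case I would define $f$ for $S_0$ from the function $f_1$ for $S_1$ by composing with the natural run-to-run embedding of $S_0$ into $S_1$ that skips over the $q$ inserted runs. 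Monotonicity and boundary conditions are immediate; conditions (2) and (4) require verification using the explicit form of the inserted block.

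For the backward direction (existence of $f$ implies $S \Rightarrow^N_* T$), I would induct on the number of ``skips'' in $f$, i.e., positions $i$ with $f(i+1) > f(i) + 1$. When no skips exist, conditions (1) and (3) force $n = m$ and $f(i) = i$; condition (2) then gives $s_i = t_i$ and $l_i \leq l'_i$ for all $i$, and single-character duplications (type (ii)) extend each run of $S$ to match the corresponding run of $T$. When a skip exists at position $i$ with $f(i) = j$ and $f(i+1) = j'$, the interval between $j$ and $j'$ in $T$ contains runs that can be produced by a type (i) normal duplication on $S$: condition (2) applied at $i$ and $i+1$ constrains the endpoints, and condition (4) (when $q = 5$) produces the substring $M$ that locates and validates the duplication. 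After performing this duplication to obtain $S''$, I would derive an updated function with strictly fewer skips for $S''$ and $T$ and apply the inductive hypothesis.

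The main obstacle is handling condition (4) for $q = 5$. The substring $M$ has four runs with characters $s_{i+3}, s_{i+4}, s_{i+5}, s_{i+6}$, which by the $5$-cyclic alternation are the same as $s_{i+3}, s_{i+4}, s_i, s_{i+1}$; these correspond exactly to the last four runs that a type (i) normal duplication on the group starting at run $i+1$ of $S$ would produce, since such a duplication inserts a copy of a $q$-run group whose final run has length one. The constraints $l_{i+3+u} \leq l'_{p+u}$ bound the body of the inserted group, while the atypical bound $l_{i+1} \leq l'_{p+q-2}$ matches the length-one character opening the duplicated block (which in turn must still be accommodated in $T$ after potential subsequent type (ii) extensions). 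For $q \leq 4$, the $(q-1)$-wide matching window in condition (2), together with monotonicity and boundary conditions, is already sufficient to determine the skip structure; for $q = 5$, a genuine ambiguity arises that only condition (4) resolves. Establishing that condition (4) is preserved under each type (i) duplication (forward) and that it suffices to locate the required duplication at a skip (backward) is the most delicate combinatorial part of the proof and will require a careful case analysis of how the four-run tail of an inserted group interacts with the surrounding structure.
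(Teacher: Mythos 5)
Your overall architecture matches the paper's: induction on the number of normal duplications for the forward direction (you peel off the first duplication where the paper peels off the last, which is an inessential difference), and for the converse a process that repeatedly performs a type-(i) duplication at the leftmost ``skip'' of $f$ and re-derives a certificate for the longer string. However, as written the proposal has two genuine gaps. First, your backward induction is on the number of skips, but a single type-(i) duplication need not reduce that number: if $f(i+1)=f(i)+1+uq$ with $u>1$, one duplication inserts only $q$ runs, so the updated function still has a skip between (shifted versions of) the same two positions. The paper instead inducts on $\Delta=(m-n)/q$, which does strictly decrease; alternatively you would have to close each skip with a whole batch of duplications before recounting skips, and for $q=5$ producing that batch is exactly where the difficulty lies.

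Second, the content of the lemma for $q=5$ is concentrated in precisely the steps you defer. Forward, you must show condition (4) survives each type-(i) duplication; the paper does this by splitting into the case where the gap at $(i,i+1)$ pre-existed (then $M$ is a $(q-1)$-run substring of $S_{\ell-1}$ that a normal duplication can only translate, never destroy) and the case where the gap is newly created (then $p=f_\ell(i)+3$ works because the inserted block ends with $\dots s_{i+3}^{l_{i+3}}s_{i+4}^{1}s_i^{1}s_{i+1}^{l_{i+1}}\dots$). Backward, knowing only that $M$ sits \emph{somewhere} in $[j,j']$ is not enough to name a single duplication: the paper needs three distinct duplication schedules according to whether $p=j+3$, $p=j'-3$, or $j+3<p<j'-3$, each explicitly threading the four-run window $M$ through the run-length-one padding so that it lands at position $p$ while the run $s_{i+1}^{l_{i+1}}$ lands at position $f(i+1)$. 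Also, your remark that the bound $l_{i+1}\leq l'_{p+q-2}$ ``matches the length-one character opening the duplicated block'' is off: the last run of $M$ is the full-length copy $s_{i+1}^{l_{i+1}}$ of run $i+1$, not a length-one run; the length-one runs are the two middle runs $s_{i+4}^{1}s_{i}^{1}$ of $M$. Without these explicit constructions the ``if'' direction is not established.
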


It turns out that for for alphabets of size $\leq 5$ the set of normal duplications is as ``expressive'' as the set of all  
possible duplications. We use the following claim (see the appendix for the proof).

\begin{Claim}\label{claim:narymodule}
Let $S$ and $T$ be $q$-ary purely alternating strings such that
  $S \Rightarrow_* T$. Let $AXA' \Rightarrow AXXA'$ be one of the duplications of the sequence leading from $S$ to $T$. Then,  $|RLE(X)| \bmod q \leq 1$.
\end{Claim}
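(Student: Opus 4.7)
My plan is to prove by reverse induction on the position $\ell$ along the duplication sequence $S = S_0 \Rightarrow S_1 \Rightarrow \cdots \Rightarrow S_k = T$ the stronger statement that every $S_\ell$ is purely alternating and the substring $X$ used in each duplication $S_{\ell-1} \Rightarrow S_\ell$ satisfies $|RLE(X)| \bmod q \in \{0,1\}$. The base case $\ell = k$ is immediate since $S_k = T$ is purely alternating by hypothesis.

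For the inductive step, I would assume $S_{\ell+1} = AXXA'$ is purely alternating, where the duplication is $S_\ell = AXA' \Rightarrow AXXA'$. The key local property to exploit is that in a purely alternating $q$-ary string with the natural cyclic order, any two adjacent characters are either equal (same run) or differ by exactly $+1 \pmod q$ (consecutive runs). Since $X$ is a substring of $S_{\ell+1}$, its $r = |RLE(X)|$ run characters $c_1, c_2, \dots, c_r$ must form a consecutive arc of the cyclic order, so $c_j \equiv c_1 + j - 1 \pmod q$ and in particular $c_r \equiv c_1 + r - 1 \pmod q$. The last character of the first copy of $X$ (value $c_r$) is adjacent in $S_{\ell+1}$ to the first character of the second copy (value $c_1$); applying the local property, either $c_1 \equiv c_r \pmod q$ (giving $r \equiv 1 \pmod q$) or $c_1 \equiv c_r + 1 \pmod q$ (giving $r \equiv 0 \pmod q$), which is exactly the desired bound.

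To complete the inductive step I would also verify that $S_\ell = AXA'$ is purely alternating. Removing one copy of $X$ preserves every consecutive-character pair of $S_\ell$ that already appeared in $S_{\ell+1}$, leaving only the new interface between the remaining copy of $X$ and $A'$ to check. But the two characters bordering this new interface ($c_r$ at the end of $X$ and the first character $\beta$ of $A'$) are exactly the two characters bordering the interface between the second copy of $X$ and $A'$ in the purely alternating $S_{\ell+1}$; hence $\beta - c_r \in \{0,1\} \pmod q$ and $S_\ell$ satisfies the purely-alternating condition. I do not foresee significant technical difficulty: the argument is entirely local, and the edge cases $|A| = 0$ or $|A'| = 0$ are handled trivially by the convention that every purely alternating string starts with $0$.
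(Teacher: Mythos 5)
Your proof is correct and follows essentially the same route as the paper's: the key step in both is that the junction between the two copies of $X$ creates an adjacency $(c_r,c_1)$ which, by the cyclic-order arithmetic $c_r \equiv c_1 + |RLE(X)| - 1 \pmod q$, is admissible only when $|RLE(X)| \bmod q \in \{0,1\}$. The only difference is packaging: the paper argues forward that a bad adjacency can never be eliminated by later duplications and would therefore survive into $T$, whereas you run a backward induction showing every intermediate string is purely alternating, which makes explicit the fact (used implicitly in the paper) that the string containing $X$ has its runs following the cyclic order.
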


\begin{lemma} \label{lemma:normal-equivalence}
Let $S$ and $T$ be purely alternating strings over the same alphabet $\Sigma$ of size $\leq 5$. Then 
$S \Rightarrow_* T$ if and only if 
$S \Rightarrow^N_* T.$  
\end{lemma}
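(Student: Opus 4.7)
The ``if'' direction, $S \Rightarrow^N_* T \implies S \Rightarrow_* T$, is immediate, since every normal duplication is in particular an ordinary duplication.

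For the ``only if'' direction, fix a duplication sequence $S = U_0 \Rightarrow U_1 \Rightarrow \cdots \Rightarrow U_k = T$. The plan is to transform this sequence, step by step, into a normal duplication sequence with the same endpoints. The key preparatory observation is that every intermediate string $U_i$ is purely alternating. This would follow by induction on $i$ from Claim \ref{claim:narymodule}: each duplication factor $X$ satisfies $|RLE(X)| \bmod q \le 1$, so the duplication either extends an existing run (when $|RLE(X)| \le 1$), or inserts a block whose run pattern follows the cyclic order of $\Sigma$ (when $|RLE(X)|$ is a positive multiple of $q$), or inserts such a block while merging the two middle copies of a repeated endpoint character into one longer run (when $|RLE(X)| \equiv 1 \pmod q$ and $|RLE(X)| \ge q+1$). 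In every case the purely alternating structure is preserved.

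With all $U_i$ purely alternating, it suffices to prove the following local statement: for purely alternating $U$ and $V$, if $U \Rightarrow V$ by a single duplication then $U \Rightarrow^N_* V$. I would prove this by induction on $m = |RLE(X)|$, where $X$ is the duplicated factor. For $m = 1$ the duplication $c^a \to c^{2a}$ is realized by $a$ single-character normal duplications (type (ii) of Definition \ref{defi:normal-duplications}). For $m = q$, let $X'$ be the substring of $X$ obtained by trimming its first and last runs to length $1$: this $X'$ has exactly $q$ runs, with both extreme runs of length $1$, and is therefore eligible for a type (i) normal duplication at the appropriate position in $U$; the resulting string differs from $V$ only in the lengths of the two runs affected by the trimming, so a sequence of single-character normal duplications brings it exactly to $V$. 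For $m > q$, one first peels off a single cycle by a normal duplication analogous to the $m = q$ case, verifies that the new intermediate string is still purely alternating and that the remaining change to $V$ is still a single duplication governed by Claim \ref{claim:narymodule} but with strictly smaller $m$, and then invokes the inductive hypothesis.

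The main obstacle will be this large-$m$ case, particularly when $m \equiv 1 \pmod q$: here the duplication not only inserts a block of new runs but also merges two runs at the join, so the peeled-off normal duplication must be chosen carefully so that the residual transformation still fits the same framework (in particular, so that the residual factor still has $|RLE| \bmod q \le 1$). Careful identification of the correct location and length-$1$ extremes of the peeled factor, together with the observation that type (ii) duplications can always be used to fine-tune run lengths, makes the decomposition go through. The hypothesis $q \le 5$ enters precisely through Claim \ref{claim:narymodule} and restricts the admissible values of $m$ enough to keep the case analysis tractable.
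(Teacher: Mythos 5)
The overall architecture of your argument is the same as the paper's: the ``if'' direction is trivial, Claim~\ref{claim:narymodule} restricts every duplication in the given sequence to $|RLE(X)| \bmod q \le 1$ (so all intermediates are purely alternating), and the problem reduces to simulating a single duplication $U = AXB \Rightarrow AXXB = V$ by normal duplications. Your treatment of $m=1$ and $m=q$ (trim the extreme runs of $X$ to length $1$, apply one type-(i) normal duplication, finish with type-(ii) duplications) is correct and is what the paper does implicitly.

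The gap is in the inductive step for $m > q$. After you peel off one cycle, the residual transformation to $V$ is \emph{not} a single tandem duplication with a smaller factor, and it cannot in general be completed at all by the same recipe. Concretely, for $q=5$ and $X = 0^{l_1}1^{l_2}2^{l_3}3^{l_4}4^{l_5}0^{l_6}1^{l_7}2^{l_8}3^{l_9}4^{l_{10}}$, peeling the first group leaves the runs $1^{l_2}2^{l_3}3^{l_4}$ stranded between two blocks of length-$1$ runs; after the second peel a single residue run $3^{l_4}$ ends up in the position where $XX$ has $3^{l_9}$, so the construction succeeds only if $l_4 \le l_9$. Your remark that type-(ii) duplications can ``fine-tune run lengths'' does not rescue this: they can only lengthen runs, never shorten an over-long residue. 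The paper's proof (Claim~\ref{claim:quinarynormal}) resolves this by exhibiting \emph{two} normal-duplication sequences, one peeling from the left and one from the right, whose residues are $3^{l_4}$ and $3^{l_9}$ respectively, and observing that at least one of the two matches $XX$ according to whether $l_4 \le l_9$ or not; this dichotomy is the heart of the proof. Relatedly, your claim that the hypothesis $q\le 5$ ``enters precisely through Claim~\ref{claim:narymodule}'' is wrong: that claim holds for every $q$. The bound $q\le 5$ is needed exactly in the step you elided, because for $q=5$ there is a single residue run and the left/right dichotomy covers both orderings, whereas for $q=6$ there are two residue runs with conflicting requirements --- which is why the paper's counterexample $X = 0^{2}1^{1}2^{2}3^{1}4^{2}5^{1}0^{1}1^{2}2^{1}3^{2}4^{1}5^{2}$ defeats every normal simulation. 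Without this case analysis your induction does not close.
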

\begin{proof} 
Using Claim \ref{claim:narymodule}, we show that any duplication 
can be simulated by 
normal duplications. We give the complete argument only for 
for the case $|\Sigma| = 5.$  The cases $|\Sigma| \in \{2,3,4\}$ can be showed analogously. 


\begin{Claim}
\label{claim:quinarynormal}
Let $S$ and $T$ be a $5$-ary purely alternating strings over the alphabet $\Sigma = \{0,1,2,3,4\}.$ If there exists a duplication $S \Rightarrow T$, then we can create a series of normal duplications $S \Rightarrow \dots \Rightarrow T'$ such that $T'$ matches $T$.
\end{Claim}

\begin{proof}
Let $S = AXB \Rightarrow AXXB = T$ be the original duplication. Like before, we create a string that matches $XX$ starting from $X$ through normal duplications, depending on how many runs are contained in $X$. By Claim \ref{claim:narymodule} the only possible cases are $|RLE(X)| \bmod 5 \in \{0,1\}$

\noindent
{\em Case 1.} $|RLE(X)| = 1.$ It means that the only effect of the original duplication is to extend one of the runs of $S$. For this reason  $S$ must already match $T$.

\noindent
{\em Case 2.} $|RLE(X)| \bmod 5 = 0$, we suppose that the string $X$ starts with a $0$ (rotate the characters if it starts with any other symbol), 
so $X$ has the form $X = 0^{l_1} 1^{l_2} 2^{l_3} 0^{l_4}\dots 1^{l_{n-1}} 2^{l_n}$. If $|RLE(X)|$ is equal to $5$ then it is aleady  a normal duplication.

If $|RLE(X)| = 10$, we consider the following two sequences of normal duplications to match $XX$ (the duplicated part is underlined): 
\remove{
\begin{align*}
 X ={}& \underline{0^{l_1} \ 1^{l_2} \ 2^{l_3} \ 3^{l_4} \ 4^{l_5}} \ 0^{l_6} \ 1^{l_7} \ 2^{l_8} \ 3^{l_9} \ 4^{l_{10}} \\
\Rightarrow{}& 0^{l_1} \ 1^{l_2} \ \underline{2^{l_3} \ 3^{l_4} \ 4^{1} \ 0^{1} \ 1^{l_2}} \ 2^{l_3} \ 3^{l_4} \ 4^{l_5} \ 0^{l_6} \ 1^{l_7} \ 2^{l_8} \ 3^{l_9} \ 4^{l_{10}} \\
\Rightarrow{}& 0^{l_1} \ 1^{l_2} \ 2^{l_3} \ 3^{l_4} \ 4^{1} \ 0^{1} \ 1^{1} \ 2^{1} \ \boxed{3^{l_4}} \ 4^{1} \ 0^{1} \ 1^{l_2} \ 2^{l_3} \ 3^{l_4} \ 4^{l_5} \ 0^{l_6} \ 1^{l_7} \ 2^{l_8} \ 3^{l_9} \ 4^{l_{10}} = X'\\
XX ={}& 0^{l_1} \ 1^{l_2} \ 2^{l_3} \ 3^{l_4} \ 4^{l_5} \ 0^{l_6} \ 1^{l_7} \ 2^{l_8} \ \boxed{3^{l_9}} \ 4^{l_{10}} \ 0^{l_1} \ 1^{l_2} \ 2^{l_3} \ 3^{l_4} \ 4^{l_5} \ 0^{l_6} \ 1^{l_7} \ 2^{l_8} \ 3^{l_9} \ 4^{l_{10}}
\end{align*}
}
\begin{align*}
(i)  \; \; X ={}& \underline{0^{l_1} 1^{l_2}  2^{l_3}  3^{l_4}  4^{l_5}}  0^{l_6}  1^{l_7}  2^{l_8}  3^{l_9}  4^{l_{10}} 
\Rightarrow 0^{l_1}  1^{l_2}  \underline{2^{l_3}  3^{l_4}  4^{1}  0^{1}  1^{l_2}} 2^{l_3}  3^{l_4}  4^{l_5}  0^{l_6}  1^{l_7}  2^{l_8}  3^{l_9}  4^{l_{10}} \\
\Rightarrow{}& 0^{l_1}  1^{l_2} 2^{l_3} 3^{l_4} 4^{1} 0^{1} 1^{1} 2^{1}  \boxed{3^{l_4}} 4^{1} 0^{1} 1^{l_2} 2^{l_3} 3^{l_4} 4^{l_5} 0^{l_6} 1^{l_7}  2^{l_8}  3^{l_9} 4^{l_{10}} = X'\\
(ii) \; \; X ={}& 0^{l_1}  1^{l_2} 2^{l_3} 3^{l_4} 4^{l_5} \underline{0^{l_6} 1^{l_7}  2^{l_8}  3^{l_9}  4^{l_{10}}} 
\Rightarrow 0^{l_1}  1^{l_2}  2^{l_3} 3^{l_4} 4^{l_5} 0^{l_6}  1^{l_7}  \underline{2^{l_8}  3^{l_9}  4^{1} 0^{1} 1^{l_7}}  2^{l_8}  3^{l_9}  4^{l_{10}} \\
\Rightarrow{}& 0^{l_1}  1^{l_2}  2^{l_3}  3^{l_4}  4^{l_5}  0^{l_6}  1^{l_7}  2^{l_8}  3^{l_9}  4^{1}  0^{1}  1^{1}  2^{1}  \boxed{3^{l_9}}  4^{1}  0^{1}  1^{l_7}  2^{l_8}  3^{l_9}  4^{l_{10}} = X''\\
XX ={}& 0^{l_1} 1^{l_2} 2^{l_3}  3^{l_4}  4^{l_5}  0^{l_6}  1^{l_7}  2^{l_8}  \boxed{3^{l_9}}  4^{l_{10}}  0^{l_1}  1^{l_2}  2^{l_3}  3^{l_4}  4^{l_5}  0^{l_6}  1^{l_7}  2^{l_8}  3^{l_9}  4^{l_{10}}
\end{align*}
\remove{
Here we see a problem: for $X'$ to match $XX$, we must have  $l_4 \leq l_9$. If we are in the case that $l_4 > l_9$, 
we can apply a different series of normal duplications:
\remove{
\begin{align*}
X ={}& 0^{l_1} \ 1^{l_2} \ 2^{l_3} \ 3^{l_4} \ 4^{l_5} \ \underline{0^{l_6} \ 1^{l_7} \ 2^{l_8} \ 3^{l_9} \ 4^{l_{10}}} \\
\Rightarrow{}& 0^{l_1} \ 1^{l_2} \ 2^{l_3} \ 3^{l_4} \ 4^{l_5} \ 0^{l_6} \ 1^{l_7} \ \underline{2^{l_8} \ 3^{l_9} \ 4^{1} \ 0^{1} \ 1^{l_7}} \ 2^{l_8} \ 3^{l_9} \ 4^{l_{10}} \\
\Rightarrow{}& 0^{l_1} \ 1^{l_2} \ 2^{l_3} \ 3^{l_4} \ 4^{l_5} \ 0^{l_6} \ 1^{l_7} \ 2^{l_8} \ 3^{l_9} \ 4^{1} \ 0^{1} \ 1^{1} \ 2^{1} \ \boxed{3^{l_9}} \ 4^{1} \ 0^{1} \ 1^{l_7} \ 2^{l_8} \ 3^{l_9} \ 4^{l_{10}} = X'\\
XX ={}& 0^{l_1} \ 1^{l_2} \ 2^{l_3} \ 3^{l_4} \ 4^{l_5} \ 0^{l_6} \ 1^{l_7} \ 2^{l_8} \ 3^{l_9} \ 4^{l_{10}} \ 0^{l_1} \ 1^{l_2} \ 2^{l_3} \ \boxed{3^{l_4}} \ 4^{l_5} \ 0^{l_6} \ 1^{l_7} \ 2^{l_8} \ 3^{l_9} \ 4^{l_{10}}
\end{align*}
}
\begin{align*}
X ={}& 0^{l_1}  1^{l_2} 2^{l_3} 3^{l_4} 4^{l_5} \underline{0^{l_6} 1^{l_7}  2^{l_8}  3^{l_9}  4^{l_{10}}} 
\Rightarrow 0^{l_1}  1^{l_2}  2^{l_3} 3^{l_4} 4^{l_5} 0^{l_6}  1^{l_7}  \underline{2^{l_8}  3^{l_9}  4^{1} 0^{1} 1^{l_7}}  2^{l_8}  3^{l_9}  4^{l_{10}} \\
\Rightarrow{}& 0^{l_1} \ 1^{l_2} \ 2^{l_3} \ 3^{l_4} \ 4^{l_5} \ 0^{l_6} \ 1^{l_7} \ 2^{l_8} \ 3^{l_9} \ 4^{1} \ 0^{1} \ 1^{1} \ 2^{1} \ \boxed{3^{l_9}} \ 4^{1} \ 0^{1} \ 1^{l_7} \ 2^{l_8} \ 3^{l_9} \ 4^{l_{10}} = X'\\
XX ={}& 0^{l_1} \ 1^{l_2} \ 2^{l_3} \ 3^{l_4} \ 4^{l_5} \ 0^{l_6} \ 1^{l_7} \ 2^{l_8} \ 3^{l_9} \ 4^{l_{10}} \ 0^{l_1} \ 1^{l_2} \ 2^{l_3} \ \boxed{3^{l_4}} \ 4^{l_5} \ 0^{l_6} \ 1^{l_7} \ 2^{l_8} \ 3^{l_9} \ 4^{l_{10}}
\end{align*}
In this case $X'$ matches $XX$ if $l_4 \geq l_9$, therefore for $|RLE(X)|=10$ we can construct the series of normal duplications in both cases. 
}
It is easy to see that according to whether $l_4 \leq l_9$ or not, we have that either $X'$ or $X''$ matches $XX.$ Hence, in either case, we have the desired sequence of duplications proving the claim. 

Finally, let us assume that $X$ contains $5r$ runs for some $r > 2$. Then, in order to produce a 
 string through normal duplications that matches $XX$, it suffices to execute the duplications explained before, then continue with $|RLE(X)|/5-2$ normal duplications containing the four adjacent runs of length $1$ plus another adjacent run. This pushes to the right the original runs of $X$ remaining, together with $\boxed{3^{l_4}}$ (in the first case) or $\boxed{3^{l_9}}$ (in the second case). We can see that $\boxed{3^{l_4}}$ in $X'$ will be in the same position as $\boxed{3^{l_9}}$ in $XX$ and vice versa. 
\end{proof}
\end{proof}
The previous two lemmas imply the following
\begin{theorem} \label{theo:main-purely}
Let $\Sigma$ be an alphabet of size $\leq 5$. There exists a  algorithm that for every pair of 
purely alternating strings $S$ and $T$ over $\Sigma$ can decide in linear time whether $S \Rightarrow_* T.$
\end{theorem}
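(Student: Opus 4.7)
The plan is to combine Lemma \ref{lemma:normal-equivalence} and Lemma \ref{lemma:function} into a concrete greedy algorithm. By Lemma \ref{lemma:normal-equivalence}, deciding whether $S \Rightarrow_* T$ is equivalent to deciding whether $S \Rightarrow^N_* T$, which by Lemma \ref{lemma:function} is equivalent to deciding the existence of a function $f$ satisfying conditions (1)--(4). So the whole task reduces to testing the existence of such an $f$ in linear time.

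The first phase of the algorithm computes $RLE(S)=(s_i,l_i)_{i=1}^n$ and $RLE(T)=(t_j,l'_j)_{j=1}^m$ in time $O(|S|+|T|)$, and verifies the easy necessary conditions: $S$ and $T$ must be purely alternating over a common alphabet of size $q\le 5$, their first characters must agree, and $n\equiv m\pmod q$ (since a purely alternating string's character at position $k$ is determined by $k\bmod q$ and the common first character, this makes $s_{n-q+2}=t_{m-q+2}$ feasible). If any of these fail, output ``no''.

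The second phase is a left-to-right greedy sweep. Set $f(1)=1$ after checking that $l_{1+u}\le l'_{1+u}$ for $u=0,\dots,q-2$. Maintain a pointer $j$ in $T$ initialized to $1$. For $i=2,\dots,n-q+2$, advance $j$ to the smallest index strictly greater than $f(i-1)$ such that $t_j=s_i$ (equivalently, the correct residue modulo $q$) and the $q-1$ length inequalities $l_{i+u}\le l'_{j+u}$ for $u=0,\dots,q-2$ all hold; set $f(i)=j$. If $q=5$ and $j>f(i-1)+1$, run a second pointer through the segment of $T$ between $f(i-1)$ and $j$ to verify the witness position $p$ required by condition (4); this pointer also advances monotonically across the whole sweep, so its total work is $O(m)$. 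Finally, check that $f(n-q+2)=m-q+2$. Since both pointers move only forward through the RLE, the overall running time is $O(n+m)=O(|S|+|T|)$.

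The correctness argument is a standard greedy exchange: by induction on $i$, if any valid function $f^*$ exists then the greedy choice $f(i)\le f^*(i)$ preserves all subsequent constraints, since conditions (2) and (3) are monotone (smaller $f(i)$ leaves more room for later values of $f$) and condition (4) only requires a witness position $p$ in the interval $[f(i),f(i+1)]$, an interval that only grows when $f(i)$ shrinks. The terminal condition $f(n-q+2)=m-q+2$ then either holds or no valid $f^*$ can exist at all. The only delicate point, and the step I would spend most care on, is certifying that condition (4) can be verified incrementally with a single monotone auxiliary pointer in $T$: the witness $p$ used at step $i$ lies within the window just consumed by the main greedy pointer, so no backtracking is needed, and the amortized analysis yields the claimed linear bound.
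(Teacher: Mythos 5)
Your proposal follows essentially the same route as the paper: reduce to normal duplications via Lemma~\ref{lemma:normal-equivalence}, characterize reachability via the function $f$ of Lemma~\ref{lemma:function}, and decide the existence of $f$ by a left-to-right greedy sweep justified by an exchange argument, with monotone pointers giving the $O(|S|+|T|)$ bound (this is exactly the paper's Algorithm~\ref{alg:narylinear}). The one detail to tighten is the $q=5$ case: if the minimal $j$ satisfying the length inequalities admits no witness $p$ in the interval required by condition (4), your algorithm must continue advancing $j$ rather than reject (the paper sidesteps this by first locating the smallest valid witness $p$ and only then searching for $j \geq p+q-2$), since a larger $j$ enlarges the witness interval and may succeed.
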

\begin{proof}
The algorithm computes the run length encoding of $S$ and $T$ and then decides about the existence of the 
function $f$ satisfying the properties of Lemma \ref{lemma:function}.
By Lemma \ref{lemma:normal-equivalence} we have that $S \Rightarrow_* T$ {\em if and only if} 
$S \Rightarrow^N_* T.$ By Lemma \ref{lemma:function} this latter condition holds {\em if and only if} there exists 
a function $f$ satisfying the conditions 1-4 in Lemma \ref{lemma:function}. 
Therefore, to prove the claim it is enough to show that the existence of such a function $f$ can be decided in linear time. 
This is easily attained by employing the following greedy approach (see, e.g.,  Algorithm \ref{alg:narylinear} in appendix):
once the values of $f(1)=1, \dots, f(i-1)= j$ have been fixed, sets the assignment 
$f(i) = j'$ to the smallest $j'$ such that $l_{i+u} \leq l'_{j'+u}$ for each $u=0, \dots  k-1$ and 
if this condition does not hold for $j'=j+1,$ then $j'$ is the smallest integer $> j$  that guarantees the existence of a 
$j < p < j'$ satisfying condition 
$4$ in Lemma \ref{lemma:function}. 
The correctness of this approach can be easily shown by a standard exchange argument 
and it is deferred to the appendix for the sake of the space limitations. The resulting algorithm takes $O(|S|+|T|)$ time, 
since it only scans for a constant number of times each component of the run length encoding of $T$ and $S.$
%
%
\end{proof}
\noindent
{\bf Final remarks on $6$-ary strings and some open problems.}
The technique we used to prove Lemmas \ref{lemma:function}, \ref{lemma:normal-equivalence}
 is not generalizable to the case of larger alphabets: For purely alternating strings with $|\Sigma(S)|=6$, we cannot always simulate a general duplication with normal duplications. 
 Take, e.g.,  the duplication $AXB \Rightarrow AXXB$ where $X$ is the string: 
 $\displaystyle{X = 0^{2} 1^{1} 2^{2} 3^{1} 4^{2} 5^{1} 0^{1} 1^{2} 2^{1} 3^{2} 4^{1} 5^{2}}.$
%
%
One can  show that no pair of normal duplications can  convert the string X into a string $T' \preceq XX.$
%
%
%
We do not know whether this has an implication on the polynomial time solvability  of the problem {\sc TD-Exist} already on $6$-ary alphabets, and we leave it as a first step for future research.  
More generally, the main algorithmic problems that are left open by our results are the complexity of {\sc TS-Dist} for binary alphabets (more generally, whether our hardness result can be extended to smaller alphabets)  and the complexity of {\sc TD-Exist} for arbitrary ternary alphabets. Also, on the basis of the hardness result, approximation algorithms for the distance problem is another interesting direction for future research.


\vspace{-0.3cm}

\bibliographystyle{splncs04}
\bibliography{bibliography}

\newpage

\noindent
{\bf \huge Appendix}

\appendix

\remove{
\section{Proof of Lemmas \ref{lemma:almsqfreecontr} and \ref{lemma:asfdistance}}

\noindent
{\bf Lemma \ref{lemma:almsqfreecontr}.}
{\em Let $Z$ be an almost square free string. Then, the only contractions possible  
on $Z$ are of size $1$, i.e., those that remove one of two consecutive equal characters.}
\par
\begin{proof}
We argue by contradiction. Let $Z_{SF}$ be the square free string such that 
$Z_{SF} \subseteq Z \subseteq Z_{SF}^{dup}$, 
where $Z_{SF}^{dup}$ is the string obtained from $Z_{SF}$ by duplicating every single character

Suppose that a contraction of size $> 1$ is possible on $Z$. Hence $Z = ADDB$ with $|D|>1.$ 
Let us indicate by $x^R$ each character in 
$DD$ that is a copy of an original character $x$ of $Z_{SF}$ and has been added 
when producing $Z_{SF}^{dup}.$ Let us also indicate by $D^{(1)}$ and $D^{(2)}$ the two copies of $D$ in $DD.$
We have two cases according to whether the last character of $D^{(1)}$ and the first character of 
$D^{(2)}$ are copies or not. We will show that in either case we reach a contradiction. 

\noindent
{\em Case 1.} $D^{(1)} = wx$ and $D^{(2)} = x^R w'.$ Let $z$ denote the last character of $w$. Note that $z \neq x$ for 
otherwise we would have a repetition in $Z_{SF}$. The same argument shows that the first character, say $y,$ of $w'$ must be 
different from $x.$ Then, we must have $D^{(1)} = xy\tilde{w}x$ and $D^{(2)} = x^Ry\tilde{w}x.$ 
If we now remove from $D^{(1)}D^{(2)}$ all the copied characters (which were not originally in $Z_{SF}$) we get
that $xy\hat{w}zxy\hat{w}x$ is a substring of $Z_{SF},$ where $\hat{w}$ is the version of $\tilde{w}$ without duplicates. 
We can easily see that we got a square contradicting the square free hypothesis on $Z_{SF}.$

\noindent
{\em Case2.} $D^{(1)} = wx$ and $D^{(2)} = y w',$ for some $x \neq y$. Hence, we must have $D^{(1)} =  y \tilde{w} x$ and 
$D^{(2)} = y \tilde{w} x,$ for some word $\tilde{w}.$ 
If we remove, as in the previous case, all the duplicates from $D^{(1)}$ and $D^{(2)},$  we get that
$y\hat{w} x y \hat{w} x$ is a substring of $Z_{SF},$ for some word $\hat{w}$ that does not start with $y$ and does not end with $x$.
This  is again a contradiction since  $y\hat{w} x y \hat{w} x$ is a square,  hence it cannot be a substring of $Z_{SF}$.
\end{proof}

\medskip

{\bf Lemma \ref{lemma:asfdistance}.}
{\em Let $S_{SF}$ be a square-free string and $S_{SF}^{dup}$ be the string obtained from $S_{SF}$ by duplicating every single character. 
If $S$ is an almost square-free string such that $S_{SF} \subseteq S \subseteq S_{SF}^{dup}$, 
then $dist_{TD}(S_{SF},S)=|S|-|S_{SF}|$. 
}
\par
\begin{proof}
Note that if we contract $S$ into $S_{SF}$ by deleting single characters one by one, then we need $|S|-|S_{SF}|$ contractions. Let 
$\ell = |S|-|S_{SF}|$. 

To see that this is also a lower bound, let us consider an arbitrary sequence of 
contractions $S_{SF} \rightarrowtail S_1 \rightarrowtail S_2 \rightarrowtail \cdots \rightarrowtail S_{\ell'-1} \rightarrowtail S_{SF}.$ It is not hard to see that
for each $i=1, \dots, \ell'-1$ the string $S_i$ is almost square free and $S_{SF} \subseteq S_i \subseteq S_{SF}^{dup}.$ Then, 
by the previous lemma, it follows that each contraction removes a single character, hence $\ell' \geq \ell,$ which concludes the proof. 
\end{proof}
}

\section{Proof of Theorem  \ref{lemma:equivalence}}

\noindent
{\bf Theorem \ref{lemma:equivalence}.}
{\em Let $S, T$ be a block exemplar pair of strings and let $\hat{S}$ and $\hat{T}$ be the corresponding pair of $5$-ary strings built as described above. 
Then $S\Rightarrow_k T$ if and only if $\hat{S}\Rightarrow_k \hat{T}.$
}

\remove{
\medskip
\subsection{The complete proof of ``only if'' part 
}

Let $T = T(0) \rightarrowtail T(1) \rightarrowtail \cdots \rightarrowtail T(k) = S$ be any series of contractions 
leading from $T$ to $S$. 

We recall a result that directly follows from \cite[Appendix, Claim 2]{Lafond} 
(for the sake of self containment we provide a more precise 
account of this result in the appendix).

\remove{
\begin{fact} \label{fact:contractionsTS-core}
For each $\ell = 0, 1, \dots, k-1,$ the string $T(\ell)$ has a factorization  into elements of $\Omega,$ i.e., $T(\ell) = A_1 A_2 \dots, A_r$ and for each $i=1, \dots, r, $ we have 
$A_i \in \Omega$   and $A_i \neq A_{i+1}.$
\end{fact}
}

\begin{fact} \cite{Lafond} \label{fact:contractionsTS}
Let $T = T(0) \rightarrowtail T(1) \rightarrowtail \cdots \rightarrowtail T(k) = S$ be any series of contractions 
leading from $T$ to $S$. 
For each $\ell = 0, 1, \dots, k-1,$
\begin{enumerate}
\item the string $T(\ell)$ has a factorization into elements of $\Omega$ (chuncks), and more precisely, it has the form
$$T(\ell) = [B^0_{2p} X \L] (\mathcal{B}^1_{2p} X \L)^* [B^{01}_{i_1} X_{i_1} \L] (\mathcal{B}^1_{2p} X \L)^*
[B^{01}_{i_2} X_{i_2} \L] (\mathcal{B}^1_{2p} X \L)^* \cdots [B^{01}_{i_h} X_{i_h} \L] (\mathcal{B}^1_{2p} X \L)^*,$$
where (square brackets are added for the sake of readability)
\begin{itemize}
\item $[B^{0}_{2_2} X \L]$ is a shorthand notation for any string $B'X'\L$ such that 
$\mathcal{B}_{2p} X \L \subseteq B'X'\L \subseteq \mathcal{B}^0_{2p} X^{dup} \L.$
\item $(\mathcal{B}^1_{2p} X \L)^*$ is any (possibly empty) concatenation $\mathcal{B}^{(1)} X^{(1)} \L \mathcal{B}^{(2)} X^{(1)} \L \dots 
\mathcal{B}^{(n)} X^{(1)} \L$ where for each $i$, we have 
$\mathcal{B}_{2p} \subseteq \mathcal{B}^{(i)} \subseteq \mathcal{B}^1_{2p}$ and
$X \subseteq X^{(i)}  \subseteq X^{dup};$ the strings $(\mathcal{B}^1_{2p} X \L)^*$ are also referred to as ``clusters'';
\item $[B^{01}_{i_1} X_{i_1} \L]$ is any string $B' X' \L$ such that 
$\mathcal{B}_{2p}X\L \subseteq B'X'\L \subseteq \mathcal{B}^{01}_{2p}X^{dup}\L$
\item $i_1 < i_2 < \cdots < i_h$
\end{itemize}
\item the contraction $T(\ell) = A DD A' \rightarrowtail A D A' = T(\ell+1)$ satisfies exactly one of the following three possibilities
\begin{enumerate}
\item $D$ is a single character, hence, necessarily one of the characters in some substring $X'$, $B'_0,$ $B'_1$---these are 
the substrings corresponding to the elements of $\Omega$, which appear in the above factorization of $T(\ell)$;
\item the prefix $A$ ends with an $\L$ and $D$ ends with an $\L$---therefore $D$ is a sequence of chunks.
\item the right copy of $D$ is completely contained in one of the "clusters" $(\mathcal{B}^1_{2p} X \L)^*,$ and 
$D$ contains a character $\L$ but not as last character.
\end{enumerate}
\end{enumerate}
\end{fact}

Using  the factorization of $T(\ell)$ into chuncks, and the definition of  $\mu$, we have that 
the string $\mu(T(\ell))$ is a well defined $5$-ary string, for each $\ell$. 
In particular, we have $\mu(T(0)) = \T$ and $\mu(T(k)) = \mu(S) = \S.$

As a result of point 2. we have the following.
\medskip
}
\subsection{The ``only if'' part - The proof of Claim \ref{claim:equivalence1-new}}

\noindent
{\bf Claim \ref{claim:equivalence1-new}.} 
{\em 
For $\ell = 0, 1, \dots, k-1,$ the contraction $T(\ell) \rightarrowtail T(\ell+1)$ implies a contraction $\mu(T(\ell)) \rightarrowtail \mu(T(\ell+1)).$
}
\par
\begin{proof} 
Fix $0 \leq \ell \leq k-1.$ Let $T(\ell) = A D D A' \rightarrowtail A D A' = T(\ell+1).$
 We argue by cases, according to whether $D$ is a substring of a single chunk in the factorization of $T(\ell)$ or $D$ spans over more 
 than one chunk.

\noindent
{\em Case 1.}  $D$ is a substring of some chunk $C'$. Then, since adjacent chunks have disjoint alphabet, the whole square $DD$ must be contained 
in the same chunk. Moreover, since $C'$ is almost square free, $D$ is a single letter. Therefore,  there must exist, 
two possibly empty sequences of chunks, $T'$ and $T''$ together with a chuck $C \in \{B_0, B_1, X\},$ a set of indices $I \subseteq [|C|]$ and an index $i^*$ 
such that $C' = dup(C, I \cup \{i^*\}), \, C'' = dup(C, I)$ and the contraction $T(\ell) \rightarrowtail T(\ell+1)$ can be written as: 
$$T(\ell) = T' C' T'' = T' dup(C, I \cup \{i^*\}) T'' \rightarrowtail T' dup(C, I) T'' = T' C'' T'' = T(\ell+1).$$

By definition, $\mu(C') = dup(\mu(C), I \cup \{i^*\})$ and $\mu(C'') = dup(\mu(C), I),$  then  we also have 
the contraction from $\mu(T(\ell))$ to $\mu(T(\ell+1)),$ given by 
\begin{multline*}
\mu(T(\ell)) = \mu(T') \mu(C') \mu(T'')  = \mu(T') dup(\mu(C), I \cup \{i^*\}) \mu(T'') \\
\rightarrowtail \mu(T') dup(\mu(C'), I) \mu(T'')  = 
 \mu(T') \mu(C'')  \mu(T'') = \mu(T(\ell+1)).
\end{multline*}

\medskip
 \noindent
{\em Case 2.} $D$ spans over more than one chunk. Since adjacent chunks have disjoint alphabet, it cannot be that the two copies of $D$ 
are substrings of consecutive chunks. Therefore, if the leftmost character of the left copy of $D$ is in chunk $A_i$ the leftmost character of  the right copy of $D$
must be in chunk $A_{i+2}.$

We first consider the subcase where, for some $1 \leq i < j \leq r-2,$ it holds that  $D = A_{i} A_{i+1} \dots A_{j},$ i.e., $D$  coincides with 
a sequence of whole chunks. Since, by assumption, $T(\ell) = A D D A'$ is a sequence of chunks, then, also $A$ and $A'$ are sequences of chunks, and we have
$\mu(T(\ell)) = \mu(A) \mu(D) \mu(D) \mu(A')$ and $\mu(T(\ell+1))= \mu(A) \mu(D) \mu(A')$ which shows that 
there is a contraction from $\mu(T(\ell))$ to $\mu(T(\ell+1))$ which acts on the repeat $\mu(D) \mu(D).$ 

\medskip
It remains to consider the case where $D$ does not coincide with a sequence of chunks and spans at least two chunks. Then, at least one of the two copies of 
$D$ does not coincide with the first character of the chunks where it starts. Hence, such a chunk is not a single character, i.e., its alphabet is one of 
$\Sigma(B_0), \Sigma(B_1), \Sigma(X).$ The same must hold true for the prefix of the other copy of $D.$

We conclude that there is $C \in \{B_0, B_1, X\},$  sets of  indices $I_1, I_2, I_3 \subseteq [|C|]$ 
indices $i \in [|C|+|I_1|] \cup\{0\}, j \in [|C|+|I_2|] \cup\{0\},$ and chunks $C_1 = dup(C, I_1), C_2 = dup(C, I_2), C_3 = dup(C, I_3),$
such that 
$$T(\ell) = T' C_1 D' C_2  D' C_3 T'',$$ where $T', T'', D'$ are sequences of chunks with 
\begin{equation} \label{condC1C2}
C_1[i+1,|C_1|] = C_2[j+1, |C_2|] 
\end{equation}
and 
\begin{equation}\label{condC2C3}
C_2[1,j] = C_3[1,j] 
\end{equation} 
 so that   
 $$D = C_1[i+1, |C_1|] D' C_2[1,j] = C_2[j+1, |C_2|] D' C_3[1, j]$$ and 
 $$DD = C_1[i+1,|C_1|] D' C_2[1,j] C_2[j+1, |C_2|] D' C_3[1,j],$$ is a square.  
Hence the contraction is 
\begin{multline*}
T(\ell) = T' C_1[1,i] D D C_3[j+1, |C_3| T'' \\
\rightarrowtail T' C_1[1,i] D  C_3[j+1, |C_3| T'' = T' C_1 D' C_3 T'' = T(\ell+1).
\end{multline*}

We factorize $\mu(T(\ell))$ as follows
\begin{eqnarray} 
\mu(T(\ell)) &=& \mu(T') \mu(C_1) \mu(D') \mu(C_2)   \mu (D') \mu(C_3) \mu(T'') \\ 
&=& \mu(T') dup(\mu(C), I_1)[1,i] \\
&& ~~~ dup(\mu(C), I_1)[i+1, |\mu(C)| +|I_1|] \mu(D') 
dup(\mu(C), I_2)[1,j] \label{D-left}\\
&& ~~~dup(\mu(C), I_2)[j+1,|\mu(C)|+|I_2|] \mu(D') dup(\mu(C), I_3)[1,j] \label{D-right}\\
&& ~~~dup(\mu(C), I_3)[j+1,|\mu(C)|+|I_3|] \mu(T'').
\end{eqnarray}
By Fact \ref{mu-dup-modularity-2} (item 2.) and the relations in (\ref{condC1C2}), (\ref{condC2C3}), we have 
that the string in (\ref{D-left}) is equal to the string in (\ref{D-right}).
Hence, 
$$\mu(T(\ell))=  \mu(T') \, dup(\mu(C), I_1)[1,i] \,  \hat{D} \, \hat{D}  \, dup(\mu(C), I_3)[j+1,|\mu(C)|+|I_3|] \, \mu(T''),$$
where 
$$\hat{D} = dup(\mu(C), I_1)[i, |\mu(C)| +|I_1|] \mu(D') 
dup(\mu(C), I_3)[1,j] $$
Therefore, there is a contraction 
\begin{eqnarray*} 
\mu(T(\ell)) &\rightarrowtail& \mu(T') \, dup(\mu(C), I_1)[1,i] \,  \hat{D} \, dup(\mu(C), I_3)[j+1,|\mu(C)|+|I_3|] \, \mu(T'')\\
&=&  \mu(T') \, dup(\mu(C), I_1) \, \mu(D') \, dup(\mu(C), I_3) \, \mu(T'')\\
&=&  \mu(T') \, \mu(C_1) \, \mu(D') \, \mu(C_3) \, \mu(T'') = \mu(T(\ell+1)),
\end{eqnarray*}
which settles also the second subcase of Case 2 and completes the proof of Claim \ref{claim:equivalence1-new}.
\end{proof}

\remove{
\noindent
\begin{Claim} \label{claim:equivalence1}
For each $\ell = 0, 1, \dots, k-1$ there is a contraction $\mu(T(\ell)) \rightarrowtail \mu(T(\ell+1)).$
\end{Claim}

This claim implies that there is a sequence of contractions
$$\T = \mu(T) = \mu(T(0))  \rightarrowtail \mu(T(1)) \rightarrowtail \cdots \rightarrowtail \mu(T(k)) = \S$$
that proves the ``only if''  part of the statement of the Lemma.

\medskip
\noindent
{\em Proof of Claim \ref{claim:equivalence1}.}  Fix $0 \leq \ell \leq k-1.$ Let $T(\ell) = A D D A' \rightarrowtail A D A' = T(\ell+1).$
We argue by cases according to three possibilities singled out by Fact \ref{fact:contractionsTS}, item 2. 

\noindent
{\em Case 1.}  $D$ is a single character within one chunk C of $T(\ell).$ Therefore,  there must exist, 
two possibly empty sequences of chunks, $T'$ and $T''$ together with  an index $i^*$ and a chuck $C'$ such that $C = dup(C', i^*)$ 
$T(\ell) = T' C T''$ and $T(\ell+1) = T' C' T''.$

By definition, $\mu(C) = dup(\mu(C'), i^*),$ then we have that 
\begin{eqnarray*}
\mu(T(\ell)) &=& \mu(T') \mu(C)  \mu(T'') =  \mu(T') dup(\mu(C'), i^*) \mu(T''),\\ 
\mu(T(\ell+1)) &=& \mu(T') \mu(C')  \mu(T''),
\end{eqnarray*}
from which it is easier to see that there is a
contraction from $\mu(T(\ell))$ to $\mu(T(\ell+1)).$ 

\medskip
 \noindent
{\em Case 2.}  $D$ is a sequence of chunks ending with $\L.$ Then, also $A$ and $A'$ are sequences of chunks, and we have
$\mu(T(\ell)) = \mu(A) \mu(D) \mu(D) \mu(A')$ and $\mu(T(\ell+1))= \mu(A) \mu(D) \mu(A')$ which shows that 
there is a contraction from $\mu(T(\ell))$ to $\mu(T(\ell+1))$ which acts on the repeat $\mu(D) \mu(D).$ 

\medskip
 \noindent
{\em Case 3.}  the right copy of $D$ contains an $\L$ not as the last character and it is completely contained 
in one of the "clusters" $(B^1_{2p} X \L)^*.$  

If $D$ coincides with a sequence of chunks, then the same argument of case 2 applies. If this is not the case, then 
$D$ starts and ends in the middle of some chunk. Since, all chunks $B_2, \dots, B_{2p}$ are single letters, 
 then the only possibility 
is that $D$ starts and ends in the middle of chunks which are all obtained 
by tandem duplications on one of $B_0, B_1, X.$
More precisely, there is $A \in \{B_0, B_1, X\},$  sets of  indices $I_1, I_2, I_3 \subseteq [|A|]$ 
indices $i \in [|A|+|I_1|], j \in [|A|+|I_2|],$ and chunks $A_1 = dup(A, I_1), A_2 = dup(A, I_2), A_3 = dup(A, I_3),$
such that 
$$T(\ell) = T' A_1 D' A_2  D' A_3 T'',$$ where $T', T'', D'$ are sequences of chunks with 
\begin{equation} \label{condA1A2}
A_1[i+1,|A_1|] = A_2[j+1, |A_2|] 
\end{equation}
and 
\begin{equation}\label{condA2A3}
A_2[1,j] = A_3[1,j] 
\end{equation} 
 so that   
 $$D = A_1[i+1, |A_1|] D' A_2[1,j] = A_2[j+1, |A_2|] D' A_3[1, j]$$ and 
 $$DD = A_1[i+1,|A_1|] D' A_2[1,j] A_2[j+1, |A_2|] D' A_3[1,j],$$ is a square.  
Hence the contraction is 
\begin{multline*}
T(\ell) = T' A_1[1,i] D D A_3[j+1, |A_3| T'' \\
\rightarrowtail T' A_1[1,i] D  A_3[j+1, |A_3| T'' = T' A_1 D' A_3 T'' = T(\ell+1).
\end{multline*}

We factorize $\mu(T(\ell))$ as follows
\begin{eqnarray} 
\mu(T(\ell)) &=& \mu(T') \mu(A_1) \mu(D') \mu(A_2)   \mu (D') \mu(A_3) \mu(T'') \\ 
&=& \mu(T') dup(\mu(A), I_1)[1,i] \\
&& ~~~ dup(\mu(A), I_1)[i+1, |\mu(A)| +|I_1|] \mu(D') 
dup(\mu(A), I_2)[1,j] \label{D-left}\\
&& ~~~dup(\mu(A), I_2)[j+1,|\mu(A)|+|I_2|] \mu(D') dup(\mu(A), I_3)[1,j] \label{D-right}\\
&& ~~~dup(\mu(A), I_3)[j+1,|\mu(A)|+|I_3|] \mu(T'').
\end{eqnarray}
By Fact \ref{mu-dup-modularity-2} and the relations in (\ref{condA1A2}), (\ref{condA2A3}), we have 
that the string in (\ref{D-left}) is equal to the string in (\ref{D-right}).
Hence, 
$$\mu(T(\ell))=  \mu(T') \, dup(\mu(A), I_1)[1,i] \,  \hat{D} \, \hat{D}  \, dup(\mu(A), I_3)[j+1,|\mu(A)|+|I_3|] \, \mu(T''),$$
where 
$$\hat{D} = dup(\mu(A), I_1)[i, |\mu(A)| +|I_1|] \mu(D') 
dup(\mu(A), I_3)[1,j] $$
Therefore, there is a contraction 
\begin{eqnarray*} 
\mu(T(\ell)) &\rightarrowtail& \mu(T') \, dup(\mu(A), I_1)[1,i] \,  \hat{D} \, dup(\mu(A), I_3)[j+1,|\mu(A)|+|I_3|] \, \mu(T'')\\
&=&  \mu(T') \, dup(\mu(A), I_1) \, \mu(D') \, dup(\mu(A), I_3) \, \mu(T'')\\
&=&  \mu(T') \, \mu(A_1) \, \mu(D') \, \mu(A_3) \, \mu(T'') = \mu(T(\ell+1)),
\end{eqnarray*}
which settles also Case 3 and completes the proof of Claim \ref{claim:equivalence1}. \qed
}

\subsection{The proof of ``if'' part of Theorem  \ref{lemma:equivalence}}

Let us recall the definition of $\T$ and $\S:$
\begin{eqnarray}
\hat{S} &=& \mathcal{\B}_{2p} \X \text{\L} = \B_{2p} \$ \B_{2p-1} \$ \dots \$ \B_2 \$ \B_1 \$ \B_{0} \$ \X \text{\L} \label{S}\\
\hat{T} 
  &=&  \mathcal{\B}_{2p}^0 \X^{dup} \text{\L} \, \mathcal{\B}_{2p}^1 \X \text{\L} \,\, \mathcal{\B}_{1}^{01} \X_{1} \text{\L} \mathcal{\B}_{2p}^1 \X \text{\L} \,\,
   \mathcal{\B}_{2}^{01} \X_{2} \text{\L} \mathcal{\B}_{2p}^1 \X \text{\L} \, \dots \, \mathcal{\B}_{p}^{01} \X_{p} \text{\L} \mathcal{\B}_{2p}^1 \X \text{\L}. \label{T}
\end{eqnarray}

Assume that there exists a sequence of $k$  contractions 
$$\T = \mu(T) = \T(0) \rightarrowtail \T(1) \rightarrowtail \T(2) \rightarrowtail \cdots \rightarrowtail \T(k) = \S,$$ where  
$\T(\ell)$ denotes the string obtained from $\T$ after the first $\ell$ contractions have been performed. 
We tacitly assume that in each contraction $\T(\ell) = ADDA' \rightarrowtail ADA' = \T(\ell+1)$, it is the right copy of $D$ which is 
removed. 

%

 A \textit{block} of $\T(\ell)$ is a {\em maximal} substring 
 $P$ of $\T(\ell)$ satisfying the following properties: the last character of $P$ is  \L{}; and this is the only 
 occurrence of \L{} in $P$. Hence, the first character of $P$ is either preceded by \L{} or it is the first 
 character of $\T(\ell)$.

%

\remove{
\begin{Claim}
Each $E_i$ substring must be removed in $T(\alpha)$.
\end{Claim}

\begin{proof}
Let $BX\text{\L}$ be the first, leftmost block $\mathcal{B}_{2p}^0 \mathcal{X} \text{\L}$ of $T$. We show that for each run $r$ contained in $BX\text{\L}$, at least a character contained in $r$ is not removed in $T(N)$. Assume that $l$ is the smallest integer for which in $T(l)$ an entire run of $BX\text{\L}$ is removed. Let $D$ be the string that was contracted from $T(l-1)$ to $T(l)$ (so that $T(l-1)$ contained $DD$ as a substring, and the second $D$ substring gets removed from $T(l-1)$). The first character removed must be in $BX\text{\L}$, otherwise we can't remove any of its run. If $D$ does not contain \L, then $DD$ is entirely contained in $BX\text{\L}$, but this is not possible since $BX\text{\L}$ is almost square-free, so at most we could have that $DD=aa$, where $a$ is a single character. If the right $D$ contains the \L{} character, then also the left $D$ must contain \L{}, but this is not possible by construction. We can see that the number of runs contained in $BX\text{\L}$ is the same as the number of runs contained in $S$. Therefore, all the characters of $S$ belong to $BX\text{\L}$, implying that every character in the string $T$ that doesn't belong to $BX\text{\L}$ is removed in $T(N)$, in particular all the characters of every $E_i$.
\end{proof}
}

We denote by $\hat{B}\hat{X}\text{\L}$ the first, leftmost block $\mathcal{\B}_{2p}^0 \X \text{\L}$ of $\T$. 
For $i=1, \dots, p,$ we also denote by $E_i$ the block $\mathcal{\B}_{i}^{01} \X_i \text{\L}$ in $\T.$
We let $E_i(\ell)$ be the substring of $\T(\ell)$ formed by 
all the characters that belong to $E_i$. Note that $E_i(\ell)$ is any possibly empty subsequence of $E_i$.

For any $a \in \{0,1, 01\}$ and $j \in [2p]$,
 a block $\mathcal{\B}'\X'\text{\L}$ is called a \textit{$\mathcal{\B}_{j}^a \X \text{\L}$-block} 
 if $\mathcal{\B}_{j} \X \text{\L} \subseteq \mathcal{\B}'\X'\text{\L} \subseteq \mathcal{\B}_{j}^a \X^{dup} \text{\L}$. 
 In other words, $\mathcal{\B}'\X'\text{\L}$ has the same runs of $\mathcal{\B}_{j}^a \X^{dup} \text{\L}$
  in the same order, but some of the duplicated characters may 
  have been contracted into a single character. 
  A \textit{$\mathcal{\B}_{2p}^1 \X \text{\L}$-cluster} 
  is a string obtained by concatenating an arbitrary number of $\mathcal{\B}_{2p}^1 \X \text{\L}$-blocks. 
  We write ($\mathcal{\B}_{2p}^1 \X \text{\L})^*$ to denote a possibly empty 
  $\mathcal{\B}_{j}^1 \X \text{\L}$-cluster.

\begin{proposition}
\label{proposition:form}
For each $\ell = 0, \dots, k$
\begin{enumerate} 
\item $\T(\ell)$ has the form:
\begin{align} \label{standardform}
\B\X\text{\L}\,(\mathcal{\B}_{2p}^1 \X \text{\L})^* E_{i_1}(\ell)(\mathcal{\B}_{2p}^1 \X \text{\L})^* E_{i_2}(\ell)(\mathcal{\B}_{2p}^1 \X \text{\L})^* \dots E_{i_h}(\ell)(\mathcal{\B}_{2p}^1 \X \text{\L})^*
\end{align}
where:
\begin{itemize}
    \item $\B\X\text{\L}$ is a $\mathcal{\B}_{2p}^0 \X \text{\L}$-block
    \item $i_1 < i_2 < \dots < i_h$
    \item each $(\mathcal{\B}_{2p}^1 \X \text{\L})^*$ is a $\mathcal{\B}_{2p}^1 \X \text{\L}$-cluster
    \item for each $j \in \{i_1, \dots, i_h\}$, $E_j(\ell)$ is a $\mathcal{\B}_{j}^{01} \X \text{\L}$-block.
\end{itemize}
\item the contraction $\T(\ell) = A DD A' \rightarrowtail A D A' = \T(\ell+1)$ satisfies  one of the following  possibilities
\begin{enumerate}
\item $D$ is a single character, hence, necessarily one of the characters in some substring $\X' \in \mathbb{\X}$, 
$\B'_0\in \mathbb{\B}_0,$ $\B'_1 \in \mathbb{\B}_1$---these are 
the substrings corresponding to the elements of $\hat{\Omega}$, which appear in the above factorization of $\T(\ell)$ which are not square free;
\item there exists a contraction $\T(\ell) = A \tilde{D}\tilde{D} A' \rightarrowtail A \tilde{D} A' = \T(\ell+1)$  
such that $\tilde{D}$ is a sequence of whole chunks (note that this case includes the possibility $\tilde{D} = D$, i.e., 
already $D$ is a sequence of whole chunks)
\item there are  chunks $\C_1 = \C_1' \C_1''; \, \C_2 =  \C_2' \C_2''; \,  \C_3 = \C_3' \C_3''$ such that  for some $\C \in \{\B_0, \B_1, \X\}$ for each 
$i=1, 2,3,$ it holds that $\C \subseteq \C_i \subseteq \C^*$ and
 $D = C_1'' D' C_2' = C_2'' D' C_3',$ and $C_1'$ is a suffix of $A$ and $C_3''$ is 
a prefix of $A'.$
%
\end{enumerate}
\end{enumerate}
\end{proposition}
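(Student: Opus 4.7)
The plan is to prove both items of the proposition simultaneously by induction on $\ell$. The base case $\ell=0$ is immediate from the definition of $\T$: setting $i_j = j$ for $j=1,\ldots,p$, the leftmost block is exactly $\mathcal{\B}_{2p}^0\X\text{\L}$, and each intermediate $(\mathcal{\B}_{2p}^1\X\text{\L})^*$-cluster is a single $\mathcal{\B}_{2p}^1\X\text{\L}$-block.

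For the inductive step, assume $\T(\ell)$ has the form (\ref{standardform}) and consider any contraction $\T(\ell) = ADDA' \rightarrowtail ADA' = \T(\ell+1)$. I would classify the square $DD$ by the occurrences in $D$ of the two separator characters $\text{\L}$ (which appears only at block ends) and $\$$ (which appears only at fixed internal positions of each block). First, suppose $D$ contains no $\text{\L}$, so that $DD$ lies inside a single block. If in addition $D$ contains no $\$$, then $DD$ lies inside one of the almost square-free ternary chunks $\X'$, $\B_0'$, $\B_1'$ or a fixed $\tilde{B}_i$, and Lemma~\ref{lemma:almsqfreecontr} forces $D$ to be a single character, yielding case~(a). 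If instead $D$ does contain $\$$, then the $\$$-positions inside a block have inter-gaps $|\tilde{B}_{2p-1}|, |\tilde{B}_{2p-2}|,\ldots,|\tilde{B}_0|$, which are strictly monotone, so no nonzero period can align two consecutive $\$$-gaps, ruling out any square that spans distinct $\tilde{B}_i$'s and again reducing to case~(a). Next, if $D$ ends with $\text{\L}$ then so does the earlier copy, and both copies are juxtapositions of whole blocks, giving case~(b). Finally, if $D$ contains an interior $\text{\L}$, the $\text{\L}$'s in the two copies must align, so $D$ begins inside some block $P_1$, spans one or more complete blocks, and ends inside another block $P_2$ strictly before its $\text{\L}$; the matching suffix of $P_1$ with the corresponding suffix of a chunk of $P_2$, and the matching prefix of $P_2$ with the corresponding prefix of a chunk of $P_1$, force two chunks of the \emph{same} type $\C\in\{\B_0,\B_1,\X\}$ by the injectivity of $\mu$ on $\hat{\Omega}$ and item~1 of Fact~\ref{mu-dup-modularity-2}, which is exactly the decomposition $D=\C_1''D'\C_2'=\C_2''D'\C_3'$ of case~(c).

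The main technical obstacle is rigorously excluding ``cross-chunk'' squares that do not fit cases (b) or (c). The leverage comes from three design properties of the encoding: (i) ${\cal O}$ is square-free, so no purely ternary square can arise within a single chunk; (ii) the $|\tilde{B}_i|$ are strictly monotone in $i$, forbidding periodic arrangements of $\$$-separators within a block; and (iii) the $\tilde{B}_i$'s are distinct suffixes of the same square-free ${\cal O}$, so no $\tilde{B}_j$ can contain $\tilde{B}_i$ as a prefix-aligned substring. These three properties together cover every candidate for a badly-placed square crossing chunk boundaries and force the listed structure.

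Finally, once the contraction is classified, verifying that $\T(\ell+1)$ still has the form (\ref{standardform}) is routine: in case~(a) one chunk in $\mathbb{\B}_0\cup\mathbb{\B}_1\cup\mathbb{\X}$ simply becomes a shorter element of the same set; in case~(b) a sequence of whole blocks is removed, possibly merging adjacent $(\mathcal{\B}_{2p}^1\X\text{\L})^*$-clusters and dropping some $E_{i_j}$'s while preserving the increasing order of the surviving indices; in case~(c) a sequence of intermediate chunks bookended by a matching suffix/prefix of a same-type chunk is deleted, and the surviving concatenation reparses as (\ref{standardform}). This closes the induction.
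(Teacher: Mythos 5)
Your overall skeleton (induction on $\ell$, classifying the square $DD$ by the occurrences of $\text{\L}$ in $D$) matches the paper's strategy, and your treatment of the first two branches (no $\text{\L}$ in $D$, giving case~(a); $D$ ending in $\text{\L}$, giving case~(b)) is essentially the paper's. The gap is in the third branch, where $D$ contains an interior $\text{\L}$ but does not end with one: you funnel \emph{every} such square into case~(c), asserting that the matching prefixes/suffixes ``force two chunks of the same type.'' That is not what happens, and the bulk of the paper's proof is devoted to exactly this branch. Several of these configurations are \emph{impossible} and must be excluded by explicit arguments: a square whose left copy begins in the first block $\B\X\text{\L}$ cannot exist (that block contains a single $\text{\L}$); a square starting in some $E_{i_j}(\ell)$ and ending in a following cluster is killed by comparing first characters ($E_{i_j}$ does not start with $0$ while $\mathcal{\B}_{2p}$ does); a square spanning two distinct clusters, or reaching past $E_{i_{j+1}}$ into $E_{i_{j+2}}$, is killed by the fact that the chunks $\B_j$ all have distinct numbers of runs. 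None of your three ``design properties'' covers these exclusions, since they concern the arrangement of whole blocks rather than the internal structure of chunks.

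Moreover, among the configurations that \emph{do} occur, not all land in case~(c). If the boundary of the square falls inside a fixed chunk $\B_j$ with $j\ge 2$ (which is square-free and admits no internal duplication), the contraction must be re-expressed as an equivalent contraction removing a sequence of whole chunks, i.e.\ case~(b); likewise the configuration where the left copy starts in $E_{i_j}(\ell)$ and the right copy ends in $E_{i_{j+1}}(\ell)$ is shown---via an auxiliary claim that no contraction can remove only part of the tail $\$\B_1\$\B_0\$\X$ of an $E_{i_j}$-block---to be equivalent to deleting $E_{i_j}(\ell)$ together with adjacent whole blocks, again case~(b). Case~(c) arises only in the residual situation where both copies of $D$ lie in the same $\mathcal{\B}_{2p}^1\X\text{\L}$-cluster spanning three of its blocks and the cut falls inside a $\B_1$- or $\X$-type chunk; only there does one verify $\C_2'\C_3''=\C_3$ so that the form (\ref{standardform}) is preserved. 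Your closing claim that re-parsing $\T(\ell+1)$ is ``routine'' in case~(c) skips precisely this verification. As written, the proposal proves the easy branches and names the hard one without carrying it out.
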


\begin{proof} We argue by induction on $\ell.$
For $\ell = 0,$ the statement is true by construction. 
Let $\ell \geq 0$ and assume the claim is true for $\ell,$ i.e.,
$\T(\ell)$ has the structure in (\ref{standardform}). 
Let $D$ be the string that is contracted from $\T(\ell)$ to $\T(\ell+1)$. We argue by cases:
\begin{enumerate}
\item If $D$ does not contain an \L{} character, then $DD$ is entirely contained in a single block. 
But each block is almost square-free, so the only possibility is that $DD=aa$ for some character $a$. 
We can see that a contraction like this cannot remove an entire run of the block, 
so each $\mathcal{B}_{j}^a \mathcal{X} \text{\L}$-block will remain a $\mathcal{B}_{j}^a \mathcal{X} \text{\L}$-block,
and the item 1. of the claim is true for $T(\ell+1)$ too. Moreover, item 2. is also verified as subcase (a).
We have proved the induction step for this case.

\item Assume now that the last character of $D$ is \L{}. Then $DD = D'\text{\L}D'\text{\L}$ for some string $D'$, and removing the second $D'\text{\L}$ half only removes entire blocks of $\T(\ell)$. Since among these blocks there cannot be the first 
block $\B\X\text{\L}$ and since each $E_{i_j}(\ell)$ is itself a block, this contraction also implies that $\T(\ell+1)$
has the form in (\ref{standardform}), proving item 1. Moreover,  item 2. is also verified as 
subcase (b). The induction step is verified also in this case.
\end{enumerate}

It remains to consider the cases where last character of $D$ is not \L{}, but $D$ has at least one \L{} character. It is easy to see that in such a situation, the second condition in the claim statement is guaranteed. For the other conditions, we have four additional cases to consider depending on which blocks the leftmost character and rightmost character removed belong to, i.e, where the right half of $DD$ starts and ends in $\T(\ell)$.

\begin{enumerate}[start=3]

\item The leftmost character removed belongs to the first block $\B\X\text{\L}$. 

We will show that no such a contraction is possible. 
Since we are assuming that $D$ contains the character \L{},  the right $D$ must contain
the  \L{} that occurs at the end of $\B\X\text{\L}$. 
This is the only \L{} contained in $\B\X\text{\L}$ thus the left $D$ cannot contain one, which implies that
such a contraction cannot happen. 
    
\item The leftmost character removed belongs to an $E_{i_j}(\ell)$ substring for some $j \in [h]$ 
and the rightmost character removed belongs to a $\mathcal{\B}_{2p}^1 \X \text{\L}$-cluster. 

We can show that such a contraction is also impossible: 
Since $D$ contains the character \L{}, the right half of $DD$ must contain the \L{} in $E_{i_j}(\ell)$. 

The leftmost character removed cannot be the first character of $E_{i_j}(\ell)$, 
for otherwise the left copy of $D$ would end with a \L{}, 
against the standing hypothesis that  $D$ does not  end with \L{}. Furthermore, since the right $D$ contains an \L{}, also the left $D$ must contain an \L{}. We know that the first character of the right $D$ is not the first character of $E_{i_j}(\ell-1)$, therefore the left $D$ contains at least a character after the \L{}. Let $D'$ (resp. $D''$) be the maximum suffix of the right (resp.\ left) $D$ that does not
contain  \L{}. We can represent the contraction as follows:
    
\begin{align*}
    \T(\ell)=\underbrace{\dots\text{\L}\overbrace{E_{i_j}(\ell}^{D''}}_{D}\underbrace{)(\mathcal{\B}_{2p}^1 \X \text{\L})^* \overbrace{(\mathcal{\B}_{2p}^1}^{\mathclap{D'}}}_{D}\X \text{\L})\dots
\end{align*}
    
Now we observe that in order to have $DD$ be a valid contraction, it must hold that  $D'=D''$, which is impossible since no $E_{i}(\ell)$ string starts with a 0, while $\mathcal{\B}_{2p}^1$ starts with a 0 by construction. 
Therefore, there cannot be a contraction of the this type either.
    
\item The leftmost character removed belongs to a $\mathcal{B}_{2p}^1 \mathcal{X} \text{\L}$-cluster. 
    
 We start by observing that in this case, it is not possible that  the rightmost character removed belongs 
 to a different $(\mathcal{\B}_{2p}^1 \X \text{\L})^*$-cluster. 
 For otherwise, for some $j>0$, the right $D$ must contain $\text{\L}E_j(\ell)$ as a substring, 
 hence, also the left $D$ must contain this substring. 
 But this is impossible because the number of runs of each block $E_j(\ell)$ is different 
 and each $\mathcal{\B}_{2p}^a \X \text{\L}$-block has more runs than $E_{j}(\ell)$ by construction.

Also, in the standing case, we can show that it is not possible that the rightmost character removed belongs to $E_j(\ell)$ 
for some $j>0$. 
Indeed, we know that the only possibility in such a case would be that $E_j(\ell)$ is the block immediately 
after the cluster where the leftmost character removed belongs 
(for an argument analogous to the one used in the previous paragraph).  
Let $D'$ (resp. $D''$) be the maximum suffix of the right (resp. left) $D$ that doesn't contain \L{}. 
Then, the  contraction must have the following form:
    
\begin{align*}
    \T(\ell)=\underbrace{\dots \text{\L} (\overbrace{\mathcal{\B}_{2p}^1}^{D''}}_{D} \underbrace{\X \text{\L})(\mathcal{\B}_{2p}^1 \X \text{\L})^* \overbrace{E_{j}(\ell}^{\mathclap{D'}}}_{D})\dots
\end{align*}
 For $DD$ to be a contraction, we should have that $D'=D''$. However, by construction, $D'$ cannot begin with the character 0, since $E_j(\ell)$ 
 begins with $\B_j$, while $D''$ must begin with a 0, since it starts with $\B_{2p}$ . Hence $D' \neq D''$, a contradiction.

The only remaining case is that  both the leftmost and the rightmost characters removed belong 
 to the same $\mathcal{\B}_{2p}^1 \X \text{\L}$-cluster. In particular such a cluster must contain three  
 $\mathcal{\B}^1_{2p} \X \text{\L}$-blocks, 
 denoted here with $[Block]_1, [Block]_2, [Block]_3$ such that 
 calling $D'$ the maximum suffix of the left $D$ that does not contain \L{} and $D''$ 
 the minimum prefix of the right $D$ that contains \L{}, the contraction must have the following structure:
 
 \begin{align} \label{samecluster-1}
    \T(\ell)=\dots \overbrace{[Blo}^{D_1'}\underbrace{\overbrace{ck]_1}^{D_1''}\,(\mathcal{\B}_{2p}^1 \X \text{\L})^* \,
    \overbrace{[Blo}^{D_2'}}_{D} \underbrace{\overbrace{ck]_2}^{D_2''} \, (\mathcal{\B}_{2p}^1 \X \text{\L})^* \,
    \overbrace{[Blo}^{D_3'}}_{D} \overbrace{ck]_3}^{D_3''}\dots
\end{align}
 
 with $D_1'' = D_2'' = D''$ and $D_2' = D_3' = D'.$
 
 We claim that it must also hold that $D_1' D_3''$ is a $\mathcal{\B}_{2p}^1 \X \text{\L}$-block, hence
 the contraction preserves the structure in (\ref{standardform}).
 
To see this, for each $i=1, 2, 3,$ let us write 
$$[Block]_i = \B_{2p} \B_{2p-1} \dots \B_2 \B_1^{(i)} \B_0 \X^{(i)} \text{\L} $$
 since they might only differ by the number of duplications in the chunks 
$\B_1^{(i)}$ and $\X^{(i)}.$ Since all chunks, $\B_{2p}, \dots, \B_0, \X$ have 
distinct numbers of runs, in order to have $D_1'' = D_2''$ (resp.\ $D_2' = D_3'$)  these strings must 
start (resp.\ end)  in the same chunk $\B_j$ or $\X.$

\noindent
{\em Case 1.} $D_1''$'s first character is a character of $\B_j$ for some $j \neq 1.$ 
Then, necessarily, the prefix of $D_3''$ up to the first \$ is equal to the prefix of $D_2''$ up to the first \$.
Therefore the contraction under consideration 
is equivalent to the alternative contraction where  the right $D$ (resp.\ left $D$) extends from 
the first character of $\B_j$ in $[Block]_1$ (resp.\  $[Block]_2$) to the last character of $\B_{j+1}$ in   
$[Block]_2$ (resp.\  $[Block]_3$). Hence, the contraction, coincides with the removal of a sequence of whole chunks, namely 
$\B_j \dots \B_0 \X\text{\L}(\mathcal{\B}^1_{2p} \X \text{\L})^* \B_{2p} \dots \B_{j+1}.$ 
It is easy to see that in this case the claim in item 1 of the proposition is true, as well as the claim in item 2, as 
subcase (b).

\noindent
{\em Case 2.} $D_1''$'s first character belongs to a chunk $\C_1 \in \{B^{(1)}, \X^{(1)}\}.$ 
Then there are  chunks $\C_1 = \C_1' \C_1'; \, \C_2 =  \C_2' \C_2''; \,  \C_3 = \C_3' \C_3''$ such that  for $\C \in \{\B_1, \X\}$ for each 
$i=1, 2,3,$ it holds that $\C \subseteq \C_i \subseteq \C^*$\footnote{We are abusing notation and for the sake of conciseness, we write $\X^* = \X^{dup}$} 
and we can factorize $[Block]_i$ as follows
$$[Block]_i = \overbrace{A'_i \hat{C}_i'}^{D_i'} \overbrace{\hat{C}_i'' A''_i}^{D_i''},$$
where $A_i'$ and $A_i''$ are sequences of chunks from $\hat{\Omega}.$
\remove{
By Fact \ref{mu-dup-modularity-2}, there is $C \in \{B_1, X\}$ and indices $i_1,  i_2,  i_3,$ and 
sets of indices $ I_1, I_2, I_3,$ such that
\begin{eqnarray*}
\C_1' &=& \C_1[1, i_1] = dup(\mu(C), I_1)[1, i_1] \\
\C_1'' &=& \C_1[i_1+1, |\C_1| ] = dup(\mu(C), I_1 )[i_1+1, |\C_1|] \\
\C_2' &=& \C_1[1, i_2] = dup(\mu(C), I_2)[1, i_2] \\
\C_2'' &=& \C_1[i_2+1, |\C_2| ] = dup(\mu(C), I_2)[i_2+1, |\C_2|] \\
\C_3' &=& \C_3[1, i_3] = dup(\mu(C), I_3)[1, i_3] \\
\C_3'' &=& \C_3[i_3+1, |\C_3| ] = dup(\mu(C), I_3)[i_3+1, |\C_3|] 
\end{eqnarray*} 

Hence, we must have $i_2 = i_3.$ 
}
%
From (\ref{samecluster-1}), using the above definitions, we have

 \begin{align} \label{samecluster-2}
    \T(\ell)=\dots \overbrace{A_1' \C_1'}^{D_1'}\underbrace{\overbrace{\C_1''A_1''}^{D_1''}\,(\mathcal{\B}_{2p}^1 \X \text{\L})^* \,
    \overbrace{A'_2 \C_2'}^{D_2'}}_{D} \underbrace{\overbrace{\C_2'' A_2''}^{D_2''} \, (\mathcal{\B}_{2p}^1 \X \text{\L})^* \,
    \overbrace{A_3'\C_3'}^{D_3'}}_{D} \overbrace{\C''_3 A_3''}^{D_3''}\dots
\end{align}
It holds that  $\C_2'' = \C_1'',$ and $\C_2' = \C_3'$ and $A_1'' = A_2'', \, A_2' = A_3'.$

And, after the removal of the second $D = \C_2'' A_2'' (\mathcal{\B}_{2p}^1 \X \text{\L})^* A_3' \C_3'$, 
the result of the contraction is the string

 \begin{align} \label{samecluster-3}
    \T(\ell+1)=\dots A_1' \C_1 A_1'' \,(\mathcal{\B}_{2p}^1 \X \text{\L})^* \,
    A'_2 \C_2' \C''_3 A_3'' \dots
\end{align}
Since $\C_2' \C''_3 = C_3,$ it follows that
  $A'_2 \C_2' \C''_3 A_3'' = A'_2 \C_3 A_3''$ is a 
$\mathcal{\B}^1_{2p} \X \text{\L}$-block. Hence $\T(\ell+1)$ also has the form (\ref{standardform}), 
proving the item 1. in the proposition's statement.

Moreover, it is easy to check that the item 2. of the proposition's statement is also satisfied as subcase (c).


\remove{    
\begin{align*}
    \T(\ell-1)=\dots (\mathcal{\B}_{2p}^1 \underbrace{\X \text{\L})(\mathcal{\B}_{2p}^1 \X \text{\L})^*
    (\overbrace{\mathcal{\B}_{2p}^1}^{D'}}_{D} \underbrace{\overbrace{\X \text{\L}}^{D''})(\mathcal{\B}_{2p}^1 \X \text{\L})^*
    (\overbrace{\mathcal{\B}_{2p}^1}^{D'}}_{D} \overbrace{\X \text{\L}}^{D''})\dots
\end{align*}
    
    Notice that removing the right $D$, the resulting string is still a $\mathcal{B}_{2p}^1 \mathcal{X} \text{\L}$-cluster because $D'D''=\mathcal{B}_{2p}^1 \mathcal{X} \text{\L}$, hence the structure of the claim is preserved.
 }

\item The leftmost character removed belongs to a $E_{i_j}(\ell)$ substring for some $j \in [h]$ 
and the rightmost character removed belongs to $E_{i_{j+k}}(\ell)$ for some integer $k$ such that $1 \leq k \leq h-j$. 
Assume that $k>1$, then we have that $\text{\L}E_{i_{j+1}}(\ell)$ is a substring of the right $D$. 
This implies that also the left $D$ must contain this substring, but this is impossible because the number of runs of each block $E_i(\ell)$ is different and 
each $\mathcal{\B}_{2p}^a \X \text{\L}$-block has more runs than $E_{i_{j+1}}(\ell)$ by construction. 
Therefore we know that the rightmost character removed belongs to $E_{i_{j+1}}(\ell)$. 
To summarize, so far we showed that if the leftmost character removed belongs to 
$E_{i_j}(\ell)$ and all the previous assumptions hold, the contraction looks like this:
    
\begin{align*}
   \T(\ell)=\underbrace{\dots\text{\L}E_{i_j}(\ell}_{D}\underbrace{)(\mathcal{\B}_{2p}^1 \X \text{\L})^* \overbrace{E_{i_{j+1}}(\ell}^{\mathclap{D'}}}_{D})\dots
 \end{align*}
    
   We now show that the leftmost character of $DD$ belongs either to the $\mathcal{\B}_{2p}^1 \X \text{\L}$-cluster 
    exactly before $E_{i_j}(\ell)$, or to the first block. 
    Assume that the leftmost character of $DD$ belongs to $E_{i_{j-1}}(\ell)$ (we know that it cannot 
    belong to any $E_{i_{j-k}}(\ell)$ with $k>1$). Then if the leftmost character removed is the first \$ of 
    $E_{i_j}(\ell) = \B_{i_j}\$\B_{i_j-1}\$\B_{i_j-2}\$\dots$ or any character further left, we know that the right $D$ must contain the substring $\$\B_{i_j-1}\$$, and therefore also the left $D$ must contain it. 
    This is impossible, since every substring $\B_j$ has a different number of runs, 
    $E_{i_j-1}(\ell) = \B_{i_j-1}\$\B_{i_j-2}\$\dots$ is a block, so it is preceded by an \L{}, 
    and every $E_{i_j-k}(\ell)$ with $k>1$ doesn't contain $\$\B_{i_j-1}\$$ as a substring by construction. 
    If the leftmost character removed is more to the right than the first \$ of $E_{i_j}(\ell)$, then $\text{\L}\B_{i_j}\$$ 
    is a substring of the left $D$, therefore also the right $D$ must contain this substring. 
    This means that there exists a string $E_{i_j+k}(\ell) = \B_{i_j+k}\$\B_{i_j+k-1}\$\dots$ with $k>0$ that contains 
    $\B_{i_j}\$$ as a prefix, but this is impossible by construction. Therefore we showed that the leftmost character of $DD$ belongs either to the $\mathcal{\B}_{2p}^1 \X \text{\L}$-cluster exactly before $E_{i_j}(\ell)$, or to the first block.
    
    We now prove that the result of a contraction like this is the same as removing the $E_{i_j}(\ell)$ string 
    (possibly together with some $\mathcal{\B}_{j}^a \X \text{\L}$-blocks). As before, we call $D'$ (resp. $D''$) the maximum suffix of the right (resp. left) $D$ that doesn't contain \L{}. The contraction now can be imagined as:
    
    \begin{align} \label{lastcase}
    \T(\ell)=\dots(\mathcal{\B}_{2p}^1 \underbrace{\X \text{\L})(\mathcal{\B}_{2p}^1 \X \text{\L})^*\overbrace{E_{i_j}(\ell}^{D''}}_{D}\underbrace{)(\mathcal{\B}_{2p}^1 \X \text{\L})^* \overbrace{E_{i_{j+1}}(\ell}^{\mathclap{D'}}}_{D})\dots
    \end{align}
    
    Notice that the leftmost character removed in $E_{i_j}(\ell) = \B_{i_j}\$\B_{i_j-1}\$\B_{i_j-2}\$\dots$ 
    must be more to the left than the first \$ of $E_{i_j}(\ell)$ for an argument similar to the one used before. 
    We know that $D'=D''$, otherwise $DD$ couldn't be a valid contraction. 
    Then if we imagine deleting the right $D$, we can see that the form of the resulting string 
    $\T(\ell+1)$ is equal to $\T(\ell)$ without $E_{i_j}(\ell)$ (together with  some adjacent $\mathcal{\B}_{j}^a \X \text{\L}$-blocks). 
    This implies that the form of the claim is preserved.
    
In fact, we can show that such a contraction is equivalent to a contraction that does not include any element of $E_{i_{j+1}}(\ell).$

To see this, let us denote by $\B_{i_j}'$ the prefix of $\B_{i_j}$ that is a suffix of the left $D$. Let $\B_{i_j}''$
be  the suffix of $\B_{i_j}$ 
which is a prefix of the right $D$. Recall that the first character removed is a character to the left of the first \$ in the block
$E_{i_j}(\ell) = \B_{i_j}\$ \B_{i_j-1}\$ \cdots \text{\L}.$ Since the contraction only removes some character from $\B_{i_j}$, we must have
 $i_j > 1.$ This is a consequence of the following claim whose proof is deferred below.
 \begin{Claim}
\label{claim:no2edges}
For every $\ell$, if the contraction $\T(\ell) \rightarrowtail  \T(\ell+1)$ removes characters from the substring 
$\$\B_{1}\$\B_{0}\$\X$ belonging to some block $E_{i_j}(\ell)$, then it  removes all the characters contained in 
$\$\B_{1}\$\B_{0}\$\X.$ 
\end{Claim}
 
Let $\B_{i_{j+1}}'$ be the prefix of $E_{i_{j+1}}(\ell) = \B_{i_{j+1}}\$ \B_{i_{j+1}-1} \dots \text{\L}$  coinciding with $D'.$ Hence, in particular, 
$\B_{i_{j+1}}' = \B_{i_j}'.$ Moreover, in both copies of $D$ between the prefix $\B_{i_j}''\$$ and the first $\text{\L}$ we must have 
the  same string $\B_{i_j-1}\$ \dots \$ \B_1' \$ \B_0\$ \X_{i_j}' \text{\L}$ (which is the suffix of $E_{i_j}(\ell)$ starting immediately after the first $\$$).
 
Then,  by making explicit the above substructures of the first block intersecting the right $D$ and the 
blocks $E_{i_j}(\ell)$ and $E_{i_{j+1}}(\ell)$, we have that (\ref{lastcase}) can be rewritten as
   \begin{align} \label{lastcase-2}
    \T(\ell)=\dots(\B_{2p}^1\$ \dots \$\B_{i_j}' \underbrace{\B_{i_j}''\$ \dots
     \X' \text{\L})(\mathcal{\B}_{2p}^1 \X \text{\L})^*\overbrace{\B_{i_j}'}^{D''}}_{D} \underbrace{\B_{i_j}'' \$ \dots \X' \text{\L} (\mathcal{\B}_{2p}^1 \X \text{\L})^* 
    \overbrace{\B_{i_j}'}^{D'}}_{D} \B_{i_{j+1}}'' \$ \dots \text{\L} \dots 
    \end{align}
    
  It is not hard to see that there is another possible contraction on $\T(\ell)$ which has exactly the same effect. In fact, we can 
  see that the suffix starting with $\B_{i_j}'$ in the first block shown in (\ref{lastcase-2}) is equal to $E_{i_j}(\ell)$. 
  Hence the $\T(\ell)$ can be actually factorized as
   \begin{align} \label{lastcase-3}
    \T(\ell)=\dots(\B_{2p}^1\$ \dots \$\underbrace{\overbrace{\B_{i_j}' \B_{i_j}''\$ \dots \X \text{\L}}^{E_{i_j}(\ell)})(\mathcal{\B}_{2p}^1 \X \text{\L})^*}_{\tilde{D}}\underbrace{E_{i_j}(\ell) (\mathcal{\B}_{2p}^1 \X \text{\L})^*}_{\tilde{D}} 
    E_{i_{j+1}}(\ell) \dots 
    \end{align}
 where the contraction that removes the right $\tilde{D}$ produces exactly the same string $\T(\ell+1)$ that is obtained from 
 removing $D$ in (\ref{lastcase}). 
 
 It is to be noticed that $\tilde{D}$ is a sequence of elements from $\tilde{\Omega}.$ This shows that item 2. in the 
 proposition's statement  is also 
 verified as subcase (b).
  
\end{enumerate}

We covered the cases in which the leftmost character removed belongs to the first block $ BX\text{\L}$, a substring $E'_j(\ell)$ for some 
$j>0$ or a $\mathcal{B}_{2p}^1 \mathcal{X} \text{\L}$-cluster, hence we exhausted every possibility and 
the proof of the induction step is complete.

\bigskip
\noindent
{\em Proof of Claim \ref{claim:no2edges}.} We argue by contradiction. 
Suppose that for some integer $\ell \geq 0,$  there exists a contraction 
$DD$ on $\T(\ell)$ that removes some but not all of the characters of the subsequence 
$\$\B_{01}\$\B_{0}\$\X$ that belongs to $E_{i_j}(\ell)$ for some $j>0$. 
By assumption the string $\T(\ell)$ 
has the form in (\ref{standardform}), and $D$ must contain at least 
one \L{} character. 
Thus, the left $D$ must contain the substring $\text{\L{}}\B_{i_j}\$$ that belongs to $E_{i_j}(\ell)$, 
but the right $D$ cannot contain this substring since it is unique in $\T(\ell)$. 
Therefore there cannot exist a contraction that removes some but not all of the characters 
of $\$\B_{1}\$\B_{0}\$\X$. \qed

\end{proof}

%
\remove{
Now let's prove the second part of the Claim: let $DD$ be a contraction on $T(l)$ such that its right $D$ affects the substring $\$B_{0''}\$B_{0'}\$\mathcal{X}$ belonging to some block $E'_{i_k}(l)$ and $DD$ also affects the substring $\$B_{0''}\$B_{0'}\$\mathcal{X}$ belonging to another block $E'_{i_j}(l)$, for some integers $0 < j < k$. We refer to the substring of the type $\$B_{0''}\$B_{0'}\$\mathcal{X}$ that belongs to $E'_{i_j}(l)$ (resp. $E'_{i_k}(l)$) with $\hat{B}_j$ (resp. $\hat{B}_k$). We know by Claim \ref{claim:form} that $T(l)$ has the form described there. We know by the previous result that $\hat{B}_k$ belongs entirely to the right $D$. If the first character removed is situated either before or after the substring $B_{i_k}\$$ that belongs to the block $E'_{i_k}(l)$ (the first part of the block) then the substring $\text{\L{}}B_{i_k}\$$ would be entirely contained in the left $D$ or in the right $D$, but this substring is unique. We therefore know that the first character removed belongs to the substring $B_{i_k}\$$. The right $D$ must contain the symbol \L{}, but cannot contain the first \$ of the block $E'_{i_k+1}$. We also know that the maximum suffix of the left $D$ that doesn't contain \L{} doesn't begin with 0, therefore the last character removed must belong to $E'_{i_k+1}$. Therefore, the duplication looks like this:

\begin{align*}
    T(l)=\underbrace{\dots \overbrace{\$ \mathcal{B}_{0''} \$ \mathcal{B}_{0'} \$ \mathcal{X} }^{\hat{B}_j} \text{\L} (\mathcal{B}_{2p}^1 \mathcal{X} \text{\L})^*B}_D \underbrace{{}_{i_k} \$ B_{i_k-1} \$ \dots \overbrace{\$ \mathcal{B}_{0''} \$ \mathcal{B}_{0'} \$ \mathcal{X}}^{\hat{B}_k} \text{\L} (\mathcal{B}_{2p}^1 \mathcal{X} \text{\L})^*B}_D {}_{i_k+1} \$ \dots
\end{align*}

The minimum prefix of the right $D$ that contains two \$ characters is $D'\$ B_{i_k-1} \$$, where $D'$ is a string that doesn't contain the characters \$ and \L{}. Thus, the left $D$ must have the same prefix, but remember that the strings $B_{0'}, B_{0''}, B_1, B_2, \dots , B_{2p}$ contain an incremental number of runs, so this is only possible if the first character of $DD$ belongs to a $\mathcal{B}_{2p}^1 \mathcal{X} \text{\L}$-cluster. Therefore, we have that $DD$ affects only $\hat{B}_k$ but not $\hat{B}_j$, hence a contradiction and we conclude the proof of the claim.
\end{proof}
}
\remove{
Notice that $T$ has one occurrence of the $\mathcal{X}^{dup}=X_1^{dup}\dots X_n^{dup}$ substring. We will therefore refer to the $\mathcal{X}^{dup}$ substring of $T$ without ambiguity. For $i \in [n]$, we let $X_i(l)$ denote the substring of $T(l)$ formed by all the characters that belong to the $X_i^{dup}$ substring of $\mathcal{X}^{dup}$. We will say that $X_i$ is \textit{activated} in $T(l)$ if $X_i(l)=X_i$. Intuitively speaking, $X_i$ is activated in $T(l)$ if it has undergone $d$ contractions to turn it from $X_i^{dup}$ into $X_i$.

\begin{claim}
Let $i \in [p]$, and suppose that the substring $\$ B_{0''} \$ B_{0'} \$ \mathcal{X}$ that belongs to $E'_i$ is not removed in $T(l-1)$ but is removed in $T(l)$. Let $t$ be the number of $X_i$'s that were activated in $T(l-1)$. Suppose that $v_{i_1}$ and $v_{i_2}$ are the two endpoints of edge $e_i$.

Then the number of contractions that have affected the substring $\$ B_{0''} \$ B_{0'} \$ \mathcal{X}$ of $E'_i$ is at least $dc+dn-1$ if $X_{i_1}$ or $X_{i_2}$ is not activated in $T(l-1)$, or at least $\min\{dt+dn, dc+dn-1\}$ if $X_{i_1}$ and $X_{i_2}$ are both activated in $T(l-1)$.
\end{claim}

\begin{proof}
By \ref{claim:form}, in $T(l-1)$, $E'_i(l-1)$ belongs to a $\mathcal{B}_i^{01}\mathcal{X}\text{\L}$-block. 
Let's call $\hat{B}_i(l-1)$ (resp. $\hat{B}_i$) the substring $\$ B_{0''} \$ B_{0'} \$ \mathcal{X}$ that belongs to $E'_i(l-1)$ 
(resp. $E'_i$). As $\hat{B}_i(l-1)$ gets removed completely after the $l$-th contraction of some substring $DD$, 
it follows that $D$ must contain a substring that is equal to $\hat{B}_i(l-1)$. 
The right $D$ of the $DD$ square certainly contains the $\hat{B}_i(l-1)$ substring that gets removed, 
but consider the copy of $\hat{B}_i(l-1)$ in the first $D$ of the $DD$ square. That is, we
can represent the contraction as

$$T'\underbrace{D_1 \hat{B}'_i(l-1) D_2}_D \underbrace{D_1 \hat{B}_i(l-1) D_2}_D T''$$

where $D = D_1 \hat{B}_i(l-1) D_2$ and $\hat{B}'_i(l-1)$ is a substring equal to $\hat{B}_i(l-1)$. By Claim \ref{claim:no2edges}, we know that there are only two blocks that can contain $\hat{B}'_i(l-1)$: either it is $BX\text{\L}$, which is the $\mathcal{B}_{2p}^0\mathcal{X}\text{\L}$-block at the start of $T(l-1)$, or it is a $\mathcal{B}_{2p}^1\mathcal{X}\text{\L}$-block from the cluster preceding $E'_i(l-1)$. We analyze these two cases, which will prove the two cases of the claim.

Suppose that $\hat{B}'_i(l-1)$ is located in the first block $BX\text{\L}$ of $T(l-1)$. Note that since $\mathcal{X}_{e_i}$ contains $X_{i_1}$ and $X_{i_2}$ in their contracted form (as opposed to $X_{i_1}^{dup}$ or $X_{i_2}^{dup}$), $X_{i_1}$ and $X_{i_2}$ must be activated in $T(l-1)$ for the $DD$ contraction to be possible. Moreover for $\hat{B}_i(l-1)$ to be equal to a substring of $BX\text{\L}$, every other $X_j$ with $j \neq i_1,i_2$ that is activated must be contracted in 
$\hat{B}_i(l-1)$ (i.e., $\hat{B}_i$ contains $X_j^{dup}$, but must contain $X_j$ in $\hat{B}_i(l-1)$). This requires at least $d(t-2)$ contractions. Moreover, $B$ contains the $B_{0''}$ substring whereas $\hat{B}_i$ contains $B_{0''}^*$. There must have been at least $dn + 2d -1$ affecting the $B_i^{01}$ substring of $\hat{B}_i$. Counting the contractions removing $\hat{B}_i(l-1)$, this implies the existence of $d(t-2)+dn+2d-1+1 = dn + dt$ contractions affecting $\hat{B}_i$.

If instead $\hat{B}'_i(l-1)$ was located in a $\mathcal{B}_{2p}^1\mathcal{X}\text{\L}$-block, call this block $P$, then it suffices to note that $P$ contains $B_{0'}$ as a substring whereas $\hat{B}_i$ contains $B_{0'}^*$. Counting the contraction that removes $\hat{B}_i(l-1)$, it follows that at least $dc+2d-1$ contractions must have affected $\hat{B}_i$.
\end{proof}

We have shown that  $E'_i$ can be removed from $T(l)$: either by a contraction acting also on the $BX \text{\L}$ substring at the start of $T(l - 1)$, or by 
a contraction that uses a block from a $\mathcal{B}^1_{2p}\mathcal{X}\text{\L}$-cluster. 

We say that the $E'_i$'s removed by the former type of contraction are of  Type 1, and otherwise we say that it is of  Type 2.

If all Type 1 $E'_i$'s were removed while having the same set of activated $X_i$'s, 
it would be easier to account for the contribution of the contractions in the first block 
and the contractions in the edge gadgets. Since this is not necessarily the case, we exploit the fact that 
in the sequence of edge gadgets 
there are many sequences of consecutive edge gadgets within which the Type 1 are removed with the same
set of activated $X_i$'s.  We show that considering only these subsequences of contractions suffices for guaranteeing the bound.


Fix $k \in [p]$, and let $act(E'_k)$ denote the set of activated $X_i$'s in the $T(l-1)$ where $E'_k$ gets removed (i.e.  $E'_k$ is
not removed from $T(l - 1)$ but is removed from $T(l)$). Let us partition $[p]$ into intervals of integers 
$P_a = [1 + am \dots m + am]$, where $a \in \{0, \dots, p/m - 1\}$. 
We say that interval $P_a$ is \textit{homogeneous} if, for each $i, j \in P_a$ such that $E'_i$ and $E'_j$ are of Type 1, $act(E')= act(E'_j)$. In other words, $P_a$ is homogeneous if all the Type1 $E'_i$ substrings corresponding to those in $P_a$ are removed with the same set of
activated $X_i$'s.

The following claim was proved in \cite{Lafond}. We include the proof here for the sake of self containment. 

\begin{claim} \cite{Lafond}
There are at least $p/m-2n$ homogeneous intervals. 
\end{claim}

\begin{proof}
Observe that once an $X_i$ is activated, it remains so for the rest of the contraction sequence. Since there are $n$ of the $X_i$’s, there are only $n + 1$ possible values for $act(E'_k)$ (counting the case when none of them are activated). There are $p/m$ intervals, and it follows that at most $n + 1 \leq 2n$ of them are not homogeneous, which gives the desired result. 
\end{proof}

We can now complete the proof, closely following the argument in \cite{Lafond}. 
For each edge gadget, let $cost(E'_i)$ be the overall number of contractions 
used that act on  $E'_i$. 
Let $P_{a_1}, \dots , P_{a_h}$ be the set of homogeneous intervals, where, by the previous claim, we know that $h \geq p/m - 2n$. 
Let $P_{\tilde{a}}$ be the homogeneous interval with the minimum sum of $E'_i$ costs, i.e., 

$$P_{\tilde{a}} = \argmin_{j \in [h]} = \sum_{i \in P_{a_j}}{cost(E'_i)} $$ 

By Claim \ref{claim:no2edges} no two $E'_i$'s share their cost. Then the total number of contractions, which is at least 
the total number of contractions acting on edge gadgets in the homogeneous intervals, $P_{a_1}, \dots P_{a_{h}},$ is lower bounded as: 
\begin{equation} \label{eq:contr-lowerbound}
N \geq \sum_{a \in \{0,1,\dots, \frac{p}{m}-1\}} \sum_{i \in P_a} cost(E'_i) \geq 
\sum_{j \in [h]}  \sum_{i \in P_{a_j}} cost(E'_i) \geq  \left( \frac{p}{m}-2n \right) \sum_{i \in P_{\tilde{a}}}{cost(E'_i)},
\end{equation}
where in the last inequality we used  the minimality of $P_{\tilde{a}}.$
Notice that even if 
we are ignoring the cost of the contractions involved in non-homogeneous intervals, 
but this does not affect our argument aiming at showing a 
lower bound on the total number of contractions.
 
Assume that there is at least one $i \in P_{\tilde{a}}$ such that $E'_i$ is of Type 1. 
Then by Claim 4, $cost(E'_i)$ is either at 
least $\min\{dc + dn - 1, dt + dn\}$ where $t = |act(E'_i)|$, or $cost(E'_i)$ is at least $dc + dn - 1$. 
If $dt + dn \geq dc + dn - 1$, we may assume that $E'_i$ is of Type 2 
since removing $E'_i$ using Type 2 contractions will not increase its cost. 
We will therefore assume that if there is at least one $E'_i$ of Type 1 in 
$P_{\tilde{a}}$, then $dt + dn < dc + dn - 1$ and thus $cost(E'_i) \geq dt + dn$.

Now, choose any $i$ in $P_{\tilde{a}}$ such that $E'_i$ is of Type 1, and let 
$W$ be the set of vertices of G corresponding to those in $act(E'_i)$. That is, $v_j \in W$ 
if and only if $X_j$ is activated when $E'i$ gets removed. If there does not exist an $E'_i$ of Type 1 to choose, 
then define $W = \emptyset$. Denote $|W| = t$ and $|E(W)| = s$.  Then, we have
\begin{equation} \label{eq:contr-cost}
\sum_{i \in P_{\tilde{a}}} cost(E'_i) \geq (m-s)(dc + dn - 1) + s(dt + dn).
\end{equation}

For any $E'_i$ where $i \in P_a$, by Claim \ref{claim:no2edges}, either $e_i$ is not in $W$ and $cost(E'i) \geq dc + dn - 1$, 
 or $e_i$ is in $W$ and $cost(E'_i) \geq dt + dn$. 
Notice that it is crucial here that $P_a$ is homogeneous as this guarantees that every Type 1 $E'_i$ 
 uses the same value of $t$ in the cost $dt + dn$. 

The desired result will be direct consequence of the following final claim

\begin{claim}
$W$ is a subgraph of $G$ of cost $c(m - s) + ts \leq r$.
\end{claim}
\begin{proof}
Assume by contradiction that $c(m - s) + ts > r.$
Since all the quantities are integral, we have
 $c(m - s) + ts \geq r + 1$. We will show that this contradicts the standing assumption on the number of contractions. 
From (\ref{eq:contr-lowerbound}) and (\ref{eq:contr-cost}) it follows that 
the total number of contractions can be lower bounded as follows:
\begin{align*}
    N \geq &{} \left( \frac{p}{m}-2n\right) [(m-s)(dc+dn-1)+s(dt+dn)] \\
   = &{} \left( \frac{p}{m}-2n\right) \cdot d \cdot [c(m-s) + st + nm] + \left( \frac{p}{m}-2n\right)(s-m) \\
\geq &{} \left( \frac{p}{m}-2n\right) \cdot d \cdot [r + 1 + nm] + \left( \frac{p}{m}-2n\right)(s-m) \\
   = &{} \left( \frac{p}{m}-2n\right) \cdot d \cdot [r + nm] + \left( \frac{p}{m}-2n\right)(d + s -m) \\
   = &{} \frac{p}{m} \cdot d \cdot (r + nm) -2dn(r+nm) + \left( \frac{p}{m}-2n\right)(d + s -m) \\
   \geq  &{} \frac{p}{m} \cdot d \cdot (r + nm) + 4cdn,
\end{align*}
where the last inequality holds for the large values $d = m + 1$ and $p = m (n+m)^{10}$ fixed for the reduction.

Since assuming $r < c(m-s)+ts$ we  reached a contradiction to the 
hypothesis on $N$, it must indeed hold that $r \geq c(m-s)+ts = cost(W)$ as desired. 
This concludes the proof of the claim and of the 
theorem.
\end{proof}
}

Using  the factorization of $\T(\ell)$ into chuncks, given by (\ref{standardform}), and the 
definition of  $\mu,$ we have that 
the string $T(\ell) = \mu^{-1}(\T(\ell))$ is a well defined sequence of chunks from $\Omega$, for each $\ell=0,1,\dots, k.$ 
In particular, we have $\mu^{-1}(\T(0)) = T$ and $\mu^{-1}(\T(k)) = \mu^{-1}(\S) = S.$

As a result of point 2. of the previous proposition we have the following.
\medskip

\noindent
\begin{Claim} \label{claim:equivalence2}
For each $\ell = 0, 1, \dots, k-1,$ 
if there is a contraction $\T(\ell)) \rightarrowtail \T(\ell+1))$ then  $T(\ell)) \rightarrowtail T(\ell+1))$ holds too.
\end{Claim}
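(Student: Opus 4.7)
The plan is to invoke Proposition \ref{proposition:form} to classify the contraction $\T(\ell) \rightarrowtail \T(\ell+1)$ into one of the three cases (a), (b), (c) of item~2, and in each case translate the contraction back through $\mu^{-1}$ to obtain a contraction $T(\ell) \rightarrowtail T(\ell+1)$, relying respectively on item~1 of Fact~\ref{mu-dup-modularity-2}, on the modularity of $\mu$ under concatenation, and on item~3 of Fact~\ref{mu-dup-modularity-2}. Throughout, we use the bijectivity of $\mu$ from $\Omega$ onto $\hat{\Omega}$ together with the factorization of $\T(\ell)$ and $\T(\ell+1)$ into chunks of $\hat{\Omega}$ guaranteed by item~1 of Proposition~\ref{proposition:form}, which makes $T(\ell)=\mu^{-1}(\T(\ell))$ and $T(\ell+1)=\mu^{-1}(\T(\ell+1))$ well-defined concatenations of chunks from $\Omega$.

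In case (a), the contracted square lies entirely inside a single non-singleton chunk $\hat{C}' \in \mathbb{\B}_0 \cup \mathbb{\B}_1 \cup \mathbb{\X}$, so $\hat{C}' = dup(\mu(C), I \cup \{i^*\})$ for uniquely determined $C \in \{B_0, B_1, X\}$, $I \subseteq [|C|]$ and $i^* \in [|C|] \setminus I$. By item~1 of Fact~\ref{mu-dup-modularity-2}, the chunk appearing in the same position of $\T(\ell+1)$ is $\mu(dup(C, I))$; applying $\mu^{-1}$ therefore replaces $dup(C, I\cup\{i^*\})$ by $dup(C, I)$ inside $T(\ell)$, which is a single-character contraction yielding $T(\ell+1)$. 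In case (b), we have $\T(\ell) = \hat{P}\,\hat{D}\hat{D}\,\hat{Q}$ with $\hat{D}$ a concatenation of whole chunks from $\hat{\Omega}$, hence $T(\ell) = \mu^{-1}(\hat{P})\,\mu^{-1}(\hat{D})\,\mu^{-1}(\hat{D})\,\mu^{-1}(\hat{Q})$, and removing the right copy of $\mu^{-1}(\hat{D})$ produces $T(\ell+1)$.

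Case (c) is the technical heart of the argument. By Proposition~\ref{proposition:form} there exist chunks $\hat{C}_r = dup(\mu(C), I_r)$ for $r=1,2,3$, a common $C \in \{B_0, B_1, X\}$, factorizations $\hat{C}_r = \hat{C}'_r \hat{C}''_r$, sequences of whole chunks $\hat{P}, \hat{Q}, \hat{D}'$, and indices $i, j$ such that the contracted square equals $\hat{C}''_1\hat{D}'\hat{C}'_2 \cdot \hat{C}''_2\hat{D}'\hat{C}'_3$, with $\hat{C}''_1=\hat{C}''_2$ and $\hat{C}'_2=\hat{C}'_3$. Setting $C_r = dup(C, I_r)$ and $D' = \mu^{-1}(\hat{D}')$, I will apply item~3 of Fact~\ref{mu-dup-modularity-2} to each of these two prefix/suffix equalities: this yields truncated cut indices $i', j'$ (the images of $i, j$ through the $\min$-truncation of item~3) and factorizations $C_r = C'_r C''_r$ with $C''_1 = C''_2$ and $C'_2 = C'_3$. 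Hence $D = C''_1 D' C'_2 = C''_2 D' C'_3$ is a genuine half-square inside $T(\ell)$, and contracting $DD$ produces $\mu^{-1}(\hat{P})\,C_1 D' C_3\,\mu^{-1}(\hat{Q}) = \mu^{-1}(\T(\ell+1)) = T(\ell+1)$.

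The main obstacle is case (c): the cut positions $i, j$ inside $\hat{C}_1, \hat{C}_2$ may exceed $|C|+|I_r|$ because $\mu$ stretches chunks through the ternary blocks $\tilde{B}_i$ and $\tilde{X}$, so the preimage positions are not literal. The $\min\{i, c+|I'|\}$ truncation in item~3 of Fact~\ref{mu-dup-modularity-2} is precisely what compensates for this stretching, but one must still verify that after truncation the three pieces $C_1, C_2, C_3$ genuinely align into a square $DD$ of $T(\ell)$. This verification ultimately rests on the square-freeness of both $C$ and $\mu(C)$ and on the one-to-one correspondence between duplicated positions in $C$ and in $\mu(C)$ given by item~1 of Fact~\ref{mu-dup-modularity-2}; once established, the required contraction $T(\ell) \rightarrowtail T(\ell+1)$ follows immediately.
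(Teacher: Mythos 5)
Your plan follows the paper's proof essentially step for step: the same three-case classification via item 2 of Proposition \ref{proposition:form}, the same direct treatment of cases (a) and (b) via the bijectivity and modularity of $\mu$, and the same use of Fact \ref{mu-dup-modularity-2} to pull the square of case (c) back through $\mu^{-1}$, including the correct identification of the out-of-range cut indices as the main obstacle. The one imprecision is that the suffix/prefix equalities $\C_1''=\C_2''$ and $\C_2'=\C_3'$ are transferred by item 2 of the Fact after first normalizing the cut indices into the range $[|C|+|I_j|]$ (the paper's ``shift of the window'', which is legitimate because all chunks derived from $C$ share the same tail $\mu(C)[|C|+1,|\mu(C)|]$), with item 3 reserved only for the final identity $C_2[1,i_2]\,C_3[i_3+1,|C_3|]=C_3$; this is exactly the verification you flag as remaining, so the argument is sound.
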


This claim implies that there is a sequence of contractions
$$T = \mu^{-1}(\T) = \mu^{-1}(\T(0))  \rightarrowtail T(1) \rightarrowtail \cdots \rightarrowtail T(k) = \S$$
that proves the ``only if''  part of the statement of the Lemma.

\medskip
\noindent
{\em Proof of Claim \ref{claim:equivalence2}.}  Fix $0 \leq \ell \leq k-1.$ Let $\T(\ell) = \A \D \D \A' \rightarrowtail \A \D \A' = \T(\ell+1).$
We argue by cases according to three possibilities singled out by Proposition \ref{proposition:form}, item 2. 

\noindent
{\em Case 1.}  $\D$ is a single character within one chunk $\C$ of $\T(\ell).$ Therefore,  there must exist, 
two possibly empty sequences of chunks, $\T'$ and $\T''$ together with  an index $i^*$ and a chuck $\C' \in \hat{\Omega}$ 
such that $\C = dup(\C', i^*)$ 
$\T(\ell) = \T' \C \T''$ and $T(\ell+1) = T' \C' T''.$

Since $\mu$ is one-one, we have that there are chunks $C, C' \in \Omega$ such that $\C' = \mu(C'),$ and 
$\C = \mu(C) = dup(\mu(C'), i^*) = \mu(dup(C', i^*)).$ Then we have that 
\begin{eqnarray*}
T(\ell) &=& \mu^{-1}(\T') \mu^{-1}(\C)  \mu^{-1}(\T'') =  \mu^{-1}(\T') dup(C', i^*) \mu^{-1}(\T''),\\ 
\mu^{-1}(\T(\ell+1)) &=& \mu^{-1}(\T') C'  \mu^{-1}(T''),
\end{eqnarray*}
from which it is easier to see that there is a
contraction from $T(\ell))$ to $T(\ell+1).$ 

\medskip
 \noindent
{\em Case 2.}  $\T(\ell) = \A \D \D \A' \rightarrowtail \A \D \A' = \T(\ell+1)$ for some $\D$ being a sequence of whole chunks from $\hat{\Omega}.$
Then, also $\A$ and $\A'$ are sequences of whole chunks from $\hat{\Omega}.$ Hence, let $A = \mu^{-1}(\A), A' = \mu^{-1}(\A), 
D = \mu^{-1}(\D).$  We have
$T(\ell) = \mu^{-1}(\T(\ell)) = A D D  A' $ and $\mu(T(\ell+1))= A D A'$ which shows that 
there is a contraction from $T(\ell)$ to $T(\ell+1)$ which acts on the repeat $D D.$ 

\medskip
 \noindent
{\em Case 3.} there are  chunks $\C_1 = \C_1' \C_1''; \, \C_2 =  \C_2' \C_2''; \,  \C_3 = \C_3' \C_3''$ such that  for some $\C \in \{\B_0, \B_1, \X\}$ for each 
$i=1, 2,3,$ it holds that $\C \subseteq \C_i \subseteq \C^*$ and
 $\D = \C_1'' \D' \C_2' = \C_2'' \D' \C_3',$ and $C_1'$ is a suffix of $\A$ and $\C_3''$ is 
a prefix of $\A'.$

Let $C \in \{B_0, B_1, X\},$ such that $\C = \mu(C).$ Then, there are indices $i_1,  i_2, i_3$ and 
sets of indices $ I_1, I_2, I_3,$ such that
\begin{eqnarray*}
\C_1' &=& \C_1[1, i_1] = dup(\mu(C), I_1)[1, i_1] \\
\C_1'' &=& \C_1[i_1+1, |\C_1| ] = dup(\mu(C), I_1 )[i_1+1, |\C_1|] \\
\C_2' &=& \C_1[1, i_2] = dup(\mu(C), I_2)[1, i_2] \\
\C_2'' &=& \C_1[i_2+1, |\C_2| ] = dup(\mu(C), I_2)[i_2+1, |\C_2|] \\
\C_3' &=& \C_3[1, i_3] = dup(\mu(C), I_3)[1, i_3] \\
\C_3'' &=& \C_3[i_3+1, |\C_3| ] = dup(\mu(C), I_3)[i_3+1, |\C_3|], 
\end{eqnarray*} 
with  $\C_2'' = \C_1'',$ and $\C_2' = \C_3',$ hence $i_2 = i_3.$

Note that, for each $j=1,2,3,$ we can assume that $i_j \in [|C|+|I_j|].$ In fact, if this is not the case, since  
$$dup(\mu(C), I_j)[|C|+|I_j|, |\mu(C)|+|I_j] =  \mu(C)[|C|+1, |\mu(C)|],$$ we 
we can set $i_j =  |C|+|I_j|$ and still preserve the equalities $\C_2'' = \C_1'',$ and $\C_2' = \C_3'$ (this is equivalent to a 
``shift'' o the left of the window that selects the characters in $\D$,  without changing the result of the contraction).

Therefore, there are sequences of chunks $\T', \T'', \A_1, \A_2,$ such that we can factorize 
$\T(\ell)$ as follows (where  $\A = \T' \A_1 \C'_1,$ and $\A' = \C''_3 \A_2 \T''$):  

 \begin{align} \label{case-c-1}
    \T(\ell)=\T'  \A_1 \C_1' \underbrace{\C_1'' \D' \C_2'}_{\D} \underbrace{\C_2'' \D' \C_3'}_{\D} \C''_3 \A_2 \T''
\end{align}
Since  $\C_2'' = \C_1'',$  
the result of the contraction is the string

 \begin{align} \label{case-c-2}
    \T(\ell+1)=\T' \A_1 \C_1 \D' \C_2' \C''_3 \A_2 \T'' = \T' \A_1 \C_1 \D' \C'_3 \A_2 \T'' 
\end{align}
where, in the last equality we used $\C_2' \C''_3 = \C_3$ since $\C_2' = \C_3'.$. This also implies 
\begin{multline} \label{fromfact1}
dup(\mu(C), I_3)[1, |\C_3|] = \C_3 = \C_2'  \C_3'' = dup(\mu(C), I_2)[1, i_2]  dup(\mu(C), I_3)[i_3+1, |\C_3|]
\end{multline}

For $i=1, 2, 3,$ let $C_i = \mu^{-1}(\C_i).$ Moreover, let $T' = \mu^{-1}(\T'), \,
T'' = \mu^{-1}(\T''), D' = \mu^{-1}(\D') \, A_1 = \mu^{-1}(\A_1), A_2 = \mu^{-1}(\A_2).$

Then, we can factorize $T(\ell) = \mu^{-1}(\T(\ell))$ as follows:
\begin{align} \label{case-c-T}
    T(\ell)=T'  A_1 C_1[1,i_1] C_1[i_1+1, |C_1|] D' C_2[1,i_2] C_2[i_2+1, |C_2|] D' C_3[1,i_3] C[i_3+1, |C_3|] A_2 T''
\end{align}

By Fact \ref{mu-dup-modularity-2}, the equality 
\begin{multline}
 dup(\mu(C), I_1 )[i_1+1, |\C_1|] = \C_1[i_1+1, |\C_1| ] = \C_1'' \\
 = \C_2'' = \C_1[i_2+1, |\C_2| ] = dup(\mu(C), I_2)[i_2+1, |\C_2|]
 \end{multline}
 implies 
\begin{multline}
C_1[i_1+1, |C_1| ] = dup(C, I_1 )[i_1+1, |C_1|]  \\
 = dup(C, I_2)[i_2+1, |C_2|] = C_2[i_2+1, |C_2| ].
 \end{multline} 

Analogously, from $C_2' = C_3'$ and $i_2 = i_3$ we get 
\begin{multline}
C_2[1, i_2] = dup(C, I_2 )[1, i_2]  
 = dup(C, I_3)[1, i_3] = C_3[1, i_3 ].
 \end{multline}

$$dup(C, I_3) = dup(C, I_2)[1,\min\{i_2, |C|+|I'|\}]  dup(C, I_3)[i_3+1,|C|+|I_3|).$$

It follows that letting $D = C_1[i_1+1, |C_1|] D' C_2[1,i_2]$ we also have 
$D = C_2[i_2+1, |C_2|] D' C_3[1,i_3].$ Hence, 
$T(\ell) = T'  A_1 C_1[1,i_1] D D C_3[i_3+1, |C_3|] A_2 T''$ and we have the contraction 
\begin{multline} \label{laststeps}
T(\ell) = T'  A_1 C_1[1,i_1] D D C[i_3+1, |C_3|] A_2 T'' 
\rightarrowtail T'  A_1 C_1[1,i_1] D C[i_3+1, |C_3|] A_2 T''  \\
= T'  A_1 C_1[1,i_1] C_1[i_1+1, |C_1|] D' C_2[1,i_2] C_3[i_3+1, |C_3|] A_2 T''.
\end{multline}

Finally, using (\ref{fromfact1}) we have 

\begin{multline}
C_2[1,i_2] C_3[i_3+1, |C_3|] 
= dup(C, I_2)[1, i_2] dup(C, I_3)[i_3+1, |C|+|I_3|] \\
= dup(C, I_3)[1, |C_3|] = C_3
\end{multline}

It follows that the rightmost expression in (\ref{laststeps}) becomes 
$$T'  A_1 C_1 D' C_3 A_2 T''  = \mu^{-1}(\T(\ell+1)),$$ 
where the equality follows from (\ref{case-c-2}). We have then proved that the
contraction in (\ref{laststeps}) is the desired contraction from $T(\ell)$ to $T(\ell+1).$

%

\newpage
\section{The proof of Lemma \ref{lemma:function}}

\noindent
{\bf Lemma  \ref{lemma:function}.}
{\em Fix $q \in \{2,3,4,5\}.$ Let $S = 0^{l_1} 1^{l_2} \dots s_n^{l_n}$ 
and $T = 0^{l'_1} 1^{l'_2} \dots t_m^{l'_m}$ be purely alternating strings 
over the same $q$-ary alphabet $\Sigma = \{0,1,2,\dots, q-1\}.$ 
 Then, 
$S \Rightarrow^N_* T$ if and only if there exists 
a function $f:\{1, \dots, n-q+2\}\mapsto\{1, \dots, m-q+2\}$ 
such that:
\begin{enumerate}
\item $f(1) = 1$ and $f(n-q+2) = m-q+2$
\item $f(i) = j \implies s_i = t_j$ and  for each $u=0, \dots, q-2$ we have that $l_{i+u} \leq l'_{j+u}$
\item $f(i) = j$ and $f(i') = j'$ and $i<i' \implies j<j'$
    \item if $q =5,$ and  $f(i) = j$ and $f(i+1) = j' \neq j+1 \implies$ there exists a substring 
    $M$ in $T$ starting in a position $p$ such that $j \leq p \leq j'$ 
    with the form $M=s_{i+3}^{l'_p}, s_{i+4}^{l'_{p+1}}, \dots  s_{i+q}^{l'_{p+q-3}}, s_{i+q+1}^{l'_{p+q-2}}$ 
    such that for each $u=0, 1, \dots, q-3$ it holds that $l_{i+3+u} \leq l'_{p+u}$ and $l_{i+1} \leq l'_{p+q-2}.$
\end{enumerate}
}
\begin{proof} (sketch)
With reference to Definition \ref{defi:normal-duplications}, given a normal duplication, in case (i) (resp.\ (ii) ) we say that the duplication is {\em normal of type 1} (resp.\ {\em normal of type 2}).

We first prove the ``{\bf only if}'' direction of the statement. We argue by induction on the number of normal duplications. Let $\ell$ be such number, i.e.,  
$S \Rightarrow^N S_1 \Rightarrow^N S_2 \Rightarrow^N \cdots \Rightarrow^N S_{\ell -1} \Rightarrow^N T.$
Let $f_0(i) = i$ for each $i=1, \dots, n-q+2.$

\medskip
Clearly, if $\ell = 0,$ then $S = T$ and  $f_0$ satisfies properties 1-4.

Assume now that $\ell > 0$ and that the claim is true for $\ell -1,$ i.e., there is a function $f_{\ell-1}$ associated to 
$S \Rightarrow_{\ell-1}^N S_{\ell-1}$ that satisfies properties 1-4.

\noindent
{\em Case 1.} If the duplication $S_{\ell-1} \Rightarrow^N T$ is of type $2$ then it is easy to see that 
 $f_{\ell} = f_{\ell-1}$ satisfies the claim.
 
 \noindent
{\em Case 2.} If the duplication $S_{\ell-1} \Rightarrow^N T$ is of type $1,$ let $j^*$ be the first run of $RLE(S_{\ell-1})$ 
involved in this normal duplication. Then, the function 
$$f_{\ell}(i) = \begin{cases} f_{\ell-1}(i) & f_{\ell-1}(i) \leq j^*\\
f_{\ell-1}(i) + q & f_{\ell-1}(i) > j^*
\end{cases}$$
satisfies properties 1-3.

For property 4., in the case $q = 5$, we split the analysis into two sub-cases, according to whether the gap between 
$f_{\ell}(i)$ and $f_{\ell}(i+1)$ is created by the last duplication or not. Let $i^*$ be such that $f_{\ell}(i^*) = j^*.$

{\em Sub-case (i).} $f_{\ell}(i+1) \neq f_{\ell}(i) + 1$ and also $f_{\ell-1}(i+1) \neq f_{\ell-1}(i) + 1.$ 
Then there is a substring $M$ in $S_{\ell-1}$ starting in a position $f_{\ell-1}(i) \leq p \leq 
f_{\ell-1}(i+1)$ satisfying the property. Since this substring involves $q-1$ runs of $S_{\ell-1}$ it is not eliminated by the duplication 
$S_{\ell-1} \Rightarrow^N T$ which can only shift it to the right (by exactly $q$ positions,  which, by definition of $f_{\ell}$, only happens if  
$f_{\ell}(i+1) = f_{\ell-1}(i+1)+q$). Hence $M$ is also a substring of $T$ and it appears in a position $p'$ between $f_{\ell}(i)$ and $f_{\ell}(i+1).$

{\em Sub-case (ii).} $i = i^*$ and $f_{\ell-1}(i+1) = f_{\ell-1}(i) + 1.$ By definition, we have $f_{\ell}(i+1) = f_{\ell}(i) + q+1 \neq f_{\ell}(i)+1.$
In this case, it is not hard to verify that with $p = f_{\ell}(i^*) + 3$ the property is also satisfied.

\bigskip
\noindent
We now show the ({\bf if}) direction. 

Assume there exists a function $f$ satisfying properties $1-4.$ 
Because of properties 1-3, we have that there exists $\ell_1 \leq \ell_2,$ and $0 \leq  r < q$ such that 
$n = \ell_1 \times q + r$ and $m = \ell_2 \times q + r.$  Without loss of generality, we can assume $r = 0.$  
Let $\Delta = \Delta(S, T) = |RLE(T)|/n - |RLE(S)|/n = \ell_2 - \ell_1.$  We argue by induction on $\Delta$. 

For $\Delta = 0$ it must be $f(i) = i,$ for each $i$, hence, there is a sequence of  normal duplications of type 2 that satisfies the claim. 

Assume now $\Delta > 0$ and that (induction hypothesis)  
the claim holds true for any pair of strings $S'$ and $T'$ for which 
$RLE(S') = \ell_1' \times q, RLE(T') = \ell_2' \times q$ and $\Delta' = \Delta(S', T') = \ell_2' - \ell_1' \leq \Delta-1.$
 
Let $i$ be the smallest integer such that $f(i) = j$ and $f(i+1) = j' > j+1.$---Such an $i$ must exist, otherwise, we are in the case 
$\Delta = 0.$---Hence,  the fact that the strings are purely alternating, together with $f$ satisfying property 2., implies that 
$f(i+1) = (j+1) + u \times q$ for some $u \geq 1.$

\medskip 

Consider $\hat{S}$ obtained from $S$ by 
a normal duplication that starts on the $i$th run of $S$. Then 
$$\hat{S} = \dots, 
\underset{i}{s_i^{l_i}} \, \underset{i+1}{s_{i+1}^{l_{i+1}}} \; \dots, 
\underset{i+q-1}{s_{i+q-1}^{1}} \;  \underset{i+q}{s_{i+q}^1} \; 
\underset{i+q+1}{s_{i+1}^{l_{i+1}}}   \; 
\underset{i+q+2}{s_{i+2}^{l_{i+2}}} \;  \dots  
\underset{i+2q-1}{s_{i+q-1}^{l_{i+q-1}}} \; \dots,$$
where the number underneath each component specifies its run index.

Note that, by property 2., because of $f(i) = j,$ 
we have that $l_i \leq l'_j, \, l_{i+1} \leq l'_{j+1}, \,
l_{i+2} \leq l'_{j+2}, \, \dots, l_{i+q-2} \leq l'_{j+q-2}.$ Hence, letting $\hat{l}_k$ be the 
length of the $k$th run of $\hat{S},$ we also have 
$\hat{l_{i}} \leq l'_j, \, \hat{l}_{i+1} \leq l'_{j+1}, \,
\hat{l}_{i+2} \leq l'_{j+2}, \, \dots, \hat{l}_{i+q-2} \leq l'_{j+q-2}.$ Moreover we have
$1 = \hat{l}_{i+q-1} \leq l'_{j+q-1},$ and $1 = \hat{l}_{i+q} \leq l'_{j+q}.$

Finally, we have
$1 = \hat{l}_{i+q-1} \leq l'_{f(i+1)-2},$ and $1 = \hat{l}_{i+q} \leq l'_{f(i+1)-2}.$ And, by the hypothesis on $f(i+1)$ it holds that
\begin{equation} \label{properties-2}
\hat{l}_{i+q+u} = l_{i+u} \leq l'_{f(i+1)+u}, \mbox{ for } u= 1, \dots, q-1. 
\end{equation} 

Therefore, for $q < 5$, we have that the function 
$$\hat{f}(k) = \begin{cases} f(k) & k \leq i\\
f(i) + u & 1 \leq u \leq \min\{2, q-2\}, k = i+u,\\  
f(i+1)-2 &   k = i+q-1 \\
f(i+1)-1 &   k = i+q\\
f(k-q) & k \geq i+q+1,
\end{cases}
$$
satisfies properties $1-3$ for $\hat{S}$ and $T.$ 
Since $\Delta(\hat{S}, T) < \Delta, $ by induction hypothesis there is a sequence of normal duplications $\hat{S} \Rightarrow_* T$, 
that combined with the above normal duplication $S \Rightarrow^N \hat{S}$ proves the claim.

\bigskip

\noindent
Let us now consider the case $q = 5$.
Let $p$ be the index satisfying 4. Clearly, it holds that  $p = j+3 + v \times q,$ for some $0 \leq v \leq u-1.$
By property 4, we have  
\begin{equation} \label{property4-p}
\hat{l}_{i+3+u} \leq l_{i+3+u} \leq l'_{p+u} \mbox{ for } u = 0, 1, \dots, q-3 \mbox{ and }\hat{l}_{i+q+1} = l_{i+1} \leq l'_{p+q-2}.
\end{equation}

We distinguish three subcases accoording to whether: (i) $p = j+3;$ (ii) $p = j' - 3;$ (iii) $j+3 < p < j'-3.$
In each case, our argument will be to show that there is a sequence of normal duplication that can convert 
$S$ into a string $\hat{S}$ with more groups of runs---hence, such that $\Delta(\hat{S}, T) < \Delta$---for which there exists a function $\hat{f}$ satisfying properties 
1.-4. (with respect to $\hat{S}$ and $T$ and such that 
$\hat{f}(k) = k$ for $k \leq i+1$. This will imply, by induction hypothesis that $\hat{S} \Rightarrow^N_* T,$ hence proving our claim.

For the sake of conciseness, in the following argument we will denote by $[a]^{A}$  a normal duplication on a string $A$ that acts on the runs of $A$ from the 
$a$th one to the $(a+q-1)$th one.   
Recall that $i$ is the smallest integer such that $f(i) = j$ and $f(i+1) = j' > j+1.$

\noindent
{\em Subcase 1.} $p = j+3.$  Consider the duplication $[i]^S$  and indicate with $\hat{S}^{(1)}$ the resulting string. 

If $f(i+1) = i+q+1$ we have that  $\hat{S}^{(1)}$ satisfies our desiderata, i.e., indicating with $\hat{l}^{(1)}_j,$ the length of the 
$j$th run of $\hat{S}^{(1)}$ we have $\hat{l}^{(1)}_k \leq l'_k$ for each $k =1, \dots f(i+1).$ Hence, 
the function  
\begin{equation} \label{f:shat}
\hat{f}(k) = \begin{cases} k & k \leq f(i+1) \\ f(k-i-1) & k > i+1. \end{cases}
\end{equation}  satisfies 1.-4. w.r.t. $\hat{S}^{(1)}$ and $T$
and the desired result follows by induction. 

If $f(i+1) = i+1 + u \times q$ for some $u > 1,$ we use first a duplication $[i+3]^{\hat{S}^{(1)}}$ and  then repeatedly for $w = 3, 4, \dots, u$
(always on the newly obtained  string $\hat{S}^{(w)}$)
the duplication $[i+q+1]^{\hat{S}^{(w)}}.$ The resulting string $\hat{S}^{(w)}$ has a long sequence of runs of size $1$ and at each iteration 
the sequence of runs $s_{i+1+q}^{l_{i+1}} \, s_{i+2+q}^{l_{i+2}} \, \dots, s_{i+2q}^{l_{i+q}}$ gets shifted to the right until 
the run  
$s_{i+1+q}^{l_{i+1}}$ becomes the $f(i+1)$th run in $\hat{S}^{(u)}$ and we have that the function in (\ref{f:shat}) satisfies 
1.-4. w.r.t. $\hat{S}^{(u)}$ and $T$
and the desired result follows by induction. The process can be visualized as follows (where we are assuming $s_i = 0$):

    \begin{align*}
    S = {}& \dots \underline{0^{l_i} \ 1^{l_{i+1}} \ 2^{l_{i+2}} \ 3^{l_{i+3}} \ 4^{l_{i+4}}} \dots\\
\Rightarrow {}& \dots 0^{l_i} \ 1^{l_{i+1}} \ 2^{l_{i+2}} \ \underbrace{3^{l_{i+3}} \ 4^1 \ 0^1 \ 1^{l_{i+1}}}_{M} \ 2^{l_{i+2}} \ 3^{l_{i+3}} \ 4^{l_{i+4}}\dots  =  \hat{S}^{(1)}\\
     \hat{S}^{(1)} {}& \dots0^{l_i} \ 1^{l_{i+1}} \ 2^{l_{i+2}} \ \underline{3^{l_{i+3}} \ 4^1 \ 0^1 \ 1^{l_{i+1}} \ 2^{l_{i+2}}} \ 3^{l_{i+3}} \ 4^{l_{i+4}} \dots \\
    \Rightarrow {}& \dots 0^{l_i} \ 1^{l_{i+1}} \ 2^{l_{i+2}} \ \underbrace{3^{l_{i+3}} \ 4^1 \ 0^1 \ 1^{l_{i+1}}}_{M} \ 2^1 \ 3^1 \ 4^1 \ 0^1 \ 1^{l_{i+1}} \ 2^{l_{i+2}} \ 3^{l_{i+3}} \ 4^{l_{i+4}} \dots  = \hat{S}^{(2)},
    \end{align*}

where $M$ denotes a sequence of runs that satisfies the property 4.

\medskip
\noindent
{\em Subcase 2.} $p = j'-3.$  Again we perform first the duplication $[i]^S$  and indicate with $\hat{S}^{(1)}$ the resulting string. We then perform 
duplication $[i+2]^{\hat{S}^{(1)}}.$ The process can be visualized as follows (where we are assuming $s_i = 0$):

    \begin{align*}
   \hat{S}^{(1)} {}& \dots 0^{l_i} \ 1^{l_{i+1}} \ \underline{2^{l_{i+2}} \ 3^{l_{i+3}} \ 4^1 \ 0^1 \ 1^{l_{i+1}}} \ 2^{l_{i+2}} \ 3^{l_{i+3}} \ 4^{l_{i+4}} \dots \\
    \Rightarrow^N {}& \dots 0^{l_i} \ 1^{l_{i+1}} \ 2^{l_{i+2}} \ 3^{l_{i+3}} \ 4^1 \ 0^1 \ 1^1 \ 2^1 \ \underbrace{3^{l_{i+3}} \ 4^1 \ 0^1 \ 1^{l_{i+1}}}_{M} \ 2^{l_{i+2}} \ 3^{l_{i+3}} \ 4^{l_{i+4}} \dots  = \hat{S}^{(2)},
    \end{align*}
    where $M$ denotes a sequence of runs that satisfies the property 4. in $\hat{S}^{(2)}.$
    
    Like before, we continue with $(j'-j)/5-2$ normal duplications 
    producing the sequence of strings $\hat{S}^{(2)} \Rightarrow^N \hat{S}^{(3)} \Rightarrow^N \dots \Rightarrow^N \hat{S}^{(u)}$ where for each 
    $w = 2, \dots, u-1,$ the duplication is $[i+3]^{\hat{S}^{(w)}},$ i.e., 
    on the runs $3^{l_{i+3}} 4^1 0^1 1^1 2^1$. As in the previous case, we have that  the function in (\ref{f:shat}) satisfies 
1.-4. w.r.t. the resulting string $\hat{S}^{(u)}$ and $T$
and the desired result follows by induction. 
    
\medskip
\noindent
{\em Subcase 3.} $j+3 < p < j'-3$. 
As before, we perform first the duplication $[i]^S$  and indicate with $\hat{S}^{(1)}$ the resulting string. We then perform 
duplication $[i+3]^{\hat{S}^{(1)}},$ and on the resulting string, denote by  $\hat{S}^{(2)},$ we use duplication $[i+2]^{\hat{S}^{(2)}}.$
The process can be visualized as follows (where we are assuming $s_i = 0$):
    
    \begin{align*}
    \hat{S}^{(1)} {}& \dots0^{l_i} \ 1^{l_{i+1}} \ 2^{l_{i+2}} \ \underline{3^{l_{i+3}} \ 4^1 \ 0^1 \ 1^{l_{i+1}} \ 2^{l_{i+2}}} \ 3^{l_{i+3}} \ 4^{l_{i+4}} \dots \\
    \Rightarrow^N {}& \dots 0^{l_i} \ 1^{l_{i+1}} \ \underline{2^{l_{i+2}} \ \overbrace{3^{l_{i+3}} \ 4^1 \ 0^1 \ 1^{l_{i+1}}}^{M}} \ 2^1 \ 3^1 \ 4^1 \ 0^1 \ 1^{l_{i+1}} \ 2^{l_{i+2}} \ 3^{l_{i+3}} \ 4^{l_{i+4}} \dots  = \hat{S}^{(2)} \\
    \Rightarrow^N {}& \dots 0^{l_i} \ 1^{l_{i+1}} \ 2^{l_{i+2}} \ 3^{l_{i+3}} \ 4^1 \ 0^1 \ 1^1 \ 2^1 \ \underbrace{3^{l_{i+3}} \ 4^1 \ 0^1 \ 1^{l_{i+1}}}_{M} \ 2^1 \ 3^1 \ 4^1 \ 0^1 \ 1^{l_{i+1}} \ 2^{l_{i+2}} \dots \hat{S}^{(3)},
    \end{align*}
    where, in each intermediate string, we have indicated with $M$ a sequence of runs that satisfies the property 4. 
    
    We can now apply $(j'-j)/5-3$ normal duplications, choosing the four runs with length one to the left or the right of $M$. In this way we can position the substring $M$ to match the one in $T$ that is guaranteed by the fourth property and also shift the run $s_{i+1}^{l_{i+1}}$ to become the 
    $f(i+1)$th run of the resulting string $\hat{S}^{(u)}.$ As before, this implies that  the function in (\ref{f:shat}) satisfies 
1.-4. w.r.t. the resulting string $\hat{S}^{(u)}$ and $T$
and the desired result follows by induction.

%
%
%
%
\end{proof}

\section{The proof of Claim \ref{claim:narymodule}} 

\noindent
{\bf Claim \ref{claim:narymodule}.} 
{\em Let $S$ and $T$ be $q$-ary purely alternating strings such that there exists a series of duplications $S \Rightarrow_* T$. Then for each duplication, if we call the duplicated substring $X$, we have that $|RLE(X)| \bmod q \leq 1$.}
\begin{proof}
Assume that in the series of duplications $S \Rightarrow_* T$ there exists a duplication $S' = AXB \Rightarrow AXXB = T'$ with $|RLE(X)| \bmod q > 1$. In this case, the string $XX$ contains in the middle a substring $ab$, in which $a$ and $b$ are not consecutive characters in $S$. This happens because $a$ corresponds to the last character of $X$ and $b$ is the first character of $X$ and $|RLE(X)| \bmod q > 1$, implying that $a \neq (b-1) \bmod q$. For this reason $T'$ can't be purely alternating, but since $T$ is, there must exist some other duplication that eliminates the adjacency $ab$. This is impossible, since no duplication eliminates adjacencies, but can only create new ones.
\end{proof}

\remove{
\noindent
{\bf \large Binary strings ($|\Sigma| = 2$)}
\begin{lemma}
\label{lemma:binarynormal}
Let $S$ and $T$ be binary (purely alternating) strings. If there exists a duplication $S \Rightarrow T$, 
then we can create a series of normal duplications such that $S \Rightarrow^N_* T.$
\end{lemma}
\begin{proof}
Let $S = AXB \Rightarrow AXXB = T$ be the original duplication, which we assume not to be normal.  We show how 
we can have a  string $T \preceq XX$ that matches $XX$ and such that $X \Rightarrow^N_* T.$ The process consists in 
creating the necessary number of pairs of consecutive runs $0^1 \ 1^1$ of size 1.
    \begin{align*}
    X ={}& \underline{0^{l_1} \ 1^{l_2}} \ 0^{l_3} \ 1^{l_4} \ \dots 0^{l_{r-1}} \ 1^{l_{r}} \\
    \Rightarrow^N{}& 0^{l_1} \ \underline{1^{1} \ 0^{1}} \ 1^{l_2} \ 0^{l_3} \ 1^{l_4} \ \dots 0^{l_{r-1}} \ 1^{l_r} \\
    \Rightarrow^N{}& 0^{l_1} \ \underline{1^{1} \ 0^{1}} \ 1^{1} \ 0^{1} \ 1^{1} \ 0^{1} \ 1^{l_2} \ 0^{l_3} \ 1^{l_4} \ \dots 0^{l_{r-1}} \ 1^{l_r} \\
    \Rightarrow^N{}& \vdots \\
    \Rightarrow^N{}& 0^{l_1} \ 1^{1} \ 0^{1} \ 1^{1} \ 0^{1} \  1^{1} \ 0^{1} \ \dots 1^{1} \ 0^{1} \ 1^{l_2} \ 0^{l_3} \ 1^{l_4} \ \dots 0^{l_{r-1}} \ 1^{l_r} = T \preceq XX.
    \end{align*}
Since $T \preceq XX$ implies $T \Rightarrow^N_* XX$ (by type 2 normal duplications), we have 
$X \Rightarrow_*^N T \Rightarrow_*^N XX,$ which complete the proof of the claim.
\end{proof}

\bigskip

\noindent
{\bf \large Ternary strings}
\begin{lemma}
\label{lemma:ternarynormal}
Let $S$ and $T$ be ternary purely alternating strings. If there exists a duplication $S \Rightarrow T$, then we can create a series of normal duplications $S \Rightarrow^N \dots \Rightarrow^N T.$ 
\end{lemma}

\begin{proof}
Let $S = AXB \Rightarrow AXXB = T$ be the original duplication. We create a string that matches $XX$ starting from $X$ through normal duplications, depending on how many runs are contained in $X$. By Claim \ref{claim:narymodule} the only possible cases are $|RLE(X)| \bmod 3 \in \{0,1\}$
\begin{itemize}
    \item If $|RLE(X)| = 1$ it means that the only effect of the original duplication is to extend one of the runs of $S$. For this reason we know that $S$ already matches $T$.
    \item If $|RLE(X)| \bmod 3 = 0$, we suppose that the string $X$ starts with a $0$ (rotate the characters if it starts with any other symbol), so the run-length encoding of $X$ is in the form $RLE(X) = 0^{l_1} 1^{l_2} 2^{l_3} 0^{l_4}\dots 1^{l_{n-1}} 2^{l_n}$. If $|RLE(X)|$ is equal to $3$ we can just execute a normal duplication on the same runs. If $|RLE(X)| = 6$, we can execute the following sequence of normal duplications and observe that the resulting string matches $XX$:

    \begin{align*}
    X ={}& \underline{0^{l_1} \ 1^{l_2} \ 2^{l_3}} \ 0^{l_4} \ 1^{l_5} \ 2^{l_6} \\
    \Rightarrow^N{}& 0^{l_1} \ \underline{1^{l_2} \ 2^{1} \ 0^{1}} \ 1^{l_2} \ 2^{l_3} \ 0^{l_4} \ 1^{l_5} \ 2^{l_6} \\
    \Rightarrow^N{}& 0^{l_1} \ 1^{l_2} \ 2^{1} \ 0^{1} \ 1^{1} \ 2^{1} \ 0^{1} \ 1^{l_2} \ 2^{l_3} \ 0^{l_4} \ 1^{l_5} \ 2^{l_6} \\
    \Rightarrow^N{}& 0^{l_1} \ 1^{l_2} \ 2^{l_3} \ 0^{l_4} \ 1^{l_5} \ 2^{l_6} \ 0^{l_1} \ 1^{l_2} \ 2^{l_3} \ 0^{l_4} \ 1^{l_5} \ 2^{l_6} \preceq XX,
    \end{align*}   
    since the last string matches $XX$ there is a series of normal duplications (of type 2) that converts it into $XX.$
    
    We can see that using a procedure similar to this one, starting from every string $X$ we can produce another string through normal duplications that matches $XX$. We need to execute the duplications explained before and continue with $|RLE(X)|/3-2$ normal duplications on the runs $1^{l_2} 2^{1} 0^{1}$. In this way we create a sequence of runs with length $1$, pushing the original runs of $X$ to the right to be matched with their corresponding copies in $XX$.
    
    \item If $|RLE(X)| \bmod 3 = 1$ and we are not in the first case, then the effect of this duplication is the same as the effect of a duplication with $|RLE(X)| \bmod 3 = 0$ followed by the extension of a run. We can therefore treat this case with the same technique we used for the previous one.
\end{itemize}
\end{proof}

\bigskip

\noindent
{\bf \large Quaternary strings}

\begin{lemma}
\label{lemma:quaternarynormal}
Let $S$ and $T$ be quaternary purely alternating strings. If there exists a duplication $S \Rightarrow T$, then we can create a series of normal duplications $S \Rightarrow \dots \Rightarrow T'$ such that $T'$ matches $T$.
\end{lemma}

\begin{proof}
Let $S = AXB \Rightarrow AXXB = T$ be the original duplication. Like before, we create a string that matches $XX$ starting from $X$ through normal duplications, depending on how many runs are contained in $X$. By Claim \ref{claim:narymodule} the only possible cases are $|RLE(X)| \bmod 4 \in \{0,1\}$
\begin{itemize}
    \item If $|RLE(X)| \bmod 4 = 1$ it means that the only effect of the original duplication is to extend one of the runs of $S$, or that the effect can be simulated with the technique used for the case $|RLE(X)| \bmod 4 = 0$.
    \item If $|RLE(X)| \bmod 4 = 0$, suppose that the run-length encoding of $X$ is in the form $RLE(X) = 0^{l_1} 1^{l_2} 2^{l_3} 3^{l_4} 0^{l_5} \dots 3^{l_{n-1}} 4^{l_n}$. If $|RLE(X)|$ is equal to $4$ we can just execute a normal duplication on the same runs. If $|RLE(X)| = 8$, we can execute the following sequence of normal duplications and observe how the resulting string matches $XX$:
    
    \begin{align*}
    X ={}& \underline{0^{l_1} \ 1^{l_2} \ 2^{l_3} \ 3^{l_4}} \ 0^{l_5} \ 1^{l_6} \ 2^{l_7} \ 3^{l_8} \\
    \Rightarrow{}& 0^{l_1} \ 1^{l_2} \ \underline{2^{l_3} \ 3^{1} \ 0^{1} \ 1^{l_2}} \ 2^{l_3} \ 3^{l_4} \ 0^{l_5} \ 1^{l_6} \ 2^{l_7} \ 3^{l_8} \\
    \Rightarrow{}& 0^{l_1} \ 1^{l_2} \ 2^{l_3} \ 3^{1} \ 0^{1} \ 1^{1} \ 2^{1} \ 3^{1} \ 0^{1} \ 1^{l_2} \ 2^{l_3} \ 3^{l_4} \ 0^{l_5} \ 1^{l_6} \ 2^{l_7} \ 3^{l_8} \\
    XX ={}& 0^{l_1} \ 1^{l_2} \ 2^{l_3} \ 3^{l_4} \ 0^{l_5} \ 1^{l_6} \ 2^{l_7} \ 3^{l_8} \ 0^{l_1} \ 1^{l_2} \ 2^{l_3} \ 3^{l_4} \ 0^{l_5} \ 1^{l_6} \ 2^{l_7} \ 3^{l_8}
    \end{align*}
    
    Like before, we can see that using a procedure similar to this one, starting from every string $X$ we can produce another string through normal duplications that matches $XX$. We need to execute the duplications explained before and continue with $|RLE(X)|/4-2$ normal duplications on the runs $0^{1} 1^{1} 2^{1} 3^{1}$. In this way we create a sequence of runs with length $1$, pushing the original runs of $X$ to the right to be matched with their corresponding copies in $XX$.
\end{itemize}
\end{proof}

}

\section{The Algorithm in the proof of Theorem \ref{theo:main-purely}}
Here we show the pseudocode of an algorithm that can be used to check the existence of a mapping from 
$RLE(S)$ to $RLE(T)$ that guarantees that $S \Rightarrow^N_* T.$  
\begin{algorithm}[h!]
\caption{}
\label{alg:narylinear}
\textbf{Input:} Two $q$-ary purely alternating strings $S$ and $T$  ($q \in \{2, 3, 4, 5\}$)\\
\textbf{Output:} $yes$ if and only if there exists a function $f$ satisfying 1-4 in Lemma \ref{lemma:function}.\\
Compute $RLE(S) = (s_1, l_1),  \dots, (s_n, l_n)$ and $RLE(T) = (t_1, l'_1), \dots, (t_m, l'_m)$\;
\eIf{ $s_1 \neq t_1$ or $s_n \neq t_m$ or there is $u \in [q-1]$ s.t.\  $l_u>l'_u$ or $l_{n-u+1}>l'_{m-u+1}$}
    {\Return $no$\;}
    {$f(1) = 1, \, f(n-q+2) = m-q+2;$} 
$i=2$, $j=2$; \\
\While{$i \leq n-q+2$ and $j \leq m-q+2$}
  {\While{$l_{i+q-2} < l'_{j+q-2}$ and $i \leq n-q+2$}
  {$f(i) = j; \, i = i+1; \, j = j+1;$}
  \If{$i \leq n-q+2$}
  {$p = f(i-1)+3$\;
  \While{there is $u \in \{0,1,\dots, q-3\}$ s.t.\  $l_{(i-1)+3+u}>l'_{p+u}$ or $l_{i} > l'_{p+q-2}$}
  {$p = p+q$\;
   \If{$p > m-q+3$}
   {\Return $no$\;}
   } 
   $j = p-2+q$\;
   \While{ there is $u \in \{0,1,\dots, q-2\}$ s.t.\  $l_{i+u}>l'_{j+u}$ and $j < m-q+2$}
   {$j=j+q$\;
   }
   \eIf{$j \leq m-q+2$}       
   {$f(i) = j; \, i = i+1; j = j+1$}
   {\Return $no$\;}
  }
  }
\eIf{$i=n-q+1$ and $j =m-q+1$}               
{\Return $yes$\;}
{\Return $no$\;}
\end{algorithm}

\end{document}